\newcommand{\SUN}[1]{{\textcolor{black}{ #1}}}
    \providecommand\BibTeX{{%
        \normalfont B\kern-0.5em{\scshape i\kern-0.25em b}\kern-0.8em\TeX}}}
\renewcommand\footnotetextcopyrightpermission[1]{}
\begin{document}
\fancyhead{}
\title{PathEnum: Towards Real-Time Hop-Constrained s-t Path Enumeration (Complete Version)}



\author{Shixuan Sun}
\affiliation{%
  \institution{National University of Singapore}
  \country{Singapore}}
\email{sunsx@comp.nus.edu.sg}

\author{Yuhang Chen}
\affiliation{%
  \institution{National University of Singapore}
  \country{Singapore}}
\email{yuhangc@comp.nus.edu.sg}

\author{Bingsheng He}
\affiliation{%
  \institution{National University of Singapore}
  \country{Singapore}}
\email{hebs@comp.nus.edu.sg}

\author{Bryan Hooi}
\affiliation{%
  \institution{National University of Singapore}
  \country{Singapore}}
\email{dcsbhk@nus.edu.sg}


\begin{abstract}
We study the hop-constrained \emph{s-t} path enumeration (\textbf{HcPE}) problem, which takes a graph $G$, two distinct vertices $s,t$ and a hop constraint $k$ as input, and outputs all paths from $s$ to $t$ whose length is at most $k$. The state-of-the-art algorithms suffer from severe performance issues caused by the costly pruning operations during enumeration for the workloads with the large search space. Consequently, these algorithms hardly meet the real-time constraints of many online applications. In this paper, we propose PathEnum, an efficient index-based algorithm towards real-time HcPE. For an input query, PathEnum first builds a light-weight index aiming to reduce the number of edges involved in the enumeration, and develops efficient index-based approaches for enumeration, one based on depth-first search and the other based on joins. We further develop a query optimizer based on a join-based cost model to optimize the search order. We conduct experiments with 15 real-world graphs. Our experiment results show that PathEnum outperforms the state-of-the-art approaches by orders of magnitude in terms of the query time, throughput and response time.
\end{abstract}




\begin{CCSXML}
    <ccs2012>
    <concept>
    <concept_id>10003752.10003809.10003635</concept_id>
    <concept_desc>Theory of computation~Graph algorithms analysis</concept_desc>
    <concept_significance>500</concept_significance>
    </concept>
    </ccs2012>
\end{CCSXML}



\maketitle

\section{Introduction} \label{sec:introduction}


Because paths are widely used to measure the relationship between vertices, \textbf{HcPE} serves as an important building brick in a number of emerging
real-world applications. \SUN{Furthermore, HcPE can be easily extended with variant constraints to capture complexities of these applications.}
For example:

\emph{(1) Detecting Money Laundering \cite{li2020flowscope,force2013fatf,jedrzejek2009graph}.} Money laundering is the illegal process of injecting "dirty" money 
into the legitimate financial system, typically by using a bank's services to move illegal money from source accounts into 
destination accounts through a series of transactions. We can construct a graph by representing bank accounts as vertices and transactions
as edges. \SUN{The report \cite{force2013fatf} lists a number of known "red flag indicators" which are regarded as indicative of money laundering.
For example, the use of multiple bank accounts as well as that of intermediaries without good reasons is a red flag, which is also used
in \cite{li2020flowscope,jedrzejek2009graph}. They observe many instances of money laundering along short flow paths (e.g. two-hop),
noting that longer paths can increase costs for the fraudsters. As such, this flag can be detected by enumerating hop-constrained paths between two target accounts.
Moreover, banks or regulatory bodies may designate certain factors as risky (e.g., capital from foreign companies). Since a single risk
factor may not be conclusive on its own, we want to find transactions exhibiting a certain level of total risk. In that case, we associate each
edge with a weight representing the risk factor and extend HcPE by requiring that the accumulative value of weights on edges in a path is above a threshold.}

\emph{(2) E-Commerce Merchant Fraud Detection \cite{qiu2018real}.} The activities of online shopping can be modeled as a graph in
which vertices are individual users (e.g., sellers and buyers) and edges are online transactions (e.g., online payment and shipment
of goods). In order to increase the popularity of products, some sellers create fake transactions. In brief, the entire process generates cycles in the graph. Therefore,
the cycles triggered by new edges are strong indications of potential fraud. In the applications, \cite{qiu2018real} enumerates
the cycles within a small hop constraint (e.g, $k = 6$) because a large hop constraint can result in a huge number of
results causing massive false alarms. This also suggests the usage of hop constraints. We can issue a query $q(v', v, k - 1)$ to find paths from $v'$ to $v$
to enumerate cycles triggered by the new edge $e(v, v')$. \SUN{Moreover, we may also impose constraints based on attributes of edges (e.g., monitor
fake transactions with particular types of user activities \cite{qiu2018real}). Then, we can express the constraints as predicates on edges, and
extend HcPE by requiring that each edge in a path satisfies conditions in predicates.}

\emph{(3) Knowledge Graph Completion \cite{wang2020entity}.} A variety of applications such as recommendation systems,
search and question answering depend on knowledge graphs (KGs). Because KGs are generally incomplete,
the problem of knowledge graph completion, which aims to predict missing relations in KGs, is very important.
In particular, paths between two entities indicate the relationship between them, and the knowledge graph completion methods
generally use these paths to train models to predict the relationship. Previous work has observed that entities connected by
many short paths have a higher tendency to be related, e.g., \cite{shiralkar2017finding,shi2016discriminative},
suggesting the utility of hop constraints in this setting. \SUN{Furthermore, real-world applications may require that the paths satisfy the constraints
on the sequence of actions (e.g., the sequence "write->mention"). In that case, each edge label represents an action, and
we extend HcPE by requiring that the label sequence of each path meets the constraint on the sequence of actions.}

Due to its importance, the HcPE problem has recently received significant interest.
Existing approaches \cite{peng2019towards,grossi2018efficient,rizzi2014efficiently} focus on designing the
\emph{polynomial delay} algorithms such that the time between finding two successive results is bounded by a polynomial
function of the input size in the worst case \cite{johnson1988generating}. They adopt the backtracking method to recursively
enumerate paths from the source to the target. To achieve polynomial delay, they introduce
pruning rules at each recursive call to reduce the invalid search space, for example, performing a single source
shortest path query from the target to update the distance between each vertex and the target \cite{rizzi2014efficiently}.
Benefiting from the pruning strategies, the delay per output is within $O(k \times |E(G)|)$ time where $k$ is the length
constraint and $|E(G)|$ is the number of edges in $G$.

Despite their theoretical guarantees, we find that these algorithms suffer from serious performance issues in practice.
For the workloads with the large search space, the pruning at each step is expensive,
to the extent that the pruning overhead can offset its benefits of reducing the search space. This fails to satisfy the requirement from many
applications, especially the online scenarios \cite{qiu2018real,kim2018turboflux} with rigid real-time requirement on query
time.


In this paper, we propose \textbf{PathEnum},
an efficient approach to the HcPE problem. In contrast to existing algorithms \cite{peng2019towards,grossi2018efficient,rizzi2014efficiently}
that conduct pruning operations during the enumeration, the key design principle of PathEnum is to develop a
light-weight index for the input query so that the index can be used to keep each step in the enumeration simple and efficient.
 
We first design a join-based model to abstract the HcPE problem, and analyze two key performance factors in the model,
which are the number of edges involved in the enumeration and the order of enumerating results. Next, we develop an index-based approach to evaluate the query. Specifically, given a graph $G$ and a query $q(s, t, k)$,
we first build a light-weight index $\mathcal{I}$ at runtime, which is constructed based on the \emph{distance} (i.e., the length of the
shortest path) between each vertex to $s$ and $t$. The index is used to reduce the number of edges accessed during the enumeration.
Given a vertex $v$ and an integer $b$, we can quickly retrieve the neighbors $v'$ of $v$ such that the distance
from $v'$ to $t$ (or from $s$ to $v'$) is bounded by $b$ from $\mathcal{I}$. The time complexity of
constructing $\mathcal{I}$ is $O(|E(G)| + |V(G)|)$.

We further develop a cost-based query optimizer to optimize the order of enumerating results. Specifically, we develop a depth-first
search based method and a join-based method to enumerate the results based on $\mathcal{I}$. The DFS-based method recursively
extends the partial result by one vertex at a step to enumerate all results, whereas the join-based method first cuts the query
into two sub-queries, and then evaluates them with the DFS-based method, respectively, and finally join the intermediate results
of the two sub-queries. These two approaches can generate different number of partial results during the enumeration.
The query optimizer selects the method with a lower cost to evaluate the query.

We conduct extensive experiments with 15 real-world datasets. The experiment results show that 
PathEnum provides speedups of 1.9 to 240.7 times over the state-of-the-art method \cite{peng2019towards} in terms of query time and
14.2 to 358.5 times in terms of response time. In summary, we make the following contributions in this paper.

\begin{itemize}[noitemsep,topsep=0pt]
    \item We study the hop-constrained \emph{s-t} path enumeration problem, and propose \textbf{PathEnum},
    an efficient solution towards practical and real-time enumeration in many online applications.
    \item Different from existing backtracking solutions, PathEnum is an efficient index-based approach for HcPE. For each query, we first develop an efficient light-weight indexing method to prune the vertices involved in the subsequent enumeration. During the enumeration process, we design two index-based approaches, and an effective join order optimization method to reduce the search space.
    \item We conduct extensive experiments with a variety of workloads, and demonstrate PathEnum significantly outperforms state-of-the-art algorithms.
\end{itemize}

Supplement results are presented in the appendix. Our source code is publicly available at GitHub \cite{path_enum_source_code}.

\textbf{Paper Organization.} We introduce the preliminaries and related work in Section 2. In Section 3, we formulate the HcPE problem in a join model and give an overview of PathEnum. We design a light-weight index, and develop index based enumeration approaches in Sections 4 and 5, respectively. We optimize the search order in Section 6. We present the experiment results in Section 7 and conclude in Section 8.

\section{Background and Related Work} \label{sec:background}


\subsection{Preliminaries} \label{sec:preliminaries}

$G = (V, E)$ denotes a directed graph where $V$ is a set of vertices
and $E \subseteq V \times V$ is a set of edges. $e(v, v')$ denotes a directed edge from the
vertex $v$ to the vertex $v'$. $N(v) = \{v'|e(v,v')\in E\}$ represents
the outgoing neighbors of $v$, and $d(v)$ denotes the out degree of $v$, i.e., $d(v) = |N(v)|$.
By default, the neighbors of $v$ refer to the outgoing neighbors. $G ^ r$ represents the graph
obtained by reversing the direction of each edge in $G$. Given two vertices $v$ and $v'$,
the \emph{distance} from $v$ to $v'$, denoted by $S(v, v'|G)$, is the length of the shortest path
from $v$ to $v'$ in $G$. $G - \{v\}$ represents the graph that removes $v$ as well as edges connecting with $v$ from $G$.

A \emph{walk} $W$ is a sequence of vertices $(v_0, v_1, ..., v_{l})$ such that
$\forall 1 \leqslant i \leqslant l, e(v_{i - 1}, v_i) \in E$.  $|W|$ denotes the number of vertices in $W$,
while $L(W)$ represents the number of edges in $W$. Therefore, $L(W) = |W| - 1$ when $W$ is not empty.
$W[i]$ denotes the $i$th vertex in $W$ where $0 \leqslant i \leqslant |W| - 1$.
Given two distinct vertices $s$ and $t$, we define a walk from $s$ to $t$ in Definition \ref{def:walk_st}.
$\mathcal{W}(s, t, k, G)$ represents all walks $W$ from $s$ to $t$ in $G$ that satisfy $L(W) \leqslant k$.
A \emph{path} $P$ is a walk in which all vertices are distinct. Then, a path from $s$ to $t$ is a walk from $s$ to $t$ in which all vertices are distinct.
$\mathcal{P}(s, t, k, G)$ denotes all paths $P$ from $s$ to $t$ such that $L(P) \leqslant k$. Apparently, given $P \in \mathcal{P}(s, t, k, G)$, $P$ belongs to
$\mathcal{W}(s, t, k, G)$. \SUN{Table \ref{tb:notation} summarizes notations frequently
used in this paper.}

\begin{definition} \label{def:walk_st}
    A walk from $s$ to $t$ is a walk $W$ such that (1) $W[0] = s \wedge W[|W| - 1] = t$; and (2) $\forall 0 < i < |W| - 1$, $W[i] \notin \{s, t\}$.
\end{definition}

\begin{table}[t]
	\small
	\centering
	\setlength{\abovecaptionskip}{0pt}
	\setlength{\belowcaptionskip}{0pt}
	\caption{\SUN{A summary of notations frequently used.}}
	\label{tb:notation}
\begin{tabular}{l|l}
	\hline
	\textbf{Notations}        & \textbf{Descriptions}                                                                                                               \\ \hline
	$s,t$ and $k$             & source, target and length constraint                                                                                                \\ \hline
	$G$ and $q(s,t,k)$        & graph and HcPE query                                                                                                                \\ \hline
	$Q$ and $R$               & join query and relation                                                                                                             \\ \hline
	$V(G)$ and $E(G)$         & vertex and edge sets of $G$                                                                                                         \\ \hline
	$e(v, v')$                & edge between $v$ and $v'$                                                                                                           \\ \hline
	$d(v)$ and $N(v)$         & degree and neighbors of $v$                                                                                                         \\ \hline
	$P, W$ and $M$            & path, walk and partial result                                                                                                       \\ \hline
	$L(P), L(W), L(M)$   & number of edges in $P, W$ and $M$                                                                                                   \\ \hline
	$\mathcal{P}(s, t, k, G)$ & paths $P$ from $s$ to $t$ with $L(P) \leqslant k$                                                                                   \\ \hline
	$\mathcal{W}(s,t,k,G)$    & walks $W$ from $s$ to $t$ with $L(W) \leqslant k$                                                                                    \\ \hline
	$\delta_P$ and $\delta_W$ & $|\mathcal{P}(s, t, k, G)|$ and $|\mathcal{W}(s, t, k, G)|$                                                                         \\ \hline
	$S(v, v'|G)$              & distance between $v$ and $v'$ in $G$                                                                                                \\ \hline
	$\mathcal{I}$             & the light-weight index                                                                                                              \\ \hline
	$\mathcal{I}(i)$          & \begin{tabular}[c]{@{}l@{}}vertices $v$ satisfying $S(s, v|G-\{t\})\leqslant i$\\ and $S(v, t|G-\{s\})\leqslant k - i$\end{tabular} \\ \hline
	$\mathcal{I}_t(v,b)$      & neighbors $v'$ of $v$ satisfying $S(v', t|G-\{s\})\leqslant b$                                                                      \\ \hline
\end{tabular}
\end{table}

\textbf{Problem Statement.} Given $G = (V, E)$, two distinct vertices $s,t$ and a hop constraint $k$,
the \emph{hop-constrained s-t path enumeration} (HcPE) problem aims to find all paths in $\mathcal{P}(s, t, k, G)$.
The query is denoted by $q(s, t, k)$. We assume that $k \geqslant 2$ in this paper.

\subsection{State-of-the-art Approaches}

Algorithm \ref{algo:basic_method} illustrates a generic depth-first search based framework
to find $\mathcal{P}(s, t, k, G)$. It adopts the backtracking strategy. $M$ stores a sequence of vertices, which initially
contains $s$ (Line 1). Line 5 emits $M$ when the last vertex of $M$ is $t$. Otherwise, Lines 6-8 loop over $N(v)$ to extend $M$.
Particularly, $B(v')$ stores the distance from $v'$ to $t$. Before the enumeration, we can initialize it
by performing a breadth-first search from $t$ along $G ^ r$. Line 7 checks (1) whether $v'$ belongs to $M$; and (2) whether
we can extend $M$ by adding $v'$ to generate a path satisfying the hop constraint. If $v'$ passes the check, then we add
$v'$ to $M$ and continue the search. Otherwise, we skip $v'$. Therefore, the \emph{Search} procedure can be viewed as performing
a depth-first search in a search tree where each node is a partial result $M$ and each edge is the action of adding a vertex to $M$. 

Existing approaches \cite{peng2019towards,grossi2018efficient,rizzi2014efficiently} adopt the same backtracking strategy
as Algorithm \ref{algo:basic_method}, but introduce different pruning techniques to achieve polynomial
delay. They update $B(v)$ for a vertex during the enumeration because a path contains no duplicate vertices and
the update of $M$ can break the shortest path from $v$ to $t$. Peng et al. \cite{peng2019towards}
designed a \emph{barrier}-based method, which dynamically maintains the distance from each vertex to $t$. Initially,
they set the barrier for each $v \in V(G)$ as $S(v, t|G)$. During the enumeration, if they find that a sub-tree rooted
at a node in the search tree contains no result, then they will increase the barrier to
avoid falling into the same sub-tree again. T-DFS \cite{rizzi2014efficiently} and T-DFS2 \cite{grossi2018efficient}
are two theoretical works. They achieve polynomial delay by ensuring that each search branch in the search tree leads to a result.
For example, before extending $M$ by adding $v'$ in Algorithm \ref{algo:basic_method}, T-DFS checks whether there is a shortest
path from $v'$ to $t$ without vertices in $M$ whose length is bounded by $k - L(M) - 1$. Although all the three
algorithms achieve $O(k \times |E(G)|)$ polynomial delay, Peng et al. showed that their method runs much faster than T-DFS
and T-DFS2 in practice because their pruning strategy incurs lower overhead \cite{peng2019towards}. HPI \cite{qiu2018real}
enumerates hop-constrained cycles triggered by incoming edges in dynamic graphs. It builds an index maintaining paths
between vertices with a high degree to reduce the cost of enumeration. However, the index can consume a large amount of memory
due to the exponential number of paths between each pair of these vertices.

\setlength{\textfloatsep}{0pt}
\begin{algorithm}[t]
    \footnotesize
	\caption{Generic DFS based Framework}
	\label{algo:basic_method}
	\SetKwFunction{Search}{Search}
	\SetKwProg{proc}{Procedure}{}{}
	 \KwIn{a graph $G$, two distinct vertices $s, t$, hop constraint $k$\;}
	 \KwOut{all $k$ hop-constrained paths from $s$ to $t$\;}
	 $M \leftarrow (s)$\;
	 \Search{$t, k, M$}\;
	 \proc{\Search{$t, k, M$}}{
	      $v \leftarrow $ the last vertex in $M$\;
	      \lIf{$v = t$}{$emit(M)$, \KwRet}
	      \ForEach{$v' \in N(v)$}{
	            \If{$v' \notin M$ and $L(M) + 1 + B(v') \leqslant k$}{
	                \Search{$t, k, M \cup \{v'\}$}\;
	            }
	      }
	 }
\end{algorithm}

\subsection{Other Related Work} \label{sec:related_work}

\textbf{\emph{s-t} Path (or Cycle) Enumeration.} Another kind of algorithms \cite{bohmova2018computing,nishino2017compiling,yasuda2017fast}
focus on developing construction methods to compile \emph{s-t} paths into a representation structure such that these paths can be quickly listed without explicitly
storing each individual result. These algorithms can only handle graphs with hundred vertices because compiling
\emph{s-t} paths of large graphs can consume a large amount of memory. Enumerating all \emph{s-t} paths (or cycles)
without the hop constraint is a classical problem \cite{tarjan1973enumeration,johnson1975finding,birmele2013optimal,kumar20182scent}. However,
these algorithms cannot be easily extended to the scenarios with the hop constraint because their enumeration procedure does
not consider the impact of the hop constraint. Additionally, there are also a variety of works \cite{bhattacharya2020improved,haeupler2012incremental,bender2015new}
that focus on detecting the existence of cycles in dynamic graphs instead of enumerating the results.

\textbf{Top-K Shortest Path Enumeration.} We can evaluate a query $q(s, t, k)$ with the Top-K shortest path
algorithms \cite{yen1971finding,eppstein1998finding,gao2010fast,chang2015efficiently,singh2015implementation,martins2003new}.
In particular, we set $K$ as a sufficient large value and terminate the query when the length of results is greater than $k$. Despite that these algorithms can find
the results of $q(s, t, k)$, they enumerate results along the ascending order of the length of results, which is unnecessary
for the HcPE problem and incurs overhead.

\SUN{\textbf{Subgraph Matching.} Given a data graph and a query graph, subgraph matching finds all embeddings in the data graph that are identical to
the query graph \cite{sun2020memory,lai2019distributed}. Existing graph database systems such as EmptyHeaded \cite{aberger2017emptyheaded}
and GraphFlow \cite{mhedhbi2019optimizing} enumerate all results by performing self-joins on $G$, and propose variant join plan optimization methods
in which the cardinality estimation plays an important role. Existing estimation methods \cite{park2020g} work on input relations of the
join query \cite{li2016wander} or catalogs \cite{mhedhbi2019optimizing} that are built in an offline preprocessing step and summarize the global statistics of
$G$. For example, given $G$, GraphFlow uses sampling methods to build a catalog by collecting the number of subgraphs with some specific structures appearing
in $G$. Given a query, GraphFlow optimizes the join plan by considering different plans of constructing the query graph from its subgraphs, estimating
the cost (e.g., the number of partial results) based on the catalog and selecting the plan with the minimum cost. GraphFlow evaluates
the query according to the plan and adopts the intersection caching
to reduce the cost of set intersections. In summary, the query optimizer and the computation of existing systems are optimized for reducing the cost of finding
all subgraphs in $G$ with a specific structure (e.g., a path).}

\SUN{In contrast, the HcPE problem targets at paths from $s$ to $t$ that satisfy the length constraint. Moreover, our method evaluates the query on
a query-dependent index, which is built online for each query based on distances to $s,t$, without building relations. The index rules out many
invalid candidates for the query in $G$. Our query optimizer as well as the
cardinality estimation method is designed specially to work with the index.}

\SUN{\textbf{Distance Queries.} A distance query asks the distance between two vertices in a graph, which receives a lot of research
interests \cite{potamias2009fast,akiba2013fast,jin2019pruned,cohen2003reachability,qiao2012approximate,cheng2009line}. Existing methods
such as the pruned landmark labeling \cite{akiba2013fast} construct an index in an offline preprocessing step to serve all queries, and
evaluate the query with the pre-computed results. The index records the distance to a set of vertices for each vertex in the graph,
which maintains the global statistics of $G$. They focus on balancing the cost of building indexes and query efficiency. In contrast,
the light-weight index proposed in this paper is query-dependent, which is built based on the distance to $s,t$ and maintains the local
statistics for the given query.
}

\section{Algorithm Overview}

In this section, we first propose a join-based model to the HcPE problem, and then give an overview of our \textbf{PathEnum}.

\subsection{A Join-based Model} \label{sec:join_based_model}

Although existing algorithms \cite{peng2019towards,grossi2018efficient,rizzi2014efficiently} provide comprehensive analysis
to the HcPE problem in terms of the time complexity, there lacks a method to model the practical computation cost of evaluating
a query. To reveal the problem, we formulate a HcPE query $q(s, t, k)$ on $G$ as a chain join $Q$.
The edge list $E(G)$ can be viewed as a binary relation $R(u, u') = \{(v, v') | e(v,v') \in E(G)\}$.
At first glance, the query $q(s, t, k)$ can be easily translated to a chain join $Q = R_1(u_0, u_1) \Join R_2(u_1, u_2) \Join \dots \Join R_k(u_{k - 1}, u_{k})$ where
$R_1 = \{(s, v)| e(s, v) \in E(G)\}$, $R_k = \{(v, t)|e(v, t) \in E(G)\}$ and $R_i = \{(v, v')|e(v, v') \in E(G)\}$ when
$1 < i < k$. For the ease of presentation, we use $Q$ to represent the results of evaluating $Q$ as well.
To obtain $\mathcal{P}(s, t, k, G)$, we first evaluate $Q$, and then eliminate the tuples $r \in Q$ that have duplicate vertices.
However, this method only returns the paths from $s$ to $t$ the length of which are exactly $k$.
Although we can solve this problem by launching $k$ chain join queries to compute paths with different lengths, this
approach incurs a large amount of redundant computations. To solve the problem,
we propose to generate relations of $Q$ as follows.

 \begin{figure}[t]\small
    \setlength{\abovecaptionskip}{0pt}
    \setlength{\belowcaptionskip}{0pt}
    \centering
    \begin{subfigure}{0.3\textwidth}
        \centering
        \includegraphics[scale=0.4]{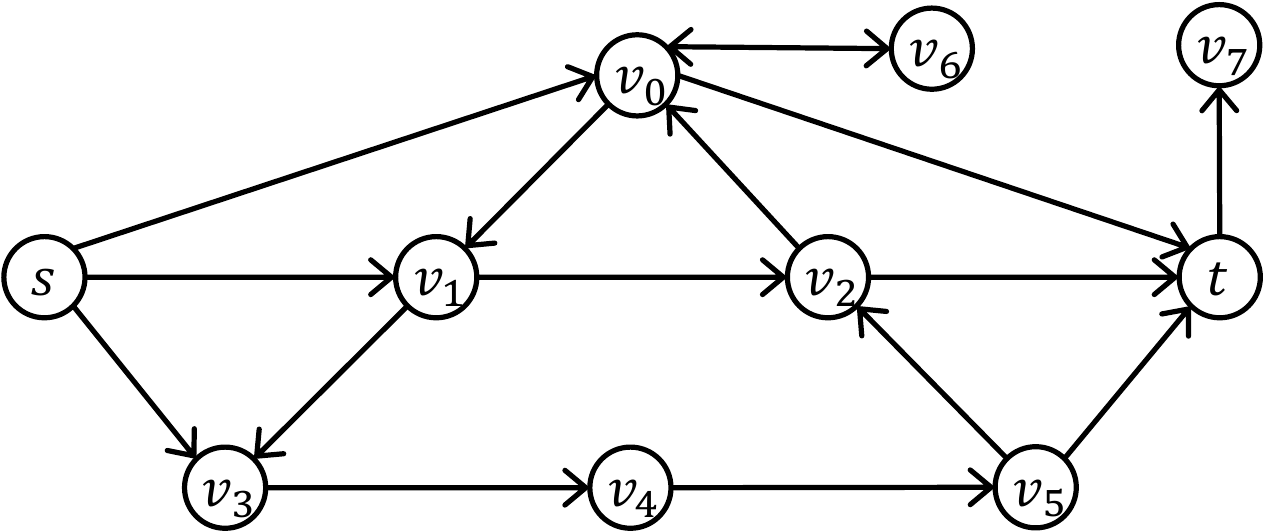}
        \caption{Graph $G$.}
        \label{fig:graph}
    \end{subfigure}	
    \begin{subfigure}{0.16\textwidth}
        \centering
        \includegraphics[scale=0.4]{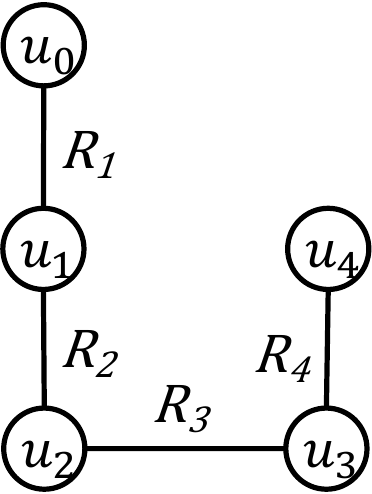}
        \caption{Join query $Q$.}
        \label{fig:join_query}
    \end{subfigure}	
    \caption{A query $q(s, t, 4)$ on the graph $G$.}
    \label{fig:example_graphs}
\end{figure}

\begin{enumerate}
    \item $R_1 = \{(s, v)|e(s, v) \in E(G) \}$ and $R_k = \{(v, t)|e(v, t) \in E(G) \wedge v \neq s\}$;
    \item $R_i = \{(v, v')|e(v, v') \in E(G - \{s\}) \wedge v \neq t \}$ when $1 < i < k$;
    \item $R_i = R_i \cup \{(t, t)\}$ for $1 < i \leqslant k$.
\end{enumerate}

The first two properties ensure that each tuple in $Q$ starts and ends at $s$ and $t$, while the third property
avoids eliminating the paths $P \in \mathcal{P}(s, t, k, G)$ that satisfy $L(P) < k$.
With the generation method, we can get Theorem \ref{theorem:correctness}. Example \ref{exmp:join_query} presents a running example.
\SUN{Due to space limit, the proof of Theorem \ref{theorem:correctness} and other propositions in this paper is presented in the appendix.}


\begin{theorem} \label{theorem:correctness}
    Evaluating $Q$ and eliminating tuples in $Q$ having duplicate vertices results in $\mathcal{P}(s, t, k, G)$.
\end{theorem}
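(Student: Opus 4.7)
The plan is to establish a correspondence between tuples of $Q$ and walks from $s$ to $t$ of length at most $k$, and then argue that the deduplication step retains exactly those tuples whose corresponding walk is a simple path. I would proceed by first extracting the structural consequences of the three generation rules, and then verifying two inclusions.

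First I would analyze an arbitrary tuple $r = (u_0, u_1, \ldots, u_k) \in Q$ using the defining properties of the $R_i$. From $R_1 = \{(s,v)\mid e(s,v)\in E(G)\}$ we immediately get $u_0 = s$, and from $R_k$ we get $u_k = t$. Since $R_i$ for $1 < i < k$ only draws edges from $E(G-\{s\})$, the coordinate $s$ never appears in positions $1,\ldots,k-1$; moreover, $R_i$ restricts edges to those with source different from $t$ except for the artificial self-loop $(t,t)$, so a simple induction shows that once $u_m = t$ we must have $u_{m+1} = u_{m+2} = \cdots = u_k = t$ (the tuple is "absorbed" at $t$). Consequently, if $m$ is the first index with $u_m = t$, the prefix $(u_0,\ldots,u_m)$ is a walk from $s$ to $t$ in $G$ of length $m \leqslant k$ whose intermediate vertices lie in $V(G)\setminus\{s,t\}$, and the suffix is the forced $t$-padding.

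Next I would verify the reverse direction: every $P = (v_0, \ldots, v_\ell) \in \mathcal{P}(s,t,k,G)$ lifts to a tuple in $Q$ by setting $u_i = v_i$ for $i \leqslant \ell$ and $u_i = t$ for $\ell < i \leqslant k$. Because $P$ is simple, its intermediate vertices lie in $V(G)\setminus\{s,t\}$, so each edge $(v_{i-1}, v_i)$ with $1 < i < \ell$ belongs to the corresponding $R_i$; the first edge lies in $R_1$, the closing edge $(v_{\ell-1}, t)$ lies in $R_\ell$ (using $v_{\ell-1} \neq s$, since $P$ is simple), and each trailing pair $(t,t)$ belongs to the remaining $R_i$ by property~(3). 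This produces a tuple of $Q$, and the construction is clearly injective since $\ell$ is recovered as the first index where $t$ appears.

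Combining the two directions, tuples of $Q$ are in bijection with walks from $s$ to $t$ of length at most $k$ whose interior vertices avoid $\{s,t\}$, under the correspondence that truncates the trailing $t$-padding. Under this correspondence, restricting to tuples without duplicate vertices (i.e., whose underlying walk is simple) yields exactly $\mathcal{P}(s,t,k,G)$. The main subtlety, and the step where I would need to be most careful, is precisely this reading of "duplicate vertices": taken literally on the length-$(k+1)$ tuple, every path of length strictly less than $k$ would be spuriously eliminated through its repeated trailing $t$'s, so the deduplication must be understood as being applied to the underlying walk (equivalently, as ignoring the absorbing $t$-suffix forced by the construction). Once this convention is pinned down, the theorem follows from the bijection established above.
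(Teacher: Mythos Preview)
Your proposal is correct and follows essentially the same route as the paper's own proof: the paper also first shows that every tuple of $Q$ has the form ``walk from $s$ to $t$ followed by a forced $t$-suffix'' (its Lemma~\ref{lemma:result_to_walk}), then shows the converse lifting (its Lemma~\ref{lemma:walk_to_result}), and finally observes that deduplication---explicitly read as ``duplicate vertices except $t$''---singles out exactly $\mathcal{P}(s,t,k,G)$. The only cosmetic difference is that the paper states the reverse lifting for arbitrary walks in $\mathcal{W}(s,t,k,G)$ rather than only for paths, so the full bijection you assert in your combining paragraph is actually established there; your argument as written proves only the path-lifting direction, which is nonetheless all the theorem needs.
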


\begin{example} \label{exmp:join_query}
	Given $G$ in Figure \ref{fig:graph} and a query $q(s, t, 4)$, the join query $Q$ is represented as a graph in Figure \ref{fig:join_query} where each edge is a relation and each node is an attribute. The relations of $Q$ are shown
	in Figure \ref{fig:initial_relations}. The path $(s, v_0, t)$ corresponds to the tuple $(s, v_0, t, t, t)$ in $Q$.
	$(s, v_0, v_6, v_0, t)$ belongs to $Q$ as well. However, it is a walk from $s$ to $t$, but not a path.
\end{example}

The cost function of evaluating $Q$ is shown in Equation \ref{eq:cost_model}, which is the total number of intermediate results
generated during the computation. If $Q$ is a basic relation, the cost is the size of the relation as we need to read it.
Otherwise, the cost is the sum of the cost of evaluating $Q_1$ and $Q_2$ and
the number of results of $Q$ where $Q = Q_1 \Join Q_2$. From the cost model, we can see that the cost of
evaluating a query is closely related to (1) the number of edges of involving in the enumeration; and (2)
the join order (i.e., search order) of evaluating the query.

\begin{equation} \label{eq:cost_model}
    T(Q) = \begin{cases}
        |R| & \text{If $Q$ is a base relation $R$.} \\
        |Q| + T(Q_1) + T(Q_2) & \text{If $Q = Q_1 \Join Q_2$.}
    \end{cases}
\end{equation}

\subsection{An Overview of PathEnum}

Figure \ref{fig:overview_pathenum} gives an overview of our PathEnum algorithm. Given a graph $G$ and a HcPE query $q(s, t, k)$,
we first build a light-weight index to reduce the number of edges involving in the subsequent search. Next, we generate a join order based on the statistics of the index.
We observe that the running time of different queries varies greatly because of the diverse size of the search space. Therefore, we optimize
the search order in two steps. In the first step, we use a preliminary cardinality estimator
to estimate the size of the search space. If the estimated size is small, then we directly invoke a depth-first search based method on the index to find results.
Otherwise, we optimize the join order with a full-fledged cardinality estimation. This method makes more accurate estimation than the
coarse-grained one at higher cost. However, the overhead is negligible when the
running time of queries is long. The query optimizer selects the method with a lower cost to evaluate the query.

\begin{figure}[t]\small
    \setlength{\abovecaptionskip}{0pt}
    \setlength{\belowcaptionskip}{0pt}
    \includegraphics[scale=0.43]{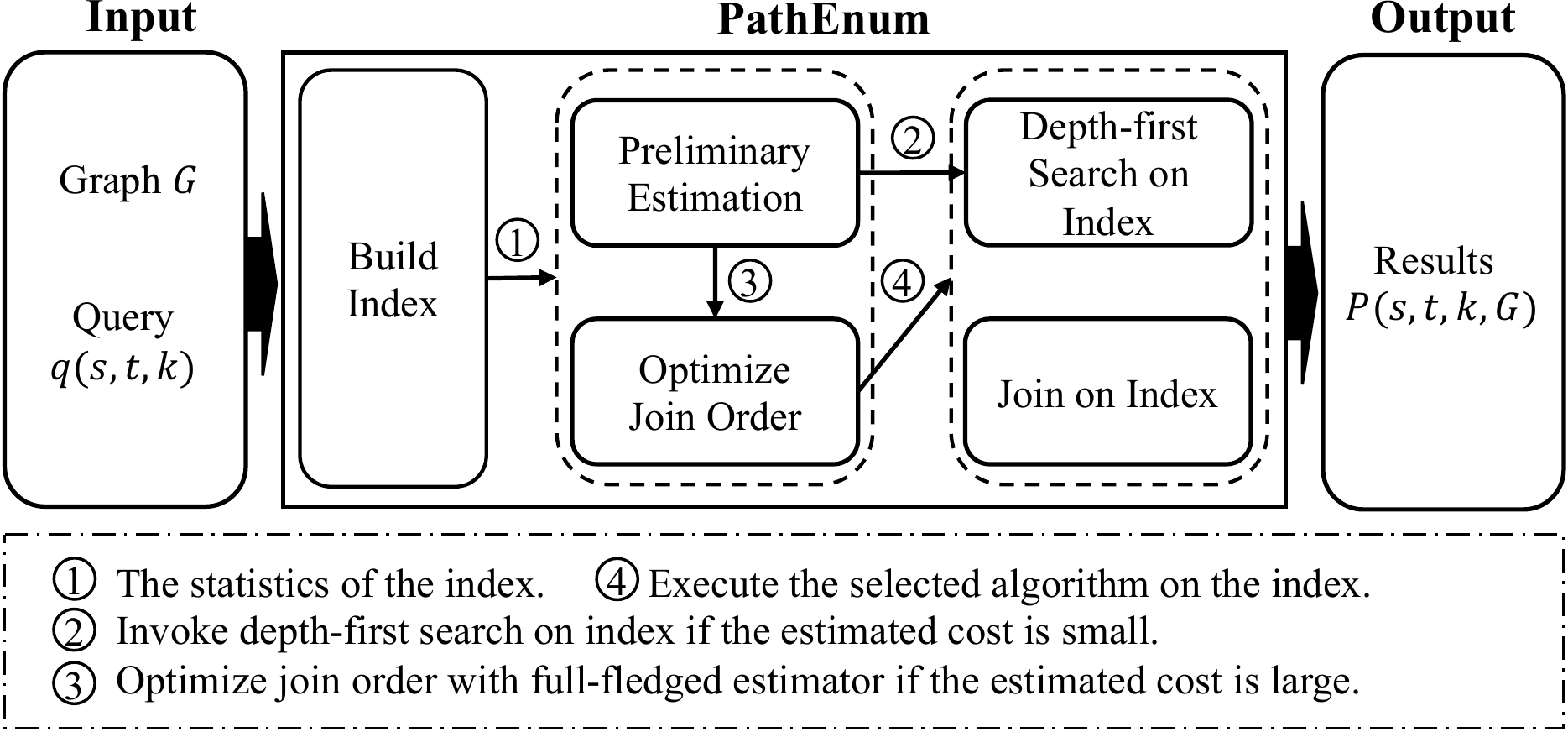}
    \centering
    \caption{An overview of PathEnum.}
    \label{fig:overview_pathenum}
\end{figure}

\section{Index Construction} \label{sec:build_index}


\subsection{Relation Construction}

Benefiting from the join-based model, we can evaluate $q(s, t, k)$ on $G$ as a chain join $Q$.
To reduce the cost, we can eliminate the \emph{dangling tuples}, i.e., the tuples not existing in any results, from each relation of $Q$
with the \emph{full reducer}, which is a classical dangling tuple elimination method in relational databases \cite{abiteboul1995foundations}.
For ease of understanding, we present the algorithm in graph context. Algorithm \ref{algo:build_relations} illustrates the details.
Lines 1-4 generate relations $R$ of $Q$ based on the construction method introduced in Section \ref{sec:join_based_model}.
Lines 5-12 remove dangling tuples from these relations. Specifically, Lines 5-8 prune relations $R_i$ along the increasing order of $i$.
Line 6 obtains the end vertex of edges in $R_i$. Next, Lines 7-8 remove edges $(v, v')$ in $R_{i + 1}$ such that $v$ does not belong to $C$.
After that, Lines 9-12 filter relations $R_i$ along the decreasing order of $i$ with the same method as Lines 5-8. Finally, Line 13 returns the relations.
The following is a running example.

\setlength{\textfloatsep}{0pt}
\begin{algorithm}[t]
    \footnotesize
	\caption{Build Relations}
	\label{algo:build_relations}
	\KwIn{a graph $G$, two distinct vertices $s, t$, hop constraint $k$\;}
	\KwOut{a set of relations $R$\;}
	\tcc{Initialize relations.}
	$R_1 \leftarrow \{(s, v)|e(s, v) \in E(G)\}$\;
	$R_k \leftarrow \{(v, t)| e(v, t) \in E(G) \wedge v \neq s \} \cup \{(t, t)\} $\;
	\For{$i \leftarrow 2 \text{ to } k - 1$}{
	    $R_i \leftarrow \{(v, v')|e(v, v') \in E(G-\{s\}) \wedge v \neq t\} \cup \{(t,t)\}$\;
	}
	
	\tcc{Perform full reducer.}
	\For{$i \leftarrow 1$ to $k - 1$}{
	    $C \leftarrow \{v'|(v, v') \in R_i\}$\;
	    \ForEach{$(v, v') \in R_{i + 1}$}{
	        \lIf{$v \notin C$}{$R_{i + 1}\leftarrow R_{i + 1} - \{(v, v')\}$}
	    }
	}
	\For{$i \leftarrow k - 1$ to $1$}{
	    $C \leftarrow \{v|(v, v') \in R_{i + 1}\}$\;
	    \ForEach{$(v, v') \in R_{i}$}{
	        \lIf{$v' \notin C$}{$R_{i}\leftarrow R_{i} - \{(v, v')\}$}
	    }
	}
	\KwRet{$R_{1-k}$\;}
\end{algorithm}

\setlength{\textfloatsep}{0pt}
\begin{figure}[t]\small
    \centering
    \captionsetup[subfigure]{aboveskip=0pt,belowskip=0pt}
    \setlength{\abovecaptionskip}{0pt}
    \setlength{\belowcaptionskip}{0pt}
    \begin{minipage}[b]{0.15\textwidth}
        \begin{subfigure}[t]{\textwidth}
            \centering
            \includegraphics[scale=0.4]{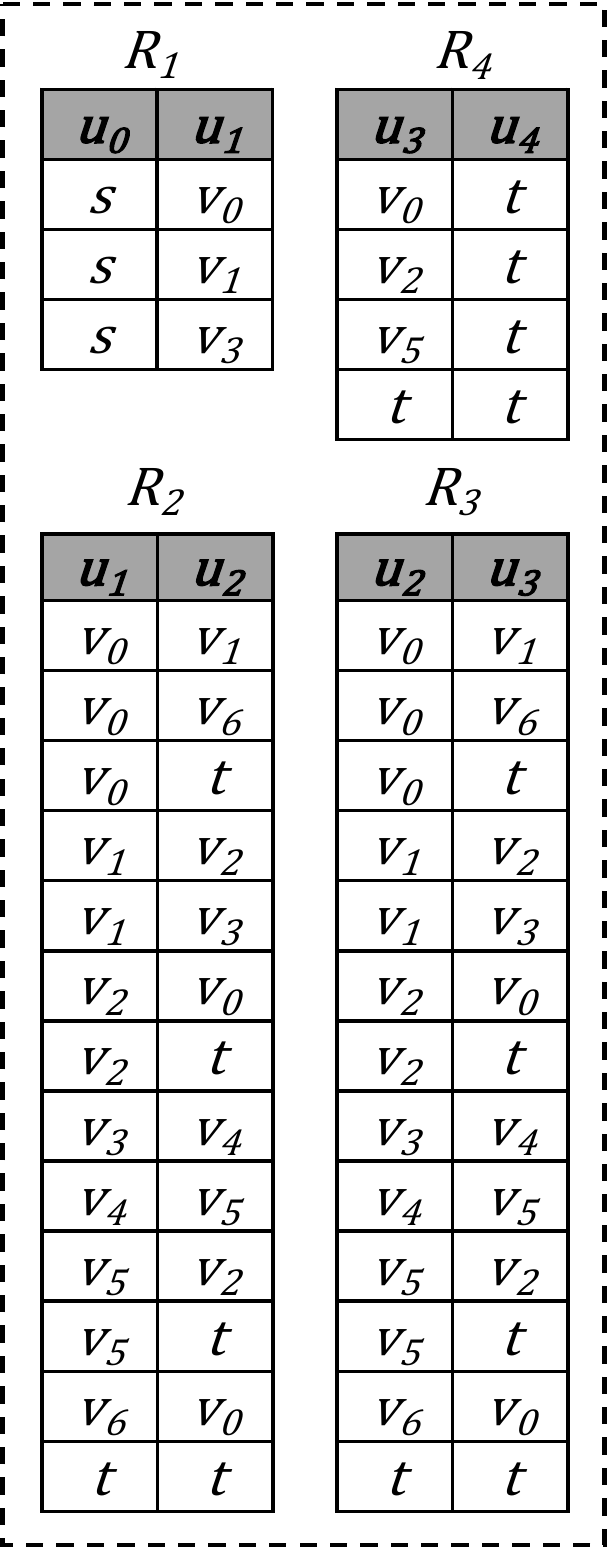}
            \caption{Initial $R$.}
            \label{fig:initial_relations}
        \end{subfigure}
    \end{minipage}
    \begin{minipage}[b]{0.3\textwidth}
        \centering
        \begin{subfigure}[t]{\textwidth}
            \centering
            \includegraphics[scale = 0.4]{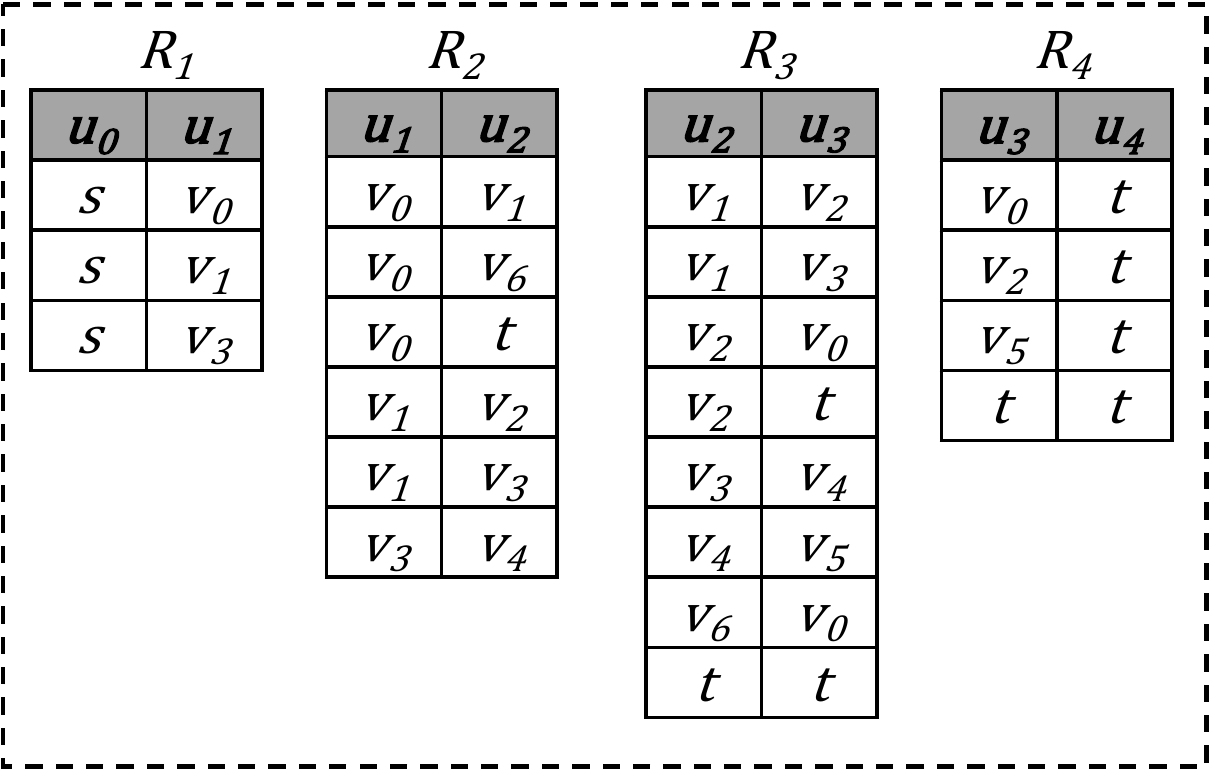}
            \caption{$R$ after pruning from $R_2$ to $R_4$.}
            \label{fig:left_to_right_pruning}
        \end{subfigure}
          \\[2ex]
        \begin{subfigure}[t]{\textwidth}
            \centering
            \includegraphics[scale = 0.4]{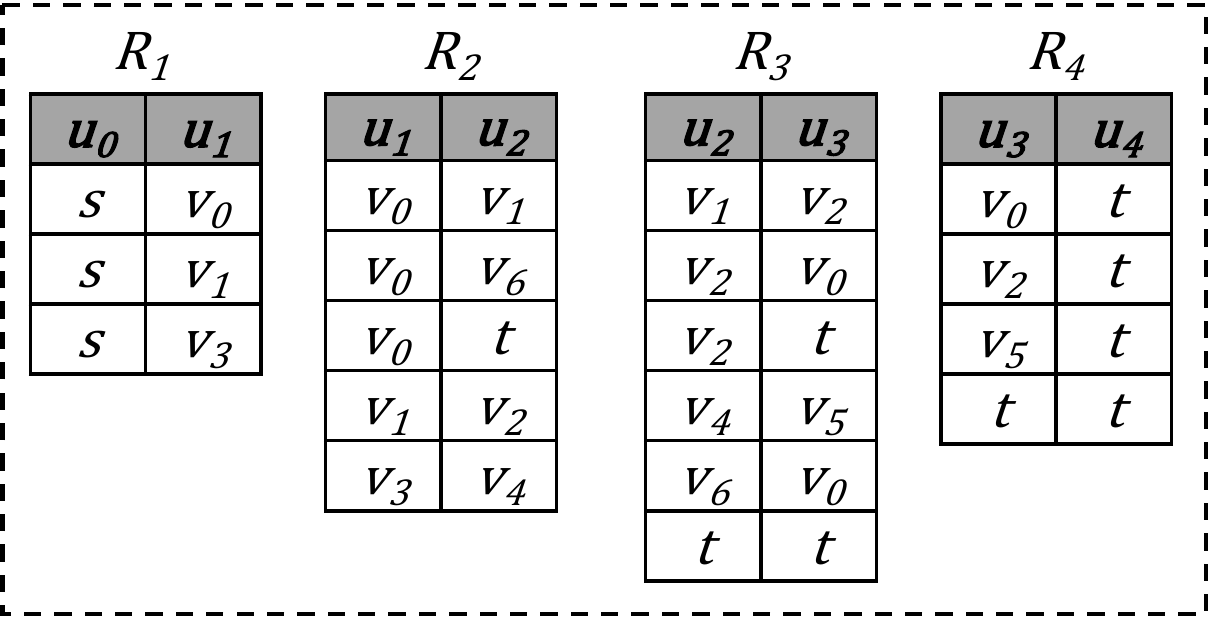}
            \caption{$R$ after pruning from $R_3$ to $R_1$.}
            \label{fig:right_to_left_pruning}
        \end{subfigure}
    \end{minipage}
    \caption{Build relations $R$ of $Q$.}
    \label{fig:sample_graph_1}
\end{figure}

\begin{example}
   Given $G$ and $q(s, t, 4)$ in Figure \ref{fig:example_graphs}, the relations generated by Lines 1-4 are shown in Figure \ref{fig:initial_relations}.
   Figure \ref{fig:left_to_right_pruning} illustrates the relations after pruning from $R_2$ to $R_k$. For example, $(v_4, v_5)$ is removed from $R_2$ because $v_4$ does not
   appear in values of $u_1$ in $R_1$. Figure \ref{fig:right_to_left_pruning} presents the relations after filtering from $R_{k - 1}$ to $R_1$.
   For example, $(v_1, v_3)$ is eliminated from $R_3$ since $v_3$ does not exist in values of $u_3$ in $R_4$.
\end{example}

The space and time complexities are both $O(k \times |E(G)|)$. The relations returned by Algorithm \ref{algo:build_relations} satisfy
the following proposition.

\begin{proposition} \label{lemma:full_reducer}
    Each tuple in relations of $Q$ appear in the final results of evaluating $Q$ \cite{abiteboul1995foundations}.
\end{proposition}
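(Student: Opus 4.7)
The plan is to establish two invariants about the tuples that survive Algorithm \ref{algo:build_relations}: a \emph{left-witness} invariant saying every surviving tuple in $R_i$ can be extended backward via matching tuples all the way into $R_1$, and a \emph{right-witness} invariant saying it can be extended forward into $R_k$. Once both hold at the end, concatenating the two chains yields a full join result containing the tuple, which is exactly what Proposition \ref{lemma:full_reducer} asserts.

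First I would prove the left-witness invariant during the left-to-right sweep (Lines 5--8) by induction on $i$. The base case $i=1$ is trivial: the tuple $(s,v) \in R_1$ is its own chain. For the inductive step, suppose every $(x,v)$ retained in $R_i$ at the end of iteration $i-1$ carries a left chain back to $R_1$. Any $(v,v')$ retained in $R_{i+1}$ after iteration $i$ has $v \in C$, so at least one $(x,v) \in R_i$ exists; prepending its chain by $(v,v')$ gives the required chain. A symmetric reverse induction, starting from $R_k$ and driven by Lines 9--12, installs the right-witness invariant on each $R_i$ at the moment it is processed in the right-to-left sweep.

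The main obstacle, and the only delicate point, is to verify that the right-to-left sweep does not destroy the left-witness invariant that the earlier sweep established. The danger is that deleting some $(x,v) \in R_i$ could remove the last available left witness for a tuple $(v,v') \in R_{i+1}$. I would rule this out as follows. A tuple $(x,v) \in R_i$ is deleted in Lines 9--12 only if $v$ fails to appear as a starting attribute of any tuple in the \emph{current} $R_{i+1}$. But because the right-to-left loop processes $R_{i+1}$ strictly before $R_i$, the state of $R_{i+1}$ at this moment is already its final state. Hence if any $(v, v') \in R_{i+1}$ survives at the end, $v$ is a start attribute of $R_{i+1}$, and consequently $(x,v)$ is kept in $R_i$. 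Thus for every surviving $(v,v') \in R_{i+1}$, the specific $(x,v) \in R_i$ that was serving as its left witness after the left-to-right pass is still present after the right-to-left pass.

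With this preservation in hand, a forward induction on $i$ over the final relations shows that every surviving tuple in $R_i$ has a left chain back to $R_1$: the base $i=1$ is immediate, and the inductive step just reuses the preserved left witness in $R_i$ together with the inductive chain from $R_{i-1}$. Combining with the right-witness invariant from the second paragraph, every surviving tuple sits inside at least one complete join tuple of $Q$, which is the statement of Proposition \ref{lemma:full_reducer}.
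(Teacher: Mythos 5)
Your argument is correct, but note that the paper does not actually prove Proposition \ref{lemma:full_reducer}: it is invoked as a classical property of the full reducer for acyclic (here, chain) joins and discharged by the citation to \cite{abiteboul1995foundations}. What you have written is essentially the standard textbook argument, specialized to the two-pass semijoin program of Algorithm \ref{algo:build_relations}, and it holds up. The one genuinely delicate point is exactly the one you isolate: the backward sweep could in principle delete the left witnesses installed by the forward sweep. Your resolution is sound because the backward loop runs $i = k-1, k-2, \ldots, 1$ and each $R_{i+1}$ is therefore already in its final state when $R_i$ is pruned against it; hence a tuple $(x,v) \in R_i$ is deleted only if $v$ is not the start attribute of any surviving tuple of $R_{i+1}$, which cannot happen when some $(v,v')$ survives there. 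Combined with the final forward induction over the already-final relations, every surviving tuple acquires both a left chain to $R_1$ and a right chain to $R_k$, and their concatenation is a full join tuple of $Q$ containing it. So where the paper buys the result by citation, your write-up supplies a self-contained correctness proof of the specific semijoin schedule the paper actually implements, which is arguably more informative to a reader who does not want to chase the reference; the citation, in turn, buys generality (the result holds for any full reducer of any acyclic join, not just this left-to-right/right-to-left schedule on a chain).
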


\subsection{Light-Weight Index}

Algorithm \ref{algo:build_relations} removes dangling tuples at the cost of scanning $G$ and each relation several times. The cost can dominate the execution time of
some queries, especially on large graphs. This makes the algorithm hard to meet the real-time constraint.
In order to reveal the problem, we propose a light-weight index, which is built at a small overhead but provides competitive pruning power.

\textbf{General Idea.} Based on the definition of a path from $s$ to $t$,
we have the following proposition.

\begin{proposition} \label{prop:prune_vertex}
    Given a vertex $v \in V(G)$, if there exists a path $P \in \mathcal{P}(s, t, k, G)$ such that $P[i] = v$ where $0 \leqslant i \leqslant |P| - 1$, then $S(s, v | G - \{t\}) \leqslant i$
    and $S(v, t| G - \{s\}) \leqslant k - i$.
\end{proposition}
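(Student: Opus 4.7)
My plan is to split the path $P$ at position $i$ and show that each half witnesses the required distance bound in the corresponding restricted graph. Concretely, write $P_1 = (P[0], P[1], \ldots, P[i])$ for the prefix and $P_2 = (P[i], P[i+1], \ldots, P[|P|-1])$ for the suffix. Each consecutive pair of $P$ is joined by an edge, so $P_1$ is a walk from $s$ to $v$ with $L(P_1) = i$ and $P_2$ is a walk from $v$ to $t$ with $L(P_2) = |P| - 1 - i \leqslant k - i$ (using $L(P) \leqslant k$).

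Next I would argue that $P_1$ lives inside $G - \{t\}$ and $P_2$ lives inside $G - \{s\}$. Because $P$ is a path, it has no repeated vertex. Combined with Definition~\ref{def:walk_st}, this tells us $t$ only occurs at position $|P|-1$ and $s$ only occurs at position $0$ of $P$. Hence, as long as $v \neq t$, none of the intermediate vertices of $P_1$ equal $t$, so $P_1$ is a walk in $G - \{t\}$ and $S(s, v \mid G - \{t\}) \leqslant L(P_1) = i$. Symmetrically, as long as $v \neq s$, $P_2$ is a walk in $G - \{s\}$ and gives $S(v, t \mid G - \{s\}) \leqslant k - i$.

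Finally, I would dispatch the two boundary cases separately so that I never have to measure a distance from a deleted vertex. If $i = 0$ then $v = s$ and the first inequality is $S(s, s \mid G - \{t\}) = 0 \leqslant 0$, which is immediate under the convention $S(u,u \mid G') = 0$. The second inequality is vacuous in the sense that the candidate $v = s$ is always admissible at position $0$. The case $i = |P|-1$ (so $v = t$) is handled by the mirror argument. The whole proof is elementary; the only mild obstacle is the endpoint bookkeeping, which is why I would structure the argument around the generic interior case first and treat $i \in \{0, |P|-1\}$ by inspection at the end.
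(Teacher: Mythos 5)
Your proof is correct and is exactly the argument the paper intends: the paper states this proposition as following directly from Definition~\ref{def:walk_st} and gives no separate proof in the appendix, and your split of $P$ at position $i$ into a prefix avoiding $t$ and a suffix avoiding $s$ is the canonical justification. Your endpoint cases $i=0$ and $i=|P|-1$ are really artifacts of the paper's notation for $S(\cdot,\cdot\mid G-\{s\})$ and $S(\cdot,\cdot\mid G-\{t\})$ (under the natural convention that deletion only excludes intermediate vertices, $P$ itself witnesses both bounds there), so your handling is fine.
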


Let $C_i$ denote the set of vertices $v \in V(G)$ that satisfy $S(s, v|G - \{t\}) \leqslant i$ and $S(v, t|G - \{s\}) \leqslant k - i$. According to Proposition \ref{prop:prune_vertex},
if $v$ appears at position $i$ of $P \in \mathcal{P}(s, t, k, G)$, then $v$ belongs to $C_i$. Moreover, given a vertex $v$,
suppose that the remaining budget traveling to $t$ is $b$. We only need to consider the neighbors $v'$ of $v$ such that $S(v', t|G - \{s\}) \leqslant b - 1$ to meet the hop constraint. Based on the observation, we want to build an index supporting two kinds of lookup operations to serve the subsequent enumeration: (1) given $i$ where $0 \leqslant i \leqslant k$,
retrieve $C_i$; and (2) given $v \in C_i$ and an integer $b$
where $0 \leqslant b \leqslant k$, retrieve the neighbors $v'$ of $v$ such that $S(v', t|G - \{s\}) \leqslant b$ (or the in neighbors $v'$ of $v$ such that $S(s, v'|G - \{t\}) \leqslant b$).

\setlength{\textfloatsep}{0pt}
\begin{algorithm}[t]
    \footnotesize
	\caption{Build Index}
	\label{algo:build_index}
	 \KwIn{a graph $G$, two distinct vertices $s, t$, hop constraint $k$\;}
	 \KwOut{a light-weight index $\mathcal{I}$\;}
	 Set $v.s = S(s, v|G - \{t\})$ and $v.t = S(v, t| G - \{s\})$ for $v \in V(G)$\;
	 $X \leftarrow$ a $(k + 1) \times (k + 1)$ matrix of sets\;
	 \ForEach{$v \in V(G)$}{
	    \lIf{$v.s + v.t \leqslant k$}{Add $v$ to $X[v.s, v.t]$}
	 }
	 $H \leftarrow $ a hash table\;
	 \ForEach{$v \in X - \{t\}$}{
	    Add a key-value pair $(v,\{\})$ to $H$\;
	    \ForEach{$v' \in N(v)$}{
	        \lIf{$v.s + v'.t + 1 \leqslant k$}{Add $v'$ to $H[v]$}
	    }
	 }
	 Add a key-value pair $(t, \{t\})$ to $H$\;
	 Sort the values $v' \in H[v]$ of $v \in X$ by ascending order of $v'.t$\;
	 \KwRet{$\mathcal{I}(X, H)$}\;
\end{algorithm}

\textbf{Implementation.} Algorithm \ref{algo:build_index} presents the details of building the index. For each $v \in V(G)$, Line 1
sets $v.s$ and $v.t$ as $S(s, v| G - \{s\})$ and $S(v, t| G - \{t\})$, respectively. We implement this by performing two breadth-first search from $s$
and $t$, respectively. Lines 2-4 divide the vertices $v \in V(G)$ into disjoint sets based on $v.s$ and $v.t$. The partition considers
the vertices $v$ such that $v.s + v.t \leqslant k$ only. After that we build a hash table $H$ to maintain
the relationship between vertices in $X$ and their neighbors (Lines 5-10). The key is the vertex $v$ in $X$ and the value is the set of neighbors $v'$
of $v$ that satisfy $v.s + v'.t + 1 \leqslant k$. Line 11 sorts the neighbors $v'.t$ of $v$ in $H$ by the ascending order of $v'.t$. 
Finally, we return the index $\mathcal{I}$ that contains $X$ and $H$. In practice,
$H$ has three components that is the \emph{Neighbors} array storing neighbors $v'$ of each vertex $v \in X$ by the ascending order of $v'.t$,
the \emph{Offset} array indexing the neighbor set of each vertex by the distance to $t$, and the \emph{Hash Table} the key and value of which are the vertices
in $X$ and a pointer to the beginning position at the \emph{Offset} array. The following is an example.


\begin{example}
    Figure \ref{fig:build_index} presents $\mathcal{I}$ on $G$ and $q(s, t, 4)$ in Figure \ref{fig:example_graphs}. Figure \ref{fig:vertex_bins}
    shows the partitions $X$ of $v \in V(G)$ based on $S(s, v|G - \{t\})$ and $S(v, t|G - \{s\})$. For example, $X[2, 2] = \{v_4, v_6\}$. Figure \ref{fig:index}
    demonstrates the implementation of $H$. Take $v_0$ as an example. It has three neighbors $\{t, v_1, v_6\}$, which are stored in the \emph{Neighbors} array
    by the ascending order of the distance to $t$. As $k = 4$, $v_0$ has five slots in the \emph{Offset} array to index $\{t, v_1, v_6\}$ based
    on the distance to $t$. The value in $H$ is 0, which points to the begin position of slots belonging to $v_0$ in the \emph{Offset} array. Suppose
    that we want to retrieve the neighbors $v$ of $v_0$ such that $S(v, t|G - \{s\}) \leqslant 2$. We first get the beginning position of the neighbor set
    of $v_0$ from the first slot of the \emph{Offset} array, which is 0. Next, we get the end position of the neighbors satisfying the distance constraint
    from the fourth slot, which is 3. Then, we get the results from the \emph{Neighbors} array, which are $\{t, v_1, v_6\}$.
\end{example}

\textbf{Index Lookup Operations} The index $\mathcal{I}$ supports two kinds of operations listed below.

\begin{itemize}
    \item $\mathcal{I}(i)$: Retrieve $C_i$, i.e., the vertices $v \in V(G)$ satisfy that $S(s, v|G - \{t\}) \leqslant i$ and $S(v, t|G - \{s\}) \leqslant k - i$.
    \item $\mathcal{I}_t(v, b)$ (or $\mathcal{I}_s(v, b)$): Retrieve the neighbors $v'$ of $v$ such that $S(v', t| G - \{s\}) \leqslant b$ (or the in neighbors
    $v'$ of $v$ such that $S(s, v' | G - \{t\}) \leqslant b$).
\end{itemize}

$\mathcal{I}(i)$ and $\mathcal{I}_t(v, b)$ are implemented based on $X$ and $H$, respectively. The time complexity of the two operations are both $O(1)$.

\setlength{\textfloatsep}{0pt}
\begin{figure}[t]\small
	\setlength{\abovecaptionskip}{0pt}
	\setlength{\belowcaptionskip}{0pt}
	\centering
	\begin{subfigure}[t]{0.19\textwidth}
		\centering
		\includegraphics[scale=0.4]{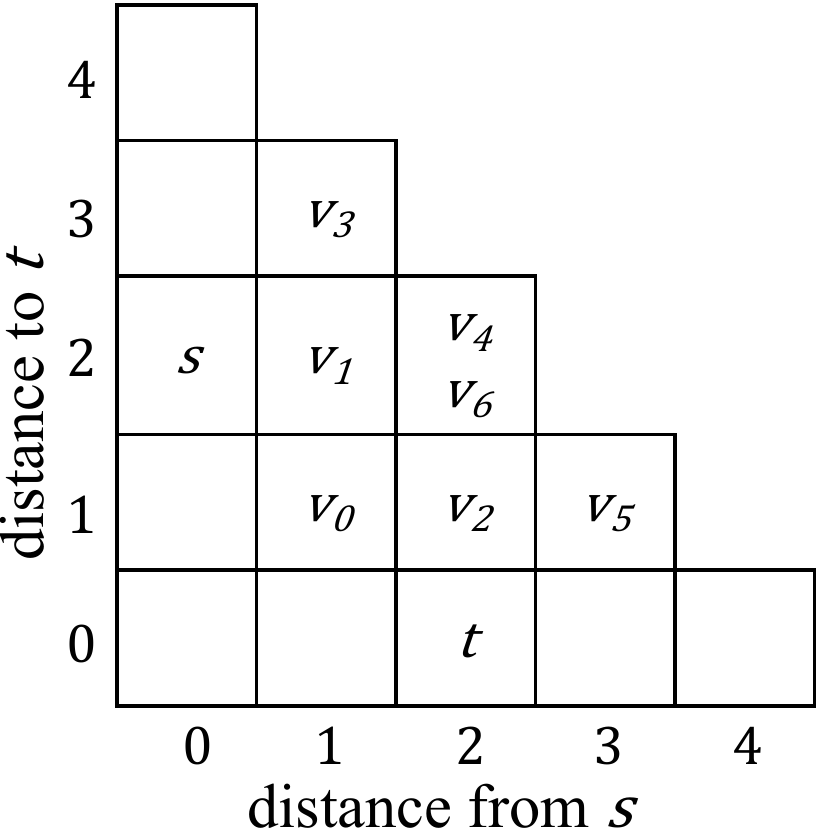}
		\caption{Partitions $X$.}
		\label{fig:vertex_bins}
	\end{subfigure}	
	\begin{subfigure}[t]{0.27\textwidth}
		\centering
		\includegraphics[scale=0.4]{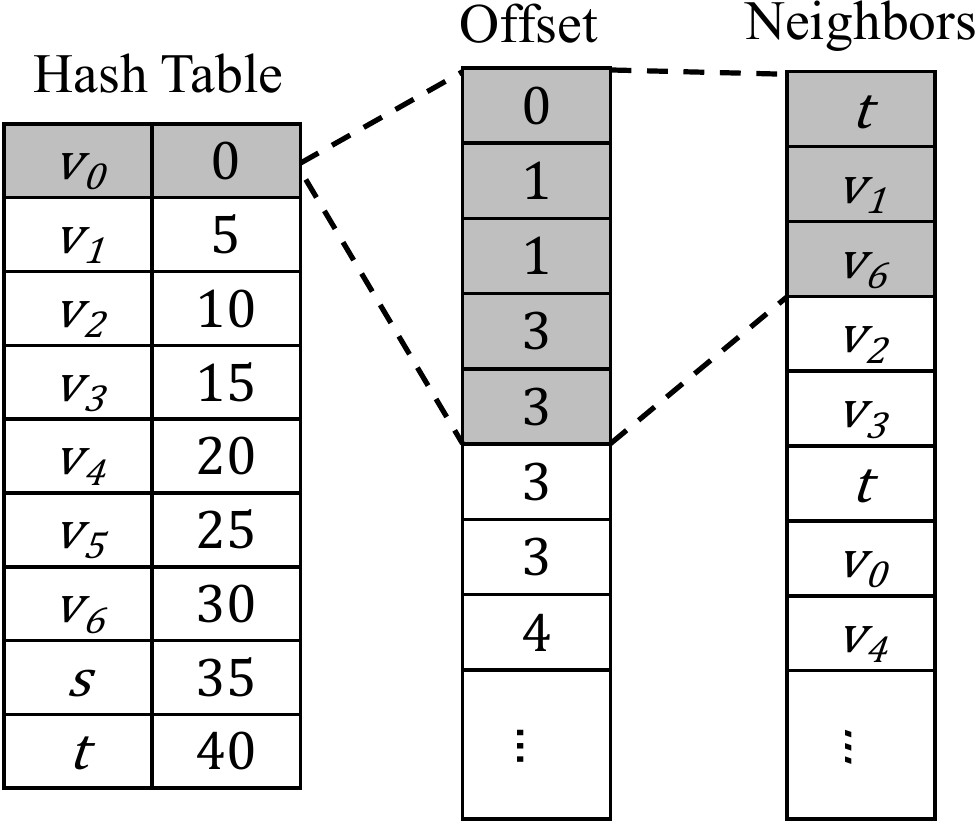}
		\caption{Hash table $H$.}
		\label{fig:index}
	\end{subfigure}	
	\caption{Build the index $\mathcal{I}$ on $G$.}
	\label{fig:build_index}
\end{figure}

\subsection{Analysis} \label{sec:index_analysis} 

\SUN{We compare the pruning power of Algorithm \ref{algo:build_index} with Algorithm \ref{algo:build_relations} in the appendix.}
In this following, we analyze the space and time complexities of Algorithm \ref{algo:build_index}.

\textbf{Space.} The space complexity of $X$ is $O((k + 1) ^ 2 + |V(G)|)$
because $X$ contains all vertices of $G$ at most. The space complexity of $H$ is $O(|E(G)| + k \times |V(G)|)$ because
we store all edges in $E(G)$ at most and each vertex has $k + 1$ slots in the \emph{Offset} array. Therefore, the space
complexity of constructing $\mathcal{I}$ is $O(|E(G)| + k \times |V(G)|)$.

\textbf{Time.} Line 1 in Algorithm \ref{algo:build_index} takes $O(|E(G)| + |V(G)|)$ time because
we perform two breadth-first searches. Lines 2-4 takes $|V(G)|$ time. Since Lines 6-9 loop
over the neighbors of each vertex in $X$, the cost is $O(|E(G)|)$. We implement the sort at Line 11 with the counting sort because
$k$ is small. Therefore, the cost is $O(|E(G)|)$ as well. In summary, the time complexity of Algorithm \ref{algo:build_index} is $O(|E(G)| + |V(G)|)$.
The cost in practice is small because many vertices and edges are ruled out by the distance constraint.

\section{Search on Index} \label{sec:generate_join_plan}

\subsection{Depth-First Search on Index}

Algorithm \ref{algo:dfs_on_index} presents the depth-first search method on the index $\mathcal{I}$.
The \emph{Search} procedure recursively enumerates all hop-constrained paths from $s$ to $t$ based on $\mathcal{I}$ (Lines 3-7). If the last vertex $v$ in $M$ is $t$,
then we find a result and emit it (Line 4). Otherwise, we consider the neighbors $v'$ of $v$ such that $S(v', t) \leqslant k - L(M) - 1$ as the next vertex in $M$ to
meet the hop constraint. In particular, we loop over $\mathcal{I}_t(v, k - L(M) - 1)$, add $v'$ to $M$, and continue the search.
The check at line 7 ensures that there are no duplicate vertices in $M$.

\subsection{Analysis} \label{sec:dfs_analysis}

\SUN{Algorithm \ref{algo:dfs_on_index} can be easily extended to support HcPE queries with variant constraints such as accumulative values and
a sequence of actions, which is detailed in 
the appendix.} In the following, we focus on the space consumption and time complexity. 


\textbf{Space.} Algorithm \ref{algo:dfs_on_index} spends $O(k)$ space to store partial results because it maintains one partial result at a time during the search.

\textbf{Time.} Algorithm \ref{algo:dfs_on_index}
performs a DFS on the search tree where nodes are partial results $M$ and edges are operations of adding a vertex to $M$.
We analyze the time complexity based on the search tree. Let $\mathcal{M}_i$ represent the nodes at depth $i$ in the search tree, which are
the set of partial results $M$ containing $i + 1$ vertices. Each internal node $M \in \mathcal{M}_i$ corresponds to an invocation of the \emph{Search} procedure.
The cost of an invocation is $|\mathcal{I}_t(M[i], k - i - 1)|$ (i.e., the for loop at Lines 6-7). Then, the running time $T$ of Algorithm \ref{algo:dfs_on_index}
is computed by Equation \ref{eq:dfs_time_complexity}. 

\begin{equation} \label{eq:dfs_time_complexity}
    T = \sum_{0 \leqslant i \leqslant k - 1} \sum_{M \in \mathcal{M}_i} |\mathcal{I}_t(M[i], k - i - 1)|.
\end{equation}

As Algorithm \ref{algo:dfs_on_index} generates partial results incrementally, $\mathcal{M}_{i + 1}$ and $\mathcal{M}_i$ have the following relation where $0 \leqslant i \leqslant k - 1$.

\begin{equation} \label{eq:dfs_relationship}
    \mathcal{M}_{i + 1} = \bigcup_{M \in \mathcal{M}_i} \{M \cup \{v\} | v \in  \mathcal{I}_t(M[i], k - i - 1) - M \}.
\end{equation}

\begin{algorithm}[t]
	\footnotesize
	\caption{Depth-First Search on Index}
	\label{algo:dfs_on_index}
	\SetKwFunction{Search}{Search}
	\SetKwProg{proc}{Procedure}{}{}
	\KwIn{two distinct vertices $s, t$, hop constraint $k$, index $\mathcal{I}$\;}
	\KwOut{all $k$ hop-constrained paths from $s$ to $t$\;}
	$M \leftarrow (s)$\;
	\Search{$t, k, M, \mathcal{I}$}\;
	\proc{\Search{$t, k, M, \mathcal{I}$}}{
		$v \leftarrow$ the last vertex in $M$\;
		\lIf{$v = t$}{$emit(M)$, \KwRet}
		\ForEach{$v' \in \mathcal{I}_t(v, k - L(M) - 1)$}{
			\lIf{$v' \notin M$}{\Search{$t, k, M \cup \{v'\}, \mathcal{I}$}}
		}
	}
\end{algorithm}

\SUN{Because $M$ is generated during the enumeration, it is hard to estimate the number of vertices $v \in  \mathcal{I}_t(M[i], k - i - 1)$
belonging to $M$. For the ease of analysis, we relax the constraint of Algorithm \ref{algo:dfs_on_index} by removing the check at line 7 because $k$ is
small and $M$ contains a few vertices.}
Let $\widetilde{\mathcal{M}}_i$ denote the set of partial results containing $i + 1$
vertices, which are generated by the algorithm after relaxation. According to Equation \ref{eq:dfs_relationship},
$\widetilde{\mathcal{M}}_{i + 1} = \bigcup_{M \in \widetilde{\mathcal{M}}_i} \{M \cup \{v\} | v \in  \mathcal{I}_t(M[i], k - i - 1)\}$ and $\mathcal{M}_i \subseteq \widetilde{\mathcal{M}}_i$. The algorithm after relaxation satisfies the following proposition.

\begin{proposition} \label{prop:walk_correctness}
    Algorithm \ref{algo:dfs_on_index} without the check at line 7 finds all $k$ hop-constrained walk $\mathcal{W}(s, t, k, G)$ from $s$ to $t$ in $G$. Given $M \in \widetilde{\mathcal{M}}_i$
    where $0 \leqslant i \leqslant k$, $M$ must appear in a walk $W \in \mathcal{W}(s, t, k, G)$.
\end{proposition}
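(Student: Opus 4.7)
The plan is to handle the two statements by induction on the depth $i$ in the search tree. For completeness, I would fix any $W = (w_0, \ldots, w_l) \in \mathcal{W}(s, t, k, G)$ with $l \leqslant k$ and prove that $(w_0, \ldots, w_i) \in \widetilde{\mathcal{M}}_i$ for every $0 \leqslant i \leqslant l$; then the emission at Line~5 of Algorithm~\ref{algo:dfs_on_index} outputs $W$ when $i = l$. The base case is immediate since $M$ is initialized to $(s) = (w_0)$. The inductive step reduces to $w_{i+1} \in \mathcal{I}_t(w_i, k - i - 1)$. Applying Definition~\ref{def:walk_st}, the intermediate vertices of $W$ avoid both $s$ and $t$, so the suffix $(w_{i+1}, \ldots, w_l)$ lies in $G - \{s\}$ and has length $l - i - 1$, giving $w_{i+1}.t \leqslant k - i - 1$. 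Symmetrically the prefix $(w_0, \ldots, w_i)$ lies in $G - \{t\}$, so $w_i.s \leqslant i$. Hence $w_i.s + w_{i+1}.t + 1 \leqslant k$, so Lines~8--9 of Algorithm~\ref{algo:build_index} place $w_{i+1}$ in $H[w_i]$, and the sort at Line~11 ensures it lies within the slice returned by $\mathcal{I}_t(w_i, k - i - 1)$.

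For soundness, given $M = (w_0, \ldots, w_i) \in \widetilde{\mathcal{M}}_i$, I would concatenate $M$ with a shortest path from $w_i$ to $t$ in $G - \{s\}$. Such a path exists with length at most $k - i$ because the recursive step that produced $M$ required $w_i \in \mathcal{I}_t(w_{i-1}, k - i)$, so $S(w_i, t|G - \{s\}) \leqslant k - i$ (when $i = k$ this forces $w_i = t$ and the concatenation is trivial; the $i = 0$ case is handled by taking any walk as a witness whenever $\mathcal{W}(s, t, k, G)$ is nonempty). The resulting sequence has total length at most $k$, begins at $s$, and ends at $t$. To confirm Definition~\ref{def:walk_st}, I would verify that every intermediate vertex avoids $s$ and $t$: the intermediates inside $M$ avoid $t$ because Search terminates at Line~5 whenever $v = t$, and avoid $s$ because $S(s, t|G - \{s\}) = \infty$ causes $s$ to fail the admission check at Line~9 of Algorithm~\ref{algo:build_index} and hence to be excluded from every $H[v]$; the shortest-path suffix lies in $G - \{s\}$ by definition, and since a shortest path is simple, its internal vertices avoid $t$.

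The hard part will be the careful bookkeeping around $s$ and $t$ inside the index construction, which is what licenses the implicit appeals to Definition~\ref{def:walk_st} throughout the argument. Concretely, one must confirm that $s$ never appears in any $H[v]$, that $t$ is admitted as a successor of every valid predecessor (the condition at Line~9 of Algorithm~\ref{algo:build_index} reduces to $v.s \leqslant k - 1$ because $t.t = 0$), and that the special pair $(t, \{t\})$ inserted at Line~10 does not disturb the correspondence once $t$ appears as the terminal vertex of a partial result. Once these details are pinned down, the two inductions themselves are essentially mechanical.
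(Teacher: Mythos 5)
Your proof is correct and takes essentially the same route as the paper's: completeness via the same prefix-by-prefix induction used for the path-correctness proposition, and the second claim via concatenating $M$ with a shortest path from its last vertex to $t$ in $G-\{s\}$, whose length bound comes from membership in $\mathcal{I}_t(\cdot,\cdot)$ (the paper phrases this as a contradiction, but the construction is identical). The extra bookkeeping you do around $s$ and $t$ in the index is detail the paper leaves implicit, not a different argument.
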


The proposition shows that each leaf in the search tree of Algorithm \ref{algo:dfs_on_index} after relaxation is a walk $W \in \mathcal{W}(s, t, k, G)$.
Then, $|\widetilde{\mathcal{M}}_i| \leqslant \delta_W$ where $\delta_W = |\mathcal{W}(s , t, k, G)|$. Together with Equations \ref{eq:dfs_time_complexity} and \ref{eq:dfs_relationship},
we get the following equation.

\begin{equation}
    \begin{split}
            T & = \sum_{0 \leqslant i \leqslant k - 1} \sum_{M \in \mathcal{M}_i} |\mathcal{I}_t(M[i], k - i - 1)| \\
              & \leqslant \sum_{0 \leqslant i \leqslant k - 1} \sum_{M \in \widetilde{\mathcal{M}}_i} |\mathcal{I}_t(M[i], k - i - 1)| \\
              & = \sum_{0 \leqslant i \leqslant k - 1} |\bigcup_{M \in \widetilde{\mathcal{M}}_i} \{M \cup \{v\} | v \in  \mathcal{I}_t(M[i], k - i - 1)\}| \\
              & = \sum_{1 \leqslant i \leqslant k} |\widetilde{\mathcal{M}}_i| \leqslant k \times \delta_W.
    \end{split}
\end{equation}

In summary, given $G$ and $q(s, t, k)$, the running time of Algorithm \ref{algo:dfs_on_index} is $O(k \times \delta_W)$. The analysis indicates that when
most of walks in $\mathcal{W}(s, t, k, G)$ belong to $\mathcal{P}(s, t, k, G)$, Algorithm \ref{algo:dfs_on_index} generates a few \emph{invalid partial results}, i.e., the partial
results do not exist in any final results, and its running time is very close to the lower bound $\Omega(\delta_P)$ for the problem where $\delta_P = |\mathcal{P}(s, t, k, G)|$.
In contrast, when most walks in $\mathcal{W}(s, t, k, G)$ are not paths, the algorithm can result in a large number of invalid partial results. Example \ref{exmp:gap}
presents an example.
From the example, we can also see that the gap between $\delta_P$ and $\delta_W$ (i.e., $\frac{\delta_P}{\delta_W}$) depends on the query and the graph topology.

\begin{example} \label{exmp:gap}
    We execute a query $q(s, t, 4)$ on $G_0$ and $G_1$ in Figure \ref{fig:sample_graphs}, respectively.
    $|\mathcal{W}(s, t, 4, G_0)|$ is equal to 8, and each walk in $\mathcal{W}(s, t, 4, G_0)$ is a path in
    $\mathcal{P}(s, t, 4, G_0)$, for example, $W = (s, v_0, v_2, v_4, t)$. In contrast, $|\mathcal{W}(s, t, 4, G_1)|$ is equal to 6, and only $W = (s, v_0, t)$ belongs to
    $\mathcal{P}(s, t, 4, G_1)$.
\end{example}

\section{Query Optimization}

\subsection{General Idea}

Algorithm \ref{algo:dfs_on_index} enumerates all results through extending the partial result from $s$ by one vertex at a step. It is equivalent to
evaluating $Q$ along the join order $(R_1, R_2, ... R_k)$, which is a left-deep join tree. Because the join order has an important impact on the cost of the enumeration,
we want to optimize it to further accelerate the query. On the other hand, an important observation we made on the HcPE problem is that the running time of different queries varies greatly. We can easily answer the query
with a small search space regardless of join orders. Consequently, the benefit of optimizing join orders on these queries is
limited. Even worse the optimization time dominates the query time if the optimization method is complex.

In order to reveal the problem, we propose an optimizer generating join orders based on the index in two phases. In particular,
we first use a preliminary cardinality estimator to roughly but quickly estimate the size of the search space.
If the search space size is small, then we directly invoke Algorithm \ref{algo:dfs_on_index}. Otherwise,
we generate a join order with a full-fledged cardinality estimator, which provides more accurate estimation but at a higher cost.
The optimizer selects the method (Algorithm \ref{algo:dfs_on_index} versus. Algorithm \ref{algo:join_on_index}) with lower cost to evaluate the query.

\begin{figure}[t]\small
	\setlength{\abovecaptionskip}{0pt}
	\setlength{\belowcaptionskip}{0pt}
	\centering
	\begin{subfigure}{0.23\textwidth}
		\centering
		\includegraphics[scale=0.4]{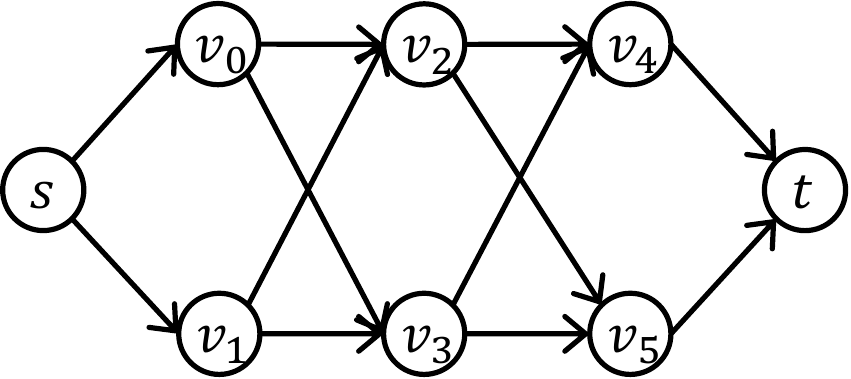}
		\caption{Graph $G_0$.}
		\label{fig:graph_0}
	\end{subfigure}	
	\begin{subfigure}{0.22\textwidth}
		\centering
		\includegraphics[scale=0.4]{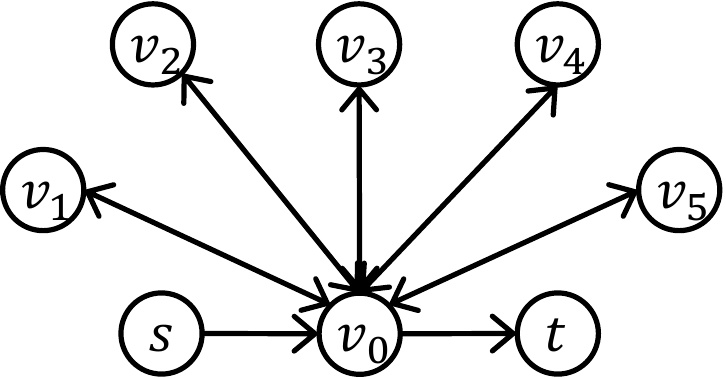}
		\caption{Graph $G_1$.}
		\label{fig:graph_1}
	\end{subfigure}	
	\caption{Sample graphs.}
	\label{fig:sample_graphs}
\end{figure}

\subsection{Cardinality Estimator} \label{sec:coarse-grained_cardinality_estimator}

\textbf{Preliminary Cardinality Estimator.}
Given $q(s, t, k)$ on $G$, the preliminary estimator aims to estimate the size of the search space roughly but quickly. Based on the analysis in Section \ref{sec:dfs_analysis},
the size of the search space can be estimated as $\sum_{1 \leqslant i \leqslant k} |\widetilde{\mathcal{M}}_i|$ where $\widetilde{\mathcal{M}}_{0} = \{(s)\}$ and
$\widetilde{\mathcal{M}}_{i} = \bigcup_{M \in \widetilde{\mathcal{M}}_{i - 1}} \{M \cup \{v\} | v \in  \mathcal{I}_t(M[i - 1], k - i)\}$. Suppose that the average number of immediate partial
results derived from partial results in $\widetilde{\mathcal{M}}_{i}$ is $\gamma_i$. Then, $|\widetilde{\mathcal{M}}_{i + 1}| = \gamma_i \times |\widetilde{\mathcal{M}}_{i}|$.

To calculate the size of the search space, we would like to estimate $\gamma_i$. Given $M \in \widetilde{\mathcal{M}}_{i}$,
the number of immediate partial results derived from $M$ is $|\mathcal{I}_t(M[i], k - L(M) - 1)|$. $M[i]$ must belong to $C_i$ based on Proposition \ref{prop:prune_vertex}.
Then, an intuitive method assessing $\gamma_i$ is to estimate it as the average number of neighbors $v'$ of vertices $v$ in $C_i$ that satisfy $S(v', t| G - \{s\}) \leqslant k - L(M) - 1$,
i.e., $\frac{1}{|C_i|}\sum_{v \in C_i} |\mathcal{I}_t(v, k - L(M) - 1)|$. The estimated value is denoted by $\hat{\gamma}_i$. In total,
the estimated size $\hat{T}$ of the search space is computed by Equation \ref{eq:preliminary_estimation}. The value of $\hat{\gamma}_i$ is a basic statistics of $\mathcal{I}$,
which is collected during the index construction. Then, the time complexity of computing the equation is $O(k ^ 2)$, which incurs a small cost.

\begin{equation} \label{eq:preliminary_estimation}
    \begin{split}
        \hat{T} &= \sum_{1 \leqslant i \leqslant k} |\widetilde{\mathcal{M}}_i| \approx \sum_{0 \leqslant i \leqslant k - 1} \prod_{0 \leqslant j \leqslant i} \hat{\gamma}_j \\
                &= \sum_{0 \leqslant i \leqslant k - 1} \prod_{0 \leqslant j \leqslant i} \frac{1}{|C_j|}\sum_{v \in C_j} |\mathcal{I}_t(v, k - L(M) - 1)|.
    \end{split}
\end{equation}

We compare $\hat{T}$ with a threshold $\tau$. If $\hat{T} > \tau$, then we optimize the join order with the full-fledged cardinality estimator. Otherwise,
we evaluate the query with Algorithm \ref{algo:dfs_on_index}. Therefore, we set $\tau$ such that the cost of the optimization is neglected compared with the search
time when $\hat{T} > \tau$. \SUN{In particular, given $G$, $\tau$ is measured by pre-executing some random queries with Algorithm \ref{algo:dfs_on_index}
and testing $\tau$ from $10$, $10 ^ 2$,..., till the time finding $\tau$ results is longer than the join plan optimization time for most of queries.
In our experiments, we execute 100 queries and set $\tau$ as $10 ^ 5$, which works well in our workloads. This is because the optimization time on these graphs
is generally shorter than the time of finding $10^5$ results. Moreover, if the queries have fewer than $10^5$ results, then the enumeration time
is small (several milliseconds), which makes the gain of optimization limited. As such, we directly use Algorithm \ref{algo:dfs_on_index} to answer them.}

\textbf{Full-fledged Cardinality Estimator.} The full-fledged estimator gives an accurate estimation on the size of the search space. To avoid performing the Cartesian product
of two relations, we require that the two relations in a join operation must have common attributes. Therefore, each sub-query $Q'$ is a sub-chain of $Q$.
$Q[i:j]$ denotes a sub-query $Q' = R_{i + 1}(u_i, u_{i+1}) \Join \dots \Join R_{j}(u_{j - 1}, u_j)$. We want to estimate $|Q[i:j]|$ based on the index.

We first consider a simple case $Q' = Q[i:i+1]$ where $0 \leqslant i < k$, i.e., $Q'$ is a base relation $R_{i + 1}(u_i, u_{i + 1})$. Based on $\mathcal{I}$, we obtain
that $R_{i + 1} = \bigcup_{v \in C_i} \{(v, v')| v' \in \mathcal{I}_t(v, k - i - 1)\}$. Thus, $|Q'| = |R_{i + 1}| = \sum_{v \in C_{i}} |\mathcal{I}_t(v, k - i - 1)|$.
Let $c_{i + 1} ^ {i}(v)$ denote the number of tuples starting with $v$ in $Q[i : i+1]$. Given $Q' = Q[i - 1:i + 1]$, we have
$|Q'| = |R_{i} \Join R_{i+1}| = \sum_{v \in C_{i - 1}}c_{i + 1} ^ {i - 1}(v) = \sum_{v \in C_{i - 1}}\sum_{v' \in \mathcal{I}_t(v, k - i)}c_{i + 1} ^ {i}(v')$. Therefore,
we compute $|Q[i:j]|$ as follows.

\begin{equation} \label{eq:full_fledged_estimation}
        |Q[i:j]| = \sum_{v \in C_i}c_j ^ {i}(v).
\end{equation}

\begin{equation} \label{eq:full_fledged_estimation_details}
     c_j ^ {i}(v) =
        \begin{cases}
                    1    & \text{If $i = j$}. \\
                    \sum_{v' \in \mathcal{I}_t(v, k - i - 1)} c_j ^ {i + 1}(v') & \text{If  $i < j$}.
        \end{cases}
\end{equation}

\SUN{We estimate $|Q[0:k]|$ with a dynamic programming method based on the index $\mathcal{I}$.
Given $0 \leqslant i \leqslant k$, we store $c_k ^ i(v)$ for each vertex $v \in \mathcal{I}(i)$. As such, the space cost
of the full-fledged cardinality estimator is $\sum_{0 \leqslant i \leqslant k}|\mathcal{I}(i)|$. Based on Equation \ref{eq:full_fledged_estimation_details},
we first set $c_k ^ k(v)$ to $1$ for each $v \in \mathcal{I}(k)$. Given $0 \leqslant i \leqslant k$, $|Q[i : k]|$ is equal
to $\sum_{v \in I(i)}c_k ^ i(v)$ where $c_k ^ i (v) = \sum_{v' \in \mathcal{I}_t(v, k-i-1)}c_k ^ {i + 1}(v')$ according
to Equations \ref{eq:full_fledged_estimation} and \ref{eq:full_fledged_estimation_details}. Then, the cost of estimating
$|Q[i:k]|$ based on $Q[i+1:k]$ is $\sum_{v \in I(i)}|\mathcal{I}_t(v, k-i-1)|$.
With the method, we calculate $|Q[0:k]|$ along the order from $k - 1$ to $0$. Therefore, the cost is equal to
$\sum_{0 \leqslant i \leqslant k - 1} \sum_{v \in I(i)}|\mathcal{I}_t(v, k-i-1)|$. As $\mathcal{I}(i) \leqslant |V(G)|$,
the space complexity is $O(k \times |V(G)|)$. Given $v \in \mathcal{I}(i)$, $|\mathcal{I}_t(v, k - i - 1)| \leqslant d(v)$.
As such, $\sum_{v \in \mathcal{I}(i)}|\mathcal{I}_t(v, k-i-1)| \leqslant \sum_{v \in \mathcal{I}(i)}d(v) \leqslant |E(G)|$,
and the time complexity is $O(k \times |E(G)|)$. The number of edges in the index is generally smaller than $|E(G)|$ because
of the filtering. The time complexity can be met when $G$ is a clique. The implementation of the estimator
will be introduced in Algorithm \ref{algo:generate_join_order}.}

\subsection{Join On Index} \label{sec:join_on_index}

\setlength{\textfloatsep}{0pt}
\begin{algorithm}[t]
    \footnotesize
	\caption{Join Order Optimization}
	\label{algo:generate_join_order}
	\SetKwProg{proc}{Procedure}{}{}
	 \KwIn{two distinct vertices $s,t$, hop constraint $k$, index $\mathcal{I}$\;}
	 \KwOut{the cut position $i ^ *$\;}
	 \tcc{Estimate number of paths to $t$.}
	 Set $c_k^i(v)$ as 0 for each $v \in \mathcal{I}(i)$ where $0 \leqslant i \leqslant k - 1$\;
	 Set $c_k^k(v)$ as 1 for each $v \in \mathcal{I}(k)$\;
	 \For{$i \leftarrow k - 1\text{ to }0$}{
	    \ForEach{$v \in \mathcal{I}(i)$, $v' \in \mathcal{I}_t(v, k - i - 1)$}{
	        $c_k ^ i(v) \leftarrow c_k ^ i(v) + c_{k} ^ {i + 1}(v')$\;
	    }
	 }
	 \tcc{Estimate number of paths from $s$.}
	 Set $c_i ^ 0(v)$ as 0 for each $v \in \mathcal{I}(i)$ where $1 \leqslant i \leqslant k$\;
	 Set $c_0 ^ 0(v)$ as 1 for each $v \in \mathcal{I}(0)$\;
	 \For{$i \leftarrow 1 \text{ to }k$}{
	    \ForEach{$v \in \mathcal{I}(i)$, $v' \in \mathcal{I}_s(v, k - i - 1)$}{
	        $c_i ^ 0 (v) \leftarrow c_i ^ 0(v) + c_{i - 1} ^ {0}(v')$\;
	    }
	 }
	 \tcc{Find the cut position $i^*$.}
	 $i^* \leftarrow arg \min_{0 \leqslant i \leqslant k}(\sum_{v \in \mathcal{I}(i)}c_i ^ 0(v) + \sum_{v \in \mathcal{I}(i)}c_k ^ i(v))$\;
    \KwRet $i ^ *$\;
\end{algorithm}

\textbf{Join Order Optimization.} The cost of a join order can be estimated by the cost model in Equation \ref{eq:cost_model} with the full-fledged cardinality estimator.
Because it is prohibitively expensive to enumerate all orders, the number of which is exponential to the number of relations, to minimize the cost,
we design a greedy optimization method. In particular, we minimize Equation \ref{eq:cost_model} in a top-down manner by (1) cutting $Q$ into two
sub-queries $Q[0:i]$ and $Q[i : k]$ such that the sum of $|Q[0:i]|$ and $|Q[i:k]|$ is minimized; and (2) cutting $Q[0:i]$ and $Q[i : k]$ into smaller sub-queries, respectively,
and continuing the process until each sub-query is a base relation.

However, the benefit of optimizing the orders of evaluating $Q[0:i]$ and $Q[i:k]$ is limited for a query with a large search space. Specifically,
a large search space indicates that the number of partial results grows exponentially with the length of the path increasing. Given any sub-query $Q'$ of $Q[0:i]$ (or $Q[i:k]$),
$|Q'|$ is much less than $|Q[0:i]|$ (or $|Q[i:k]|$). Consequently, the last join operation $Q = Q[0:i] \Join Q[i:k]$ dominates the evaluation cost. Therefore,
we simplify the join order optimization as follows: (1) find a cut position $i^*$ of $Q$ such that the sum of $Q[0:i^*]$ and $Q[i^*:k]$ is minimized; (2)
evaluate $Q[0:i^*]$ and $Q[i^*:k]$ with the depth-first search method, respectively; and (3) perform $Q[0:i^*] \Join Q[i^*:k]$ to find final results.

Algorithm \ref{algo:generate_join_order} illustrates the method finding the cut position $i^*$. Given $v \in \mathcal{I}(i)$ (i.e., $C_i$) where $0 \leqslant i \leqslant k$,
Lines 1-5 estimate the number of paths from $v$ to $t$ based on the full-fledged cardinality estimator.
Similarly, Lines 6-10 estimate the number of paths from $s$
to $v$. Finally, Lines 11-12 find the cut position $i ^ *$ such that the sum of $|Q[0:i]|$ and $|Q[i:k]|$ is minimized, and return it.
The time complexity of Algorithm \ref{algo:generate_join_order} is $O(k \times |E(G)|)$ and the space complexity is $O(k \times |V(G)|)$.

\setlength{\textfloatsep}{0pt}
\begin{algorithm}[t]
	\footnotesize
	\caption{Join On Index}
	\label{algo:join_on_index}
	\SetKwFunction{Search}{Search}
	\SetKwProg{proc}{Procedure}{}{}
	\KwIn{two distinct vertices $s,t$, hop constraint $k$, the cut position $i ^ *$, index $\mathcal{I}$\;}
	\KwOut{all $k$ hop-constrained paths from $s$ to $t$\;}
	$R_a \leftarrow \{\}$, $R_b \leftarrow \{\}$\;
	\Search{$M \leftarrow (s), t, 0, k, k - i^*, \mathcal{I}, R_a$}\;
	$C\leftarrow \{r[i^*]|r \in R_a\}$\;
	\ForEach{$v \in C$}{
		\Search{$M \leftarrow (v), t, i^*, k, k - i^* + 1, \mathcal{I}, R_b$}\;	   
	}
	$R \leftarrow R_a \Join_{HJ} R_b$\;
	\ForEach{$r \in R$}{
		\lIf{$r$ is a $k$ hop-constrained path from $s$ to $t$}{\emph{emit}($r$)}
	}
	
	\proc{\Search{$M, t, i, k, l, \mathcal{I}, R$}}{
		\lIf{$|M| = l$}{$R\leftarrow R \cup \{M\}$, \KwRet}
		$v \leftarrow $ the last vertex in $M$\;
		\ForEach{$v' \in \mathcal{I}_t(v, k - i - L(M) - 1)$}{
			\Search{$M \cup \{v'\}, t, i, k, l, \mathcal{I}, R$}\;
		}
	}
\end{algorithm}

After finding the cut position, we compare the cost of the new order with that of Algorithm \ref{algo:dfs_on_index}.
In particular, Algorithm \ref{algo:dfs_on_index} is equivalent to the left-deep join along the order $(R_1, R_2,\dots,R_k)$.
The cost is $T_{DFS} = \sum_{1 \leqslant i \leqslant k} |Q[0:i]|$ based on Equation \ref{eq:cost_model}.
In contrast, the cost of the new order is $T_{JOIN} = |Q| + T(Q[0:i^*]) + T(Q[i^*:k]) = |Q| + \sum_{1 \leqslant i \leqslant i^*}|Q[0:i]| + \sum_{i ^ * < i \leqslant k}|Q[i^*:k]|$.
Based on the intermediate results in Algorithm \ref{algo:generate_join_order}, we can get that $T_{DFS} = \sum_{1 \leqslant i \leqslant k} \sum_{v \in \mathcal{I}(i)}c_i ^ 0(v)$,
while $T_{JOIN} = \sum_{v \in \mathcal{I}(0)}c_k ^ 0 (v) + \sum_{1 \leqslant i  \leqslant i ^ *}\sum_{v \in \mathcal{I}(i)}c_i ^ 0(v) + \sum_{i ^ * \leqslant i  \leqslant k}\sum_{v \in \mathcal{I}(i)}c_k ^ i(v)$. If $T_{DFS} < T_{JOIN}$, then we adopt Algorithm \ref{algo:dfs_on_index}. Otherwise, we evaluate the query with the join-based method, which is introduced
in Algorithm \ref{algo:join_on_index}.

\textbf{Join Implementation.}
Algorithm \ref{algo:join_on_index} presents the join-based method on the index. $R_a$ and $R_b$ store the results of evaluating $Q[0:i^*]$ and $Q[i^*:k]$, respectively (Line 1). 
We first find the results of $Q[0:i^*]$ with a depth-first search from $s$ (Line 2). $M$ is a sequence of vertices. If $M$ contains $l$ vertices, then we add it to $R$ and return (Line 10).
Otherwise, we loop over the neighbors $v'$ of the last vertex $v$ in $M$ such that $S(v', t|G-\{s\}) \leqslant k - i - L(M) - 1$, add $v'$ to $M$, and continue the search (Lines 11-13).
After that, Line 3 collects all vertices appearing in the last position of tuples in $R_a$, i.e., the values of the join key $Q[i ^ *]$. Next, we find the results of $Q[i^*:k]$ with
a depth-first search from each $v \in C$ (Lines 4-5). Finally, we perform the hash join of $R_a$ and $R_b$, and output the valid path (Lines 6-8). In practical implementation, we
check whether a result is a valid path when performing the join operation.

\subsection{Analysis}

Algorithm \ref{algo:join_on_index} satisfies Proposition \ref{prop:join_walk}. Based on the proposition, we analyze its space and time complexities. 

\begin{proposition} \label{prop:join_walk}
    Each partial result $M$ generated by the \emph{Search} procedure appears in a tuple in $R$. Each tuple in $R$ corresponds to a walk in $\mathcal{W}(s, t, k, G)$.
\end{proposition}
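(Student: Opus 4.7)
The plan is to establish both claims by analyzing the invariants maintained by the inner Search procedure together with the index, then combining these with the behavior of the hash join on line 6. The structure of the argument closely parallels Proposition \ref{prop:walk_correctness}, which treated the DFS case on the index; the key extra ingredient here is understanding how the two independently produced walk-fragments in $R_a$ and $R_b$ compose across the cut position $i^*$.

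For the claim that each tuple in $R$ corresponds to a walk in $\mathcal{W}(s, t, k, G)$, I would first argue by induction on $L(M)$ that every tuple added to $R_a$ (respectively $R_b$) by the inner Search is itself a walk in $G$ (taking the self-loop at $t$ that $\mathcal{I}$ encodes into account, as in Section \ref{sec:join_based_model}). The base case is the initial invocation with $M = (s)$ or $M = (v)$, which is a trivial length-zero walk. For the inductive step, Search extends $M$ only by some $v' \in \mathcal{I}_t(v, k - i - L(M) - 1)$; by construction of $\mathcal{I}$ in Algorithm \ref{algo:build_index}, such a $v'$ is either a genuine out-neighbor of $v$ in $G$ or the self-loop entry for $t$, so the extension preserves the walk property. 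The hop-budget check further ensures that the length of the resulting walk (ignoring padding self-loops at $t$) never exceeds $k - i^*$ on the $R_a$ side and $k - i^*$ on the $R_b$ side. When the hash join is performed on $r_a[i^*] = r_b[0]$, the two walks share an endpoint and concatenate into a contiguous walk from $s$ to $t$ whose effective length is at most $k$, placing the result in $\mathcal{W}(s, t, k, G)$.

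For the first claim, I would argue by downward induction on $l - |M|$ within a single top-level invocation of Search. The base case $|M| = l$ is immediate because $M$ is inserted into $R$ on line 10. For $|M| < l$, it suffices to exhibit at least one $v' \in \mathcal{I}_t(v, k - i - L(M) - 1)$ so that the recursive call fires and, by the inductive hypothesis, produces a tuple that contains $M \cup \{v'\}$ (and hence $M$) as a prefix. The existence of such a $v'$ follows from the construction of $\mathcal{I}$: any vertex $v$ that appears as the last element of a partial result reached by Search lies in $\mathcal{I}(i + L(M))$, meaning $v.s + v.t \leqslant k$, and the pruning rule $v.s + v'.t + 1 \leqslant k$ used in Algorithm \ref{algo:build_index} guarantees that $v$ retains at least one outgoing entry in $H$ matching the remaining budget (with the $t$ self-loop providing the ultimate fallback once $t$ is reached).

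The main obstacle will be making the existence-of-extension argument fully rigorous across the entire recursion: specifically, showing that the hop-budget parameter passed to $\mathcal{I}_t$ at each recursive call is consistent with the distance bookkeeping that Algorithm \ref{algo:build_index} performed when inserting $v$ into the index, so no call to Search ever encounters an empty neighborhood before reaching length $l$. A secondary subtlety is the handling of the special vertex $t$ and its self-loop in $\mathcal{I}$, which is what allows sub-walks shorter than the nominal length $l$ to still be represented; this is benign because Algorithm \ref{algo:join_on_index} defers path validation (duplicate-vertex elimination) to the final emit step, so the natural invariant throughout the Search and join phases is walks rather than paths.
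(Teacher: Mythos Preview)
Your approach is correct and mirrors the paper's, which simply defers to Proposition~\ref{prop:walk_correctness} (the paper writes only that Proposition~\ref{prop:join_walk} can be proved ``with the same method'' and omits all details). Both arguments rest on the same invariant: the last vertex $v$ of any partial result $M$ produced by \emph{Search} satisfies $S(v,t\mid G-\{s\})\le k-i-L(M)$, so the first vertex on a shortest $v$--$t$ path in $G-\{s\}$ lies in $\mathcal{I}_t(v,k-i-L(M)-1)$ and the recursion never stalls before $|M|=l$. Your downward induction on $l-|M|$ and the paper's contradiction-plus-shortest-path construction are equivalent phrasings of this step; your treatment of the hash-join concatenation for the second claim is actually more explicit than anything the paper provides. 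One small slip: the edge-length bound on the $R_a$ side is $i^*$, not $k-i^*$ (the two fragments have lengths $i^*$ and $k-i^*$, summing to $k$ after identifying the shared cut vertex); this is a bookkeeping typo rather than a gap in the argument.
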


\textbf{Space.} Algorithm \ref{algo:join_on_index} maintains intermediate results of evaluating $Q[0 : i^*]$ and $Q[i^* : k]$, which are $R_a$ and $R_b$, respectively.
Based on Proposition \ref{prop:join_walk}, each tuple in $R_a$ (or $R_b$) appears in a result of $R$. Therefore, the sizes of $R_a$ and $R_b$ are less than or equal to $|R| = |\mathcal{W}(s, t, k, G)|$,
and the space complexity is $O(k \times \delta_W)$.

\textbf{Time.} Because each partial result $M$ appears in $R_a$ (or $R_b$), the time complexity of evaluating $Q[0:i^*]$ and $Q[i^*:k]$ are $O(|R_a| \times (i^* + 1))$ and
$O(|R_b| \times (k - i^* + 1))$, respectively. The time complexity of a hash join is $O(|IN| + |OUT|)$ where $|IN|$ and $|OUT|$ are the sizes of the input and output, respectively.
Therefore, the cost of evaluating $R = R_a \Join_{HJ} R_b$ is $O(|R_a| \times (i^* + 1) + |R_b| \times (k - i^* + 1) + |R| \times k)$. As both $|R_a|$ and $|R_b|$ are less than or
equal to $|R|$, the cost is $O(k \times |R|) = O(k \times \delta_W)$. In summary, the time complexity of Algorithm \ref{algo:join_on_index} is $O(k \times \delta_W)$.

\SUN{\textbf{Discussion.} As discussed in Section \ref{sec:related_work}, existing cardinality estimation methods generally take catalogs or relations as
the input \cite{park2020g}, which cannot be directly applied to our light-weight index. Our full-fledged estimator works closely with the query-dependent index that
rules out many invalid edges that cannot appear in any results of the given query. Therefore, it is expected to give more accurate estimation
than working on the original graph or the global statistics of $G$. The method estimates the cardinality based on Equations \ref{eq:full_fledged_estimation} and \ref{eq:full_fledged_estimation_details}, which calculates the number of walks from $s$ to $t$ ($\delta_W$). As a result,
if the gap between the number of walks ($\delta_W$) and the number of paths ($\delta_P$) is small, then our method can give an accurate estimation.
Otherwise, the method can introduce some errors. Our extensive experiment results in the appendix show that the estimation
method works well in practice.}

\section{Experiments} \label{sec:experiments}

\subsection{Experimental Setup}

All experiments are conducted in a Linux machine equipped with two Intel Xeon E5-2660 v2 CPUs and 64GB RAM. The graph is first loaded entirely into the main memory
from the disk, and we focus on the scenario of queries on in-memory graphs. Thus we exclude the time on disk I/O.

\textbf{Datasets.} Table \ref{tb:real_world_graphs} lists the details of the 15 real-world graphs, most of which are used in previous work \cite{peng2019towards}. These graphs are from a variety of categories such as social networks, web graphs and biology graphs. The number of vertices ranges from thousands to tens of millions, and the number of edges varies from hundreds of thousands to billions. We use \emph{tm}, a graph with billions of edges, to evaluate the scalability of our algorithm.

\textbf{Queries.} For each graph $G$, we generate four query sets each of which contains 1,000 queries. Different query sets vary in the number of query results as well as search space in enumeration. Specifically, we divide $V(G)$ into two disjoint sets $V'$ and $V''$ based on the vertex degrees:
(1) $V'$ is the set of vertices within top 10\% in the descending order of their degrees; and (2) $V''$ is the remaining ones $V(G)$, excluding $V'$. Then, we have four settings
according to the locations of $s$ and $t$: $\{V', V''\} \times \{V', V''\}$. For each setting, we generate 1,000 queries by choosing $s$ and $t$ uniformly at random. We vary the hop-constraint $k$
from 3 to 8 in our experiments. To guarantee that there exists at least one result, we ensure that the distance between $s$ and $t$ is no larger than 3. We add this constraint because
the query is terminated by a breadth-first search if these is no result, which makes the enumeration problem trivial.
The query set where both $s$ and $t$ belong to $V'$ is generally more challenging than the other three query sets because there are more paths between vertices with large degrees. Therefore, we report the experiment results on the query set where $s, t \in V'$ and $k=6$ by default.

\setlength{\textfloatsep}{0pt}
\begin{table}[t]
  \small
  \centering
  \setlength{\abovecaptionskip}{0pt}
  \setlength{\belowcaptionskip}{0pt}
  \caption{Properties of real-world graphs.}
  \label{tb:real_world_graphs}
  \resizebox{0.48\textwidth}{!}{
\begin{tabular}{ccccccc}
\hline
\textbf{Name} &  \textbf{Dataset} & \textbf{|V|} & \textbf{|E|} & \textbf{$d_{avg}$} & \textbf{Type} \\ \hline
\emph{up}     &  US Patents\footnotemark[1]      & 4M        & 17M     &8.8     & Citation               \\
\emph{db}     &  DBpedia\footnotemark[2]        & 4M        &14M      &6.5     & Miscellaneous  \\
\emph{gg}     &  Web-google\footnotemark[1]       & 876K      & 5M      &11.1     & Web          \\
\emph{st}     &  Web-standford\footnotemark[1]  & 282K      & 2.3M    &16.4     & Web \\
\emph{tw}     &  Twitter-social\footnotemark[2]  & 465K      & 835K    &3.6     & Miscellaneous \\
\emph{bk}     &  Baidu-baike\footnotemark[2]      & 416K      & 3M      &15.8     & Web           \\
\emph{tr}     &  Wiki-trust\footnotemark[2]       & 139K      & 740K    &10.7     & Interaction \\
\emph{ep}     &  Soc-Epinsion1\footnotemark[1]    & 75K       & 508K    &13.4     & Social       \\
\emph{uk}     &  Web-uk-2005\footnotemark[2]     & 121K      & 334K    &181.2     & Web \\
\emph{wt}     &  WikiTalk\footnotemark[2]        & 2M        & 5M      &4.2     & Miscellaneous\\
\emph{sl}     &  Soc-Slashdot0922\footnotemark[1] & 82K       & 948K    &21.2   & Social \\
\emph{lj}     &  LiveJournal\footnotemark[1]     & 5M        & 69M     &28.3    & Social      \\
\emph{da}     &  Rec-dating\footnotemark[2]       & 169K      & 17M     &205.7     & Recommendation \\
\emph{ye}     &  Bio-grid-yeast\footnotemark[2] & 6K        & 314K    &104.5     & Biological \\  
\emph{tm}     &   Twitter-mpi\footnotemark[2]    & 52M       & 1.96B   &74.7     & Miscellaneous \\ \hline
\end{tabular}
}
\end{table}
\footnotetext[1]{http://snap.stanford.edu/data/}
\footnotetext[2]{http://networkrepository.com/networks.php}

\textbf{Metrics.} For each algorithm, we measure the \emph{query time}, \emph{throughput} and \emph{response time}
to process a query. The query time is the elapsed time from the beginning of a query to its end. The response time
is the elapsed time from the beginning of a query to finding the first 1000 results. Both of them are measured in milliseconds (ms).
The throughput is the number of results found per second. We report the arithmetic mean of these metrics on a query set
unless otherwise specified. To complete our experiments in a reasonable time, we set the time limit for a query as two minutes
($1.2 \times 10 ^ 5$ ms). If the query cannot be completed within the time limit, we terminate it and set its
query time as two minutes. The throughput is calculated based on the number of results found when the query is terminated.

\setlength{\textfloatsep}{0pt}
\begin{table*}[t]
  \small
 \centering
  \setlength{\abovecaptionskip}{0pt}
  \setlength{\belowcaptionskip}{0pt}
  \caption{Overall comparison of competing algorithms on different graphs. The star symbol besides the query time
  denotes that the algorithm runs out of time on $>20\%$ queries. The algorithm performing the best on each graph is marked in bold.}
  \label{tb:overall_comparison}
  \resizebox{0.96\textwidth}{!}{
\begin{tabular}{c|ccccc|ccccc|cc}
\hline
\textbf{Dataset} & \multicolumn{5}{c|}{\textbf{Query Time (ms)}}  & \multicolumn{5}{c|}{\textbf{Throughput (\#Results Per Second)}} & \multicolumn{2}{c}{\textbf{Response Time (ms)}}\\ \hline
 &BC-DFS  &BC-JOIN  &IDX-DFS  &IDX-JOIN  &PathEnum &BC-DFS  &BC-JOIN  &IDX-DFS  &IDX-JOIN  &PathEnum  &BC-DFS &IDX-DFS\\ \hline
\emph{up} &5.75e+0  &4.26e+0  &2.75e-1  &2.41e+1  &\textbf{2.28e-1}  &1.46e+3 &1.97e+3 &3.06e+4 &3.50e+2 &\textbf{3.68e+4} &5.75e+0 &\textbf{2.75e-1} \\
\emph{db} &1.13e+1  &1.05e+1  &7.97e-1  &4.06e+1  &\textbf{6.22e-1}  &1.14e+4 &1.24e+4 &1.63e+5 &3.20e+3 &\textbf{2.09e+5} &1.13e+1 &\textbf{7.97e-1} \\
\emph{gg} &1.83e+2  &6.64e+1  &\textbf{9.67e-1}  &8.08e+0  &1.16e+0  &9.38e+4 &2.58e+5 &\textbf{1.77e+7} &2.12e+6 &1.48e+7 &4.65e+1 &\textbf{6.67e-1} \\
\emph{st} &3.67e+3  &4.05e+2  &4.44e+0  &4.95e+0  &\textbf{3.28e+0}  &4.98e+4 &5.29e+5 &4.82e+7 &4.32e+7 &\textbf{6.52e+7} &1.21e+2 &\textbf{1.32e+0} \\
\emph{tw} &3.35e+2  &4.16e+2  &\textbf{1.72e+0}  &2.96e+0  &1.78e+0  &5.60e+1 &4.51e+1 &\textbf{1.09e+4} &6.33e+3 &1.05e+4 &3.35e+2 &\textbf{1.72e+0} \\
\emph{bk} &7.08e+3  &2.63e+3  &9.19e+1  &\textbf{7.57e+1}  &9.29e+1  &5.25e+4 &7.29e+5 &1.88e+8 &\textbf{2.29e+8} &1.87e+8 &4.68e+2 &\textbf{2.14e+0} \\
\emph{tr} &9.88e+4* &1.17e+4  &2.39e+2  &1.00e+2  &\textbf{9.83e+1}  &1.15e+4 &8.33e+5 &5.26e+7 &1.26e+8 &\textbf{1.28e+8} &1.07e+3 &\textbf{1.76e+1} \\
\emph{ep} &1.06e+5* &2.34e+4  &6.55e+2  &\textbf{2.78e+2}  &3.79e+2  &1.34e+4 &1.04e+6 &8.45e+7 &\textbf{2.00e+8} &1.46e+8 &7.35e+2 &\textbf{1.28e+1} \\
\emph{uk} &4.87e+4* &4.47e+4* &3.88e+3  &4.68e+3  &\textbf{3.84e+3}  &7.95e+5 &9.85e+5 &3.21e+8 &2.45e+8 &\textbf{3.24e+8} &1.61e+1 &\textbf{4.23e-1} \\
\emph{wt} &1.05e+5* &3.23e+4  &1.70e+3  &5.14e+2  &\textbf{4.79e+2}  &5.33e+3 &6.20e+5 &3.49e+7 &1.17e+8 &\textbf{1.26e+8} &1.08e+4 &\textbf{1.58e+2} \\
\emph{sl} &1.20e+5* &6.10e+4* &2.76e+3  &7.51e+2  &\textbf{7.18e+2}  &1.43e+4 &1.02e+6 &5.02e+7 &1.85e+8 &\textbf{1.93e+8} &1.42e+3 &\textbf{3.81e+1} \\
\emph{lj} &1.20e+5* &1.20e+5* &8.50e+2  &6.39e+2  &\textbf{4.99e+2}  &1.35e+3 &2.38e+4 &1.69e+7 &2.24e+7 &\textbf{2.88e+7} &1.57e+5 &\textbf{4.38e+2} \\
\emph{da} &1.20e+5* &1.20e+5* &1.26e+4  &3.84e+3  &\textbf{3.32e+3}  &2.10e+3 &4.14e+5 &2.88e+7 &1.19e+8 &\textbf{1.36e+8} &4.13e+4 &\textbf{5.78e+2} \\
\emph{ye} &1.20e+5* &1.20e+5* &7.88e+4* &1.18e+5* &\textbf{6.46e+4}  &6.67e+4 &9.40e+5 &1.87e+8 &4.44e+7 &\textbf{2.34e+8} &3.86e+2 &\textbf{1.01e+1} \\ \hline
\end{tabular}
}
\vspace*{-10pt}
\end{table*}

\textbf{Comparisons.} We study the following algorithms in comparison with PathEnum. We obtain the source code of BC-DFS and BC-JOIN from their original
authors \cite{peng2019towards}. All the competing algorithms are implemented in C++. We compile the code with g++ 7.3.1 with -O3 enabled.

\begin{itemize}[noitemsep,topsep=0pt]
    \item BC-DFS \cite{peng2019towards}: The state-of-the-art polynomial delay method.
    \item BC-JOIN \cite{peng2019towards}: A join-oriented algorithm based on BC-DFS.
    \item IDX-DFS: The proposed depth-first search method.
    \item IDX-JOIN: The proposed join method on the index.
\end{itemize}

The comparison between IDX-DFS/IDX-JOIN and PathEnum is to demonstrate the effectiveness of our cost-based selection. Also note, Peng et al. \cite{peng2019towards} showed that BC-DFS and BC-JOIN outperform T-DFS \cite{rizzi2014efficiently}, T-DFS2 \cite{grossi2018efficient}, KRE \cite{gao2010fast}, KPJ \cite{chang2015efficiently}
and HPI \cite{qiu2018real} by orders of magnitude.


\subsection{Comparison with Existing Algorithms} \label{sec:comparison_with_existing_algorithms}

\textbf{Overall Comparison.} Table \ref{tb:overall_comparison} gives an overall comparison of competing algorithms on different graphs.
We only report the response time of BC-DFS and IDX-DFS because the join-based methods have to obtain the results of each sub-query
before computing the final results, which have a long response time. As shown in the table, the query time on different graphs
varies greatly, which ranges from less than one millisecond to tens of seconds.


\emph{PathEnum vs. BC-DFS/BC-Join.} Our algorithms significantly outperform counterparts on all graphs, especially those with long query time.
For example, IDX-DFS runs 61X faster than BC-DFS on \emph{wt} in terms of query time and achieves 6547X
speedup in terms of throughput. The performance gap is different in terms of query time and throughput because
BC-DFS runs out of time on a number of queries. Additionally, we can see that BC-DFS runs out of time on more than 20\% queries
on a number of graphs, while our algorithms complete most of queries. IDX-DFS spends less than 1 second to find 1000 results
on all graphs, and achieves more than one order of magnitude speedup over BC-DFS in terms of response time.


\emph{IDX-DFS vs. IDX-JOIN.} IDX-DFS outperforms IDX-JOIN on graphs with short queries, but generally runs slower on graphs with long queries. For example,
IDX-DFS achieves up to two orders of magnitude speedup over IDX-JOIN on \emph{up} because there is a small number of results
(i.e., the search space is small) and the time spent on generating join orders can dominate the query time. In contrast,
IDX-JOIN achieves more than two times speedup over IDX-DFS on \emph{tr} and \emph{ep}. The results demonstrate that
optimizing the join order can significantly accelerate the query. 

\emph{Impact of cost optimizer.} PathEnum generally outperforms both IDX-DFS and IDX-JOIN,
especially on graphs with long query time. For example, PathEnum reduces both the query time and the number of queries
running out of time on \emph{ye}. The results prove the effectiveness of our query optimizer.
In a small number of cases, PathEnum can runs slightly slower than IDX-DFS or IDX-JOIN. 
This is because our cost model only considers the impact of the number of partial results, whereas some other factors (e.g.,
the overhead of materialization and the cost of checking whether a vertex belongs to $M$)
can affect the practical performance. 



In the following, we select \emph{ep} and \emph{gg} as representative graphs to demonstrate experiment results.
\emph{ep} takes long query time, while \emph{gg} takes short query time.

\textbf{Detailed Metrics.} To compare the pruning techniques,
we examine the detailed metrics of BC-DFS and IDX-DFS, which includes (1) the number of invalid partial results (\emph{\#Invalid}),
which are the partial results that do not appear in any path in $\mathcal{P}(s, t, k, G)$; (2) the number of edges accessed
(\emph{\#Edges}) during the enumeration; and (3) the number of results reported (\emph{\#Results}). Figure \ref{fig:overall_detailed_metrics}
presents the experiment results. We make the following observations. First, the number of edges accessed by
BC-DFS is around 100 times as many as that by IDX-DFS, which shows the effectiveness of our index. 
The gap narrows on \emph{ep} with $k$ varied from 6 to 8 because BC-DFS runs out of time on most queries and finds fewer results
than IDX-DFS. Second, the number of invalid partial results generated by the studied approaches is very close, which indicates that
the pruning techniques in BC-DFS provide limited extra pruning power compared with simply using the distance to $t$ in our method. 
Moreover, the number of invalid partial results accounts for a small portion of results. This implies that
the benefit of adopting complex pruning techniques during the enumeration to reducing the invalid partial results is limited.

\begin{figure}[h]\small
	\setlength{\abovecaptionskip}{0pt}
	\setlength{\belowcaptionskip}{0pt}
	\captionsetup[subfigure]{aboveskip=0pt,belowskip=0pt}
	\centering
	\begin{subfigure}[t]{0.23\textwidth}
		\centering
		\includegraphics[scale=0.23]{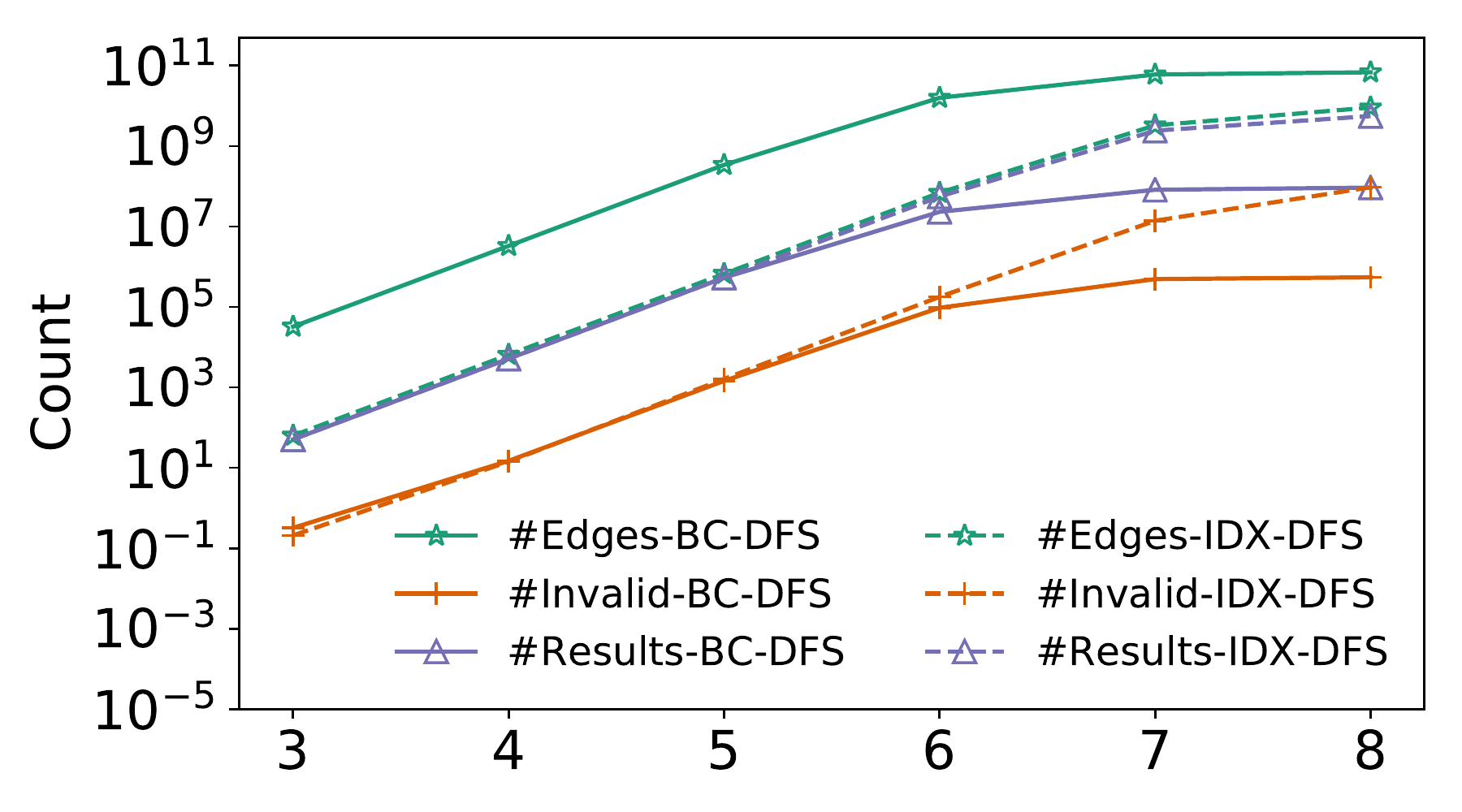}
		\caption{\emph{ep}.}
		\label{fig:detailed_metrics_socepinsion}
	\end{subfigure}
	\begin{subfigure}[t]{0.23\textwidth}
		\centering
		\includegraphics[scale=0.23]{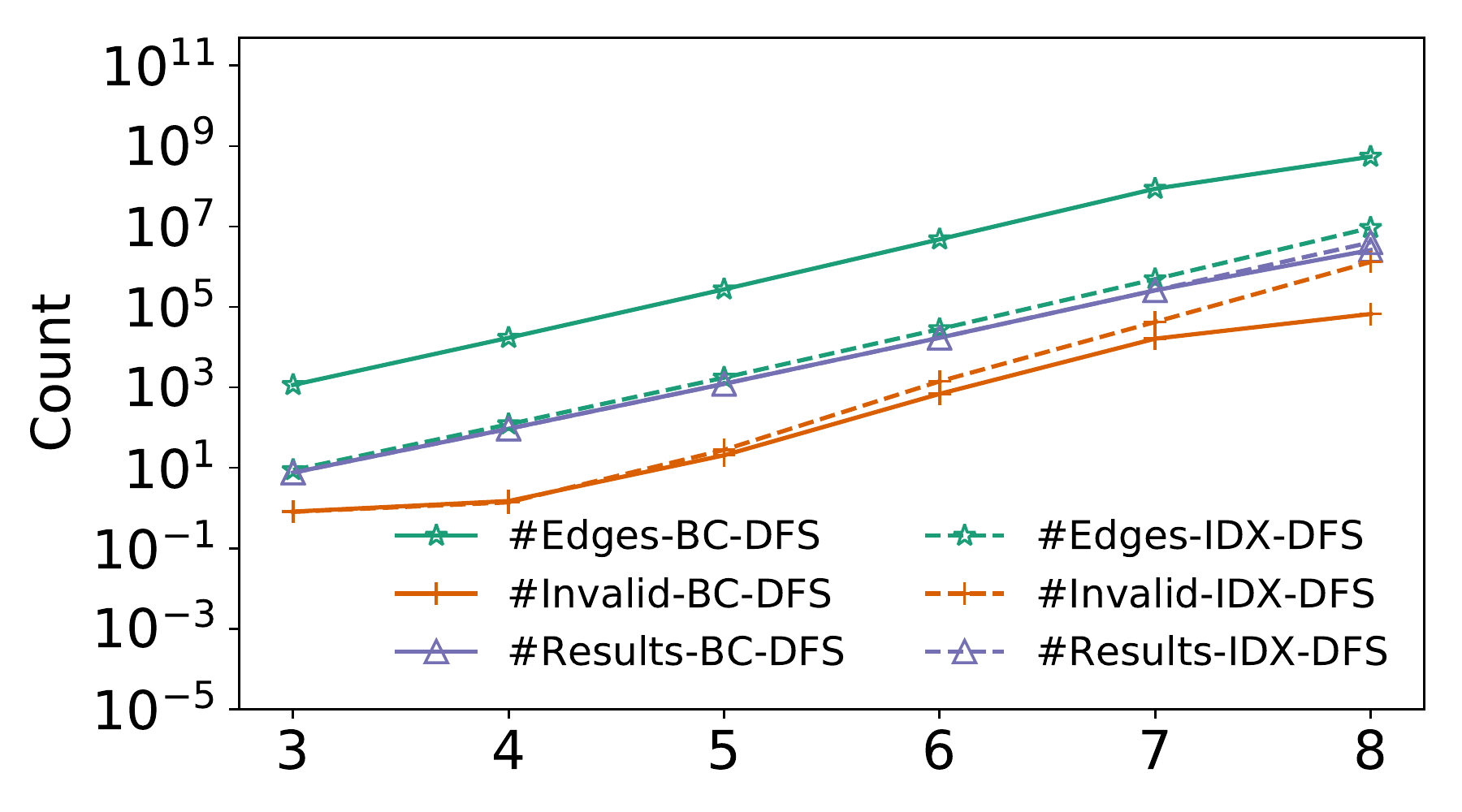}
		\caption{\emph{gg}.}
		\label{fig:detailed_metrics_webgoogle}
	\end{subfigure}
	\caption{Comparison of detailed metrics with $k$ varied.}
	\label{fig:overall_detailed_metrics}
\end{figure}

\SUN{\textbf{Query Time Breakdown.} Figure \ref{fig:query_time_breakdown} presents the query time breakdown of BC-DFS and IDX-DFS on \emph{ep}
and \emph{gg}. The \emph{preprocessing time} is the time on building index, while the \emph{enumeration time} is that on enumerating results. As shown in the figure, the preprocessing
dominates the query time when $k$ is small. IDX-DFS runs much faster than BC-DFS on both the preprocessing and enumeration (Note that y-axis is log-scale and
we terminate a query when it runs out of time).
The elapsed time of BC-DFS and IDX-DFS is close on \emph{ep} when $k = 8$ because a number of queries run out of time.}

\begin{figure}[h]\small
    \setlength{\abovecaptionskip}{0pt}
    \setlength{\belowcaptionskip}{0pt}
    \captionsetup[subfigure]{aboveskip=0pt,belowskip=0pt}
    \centering
    \begin{subfigure}[t]{0.23\textwidth}
        \centering
        \includegraphics[scale=0.23]{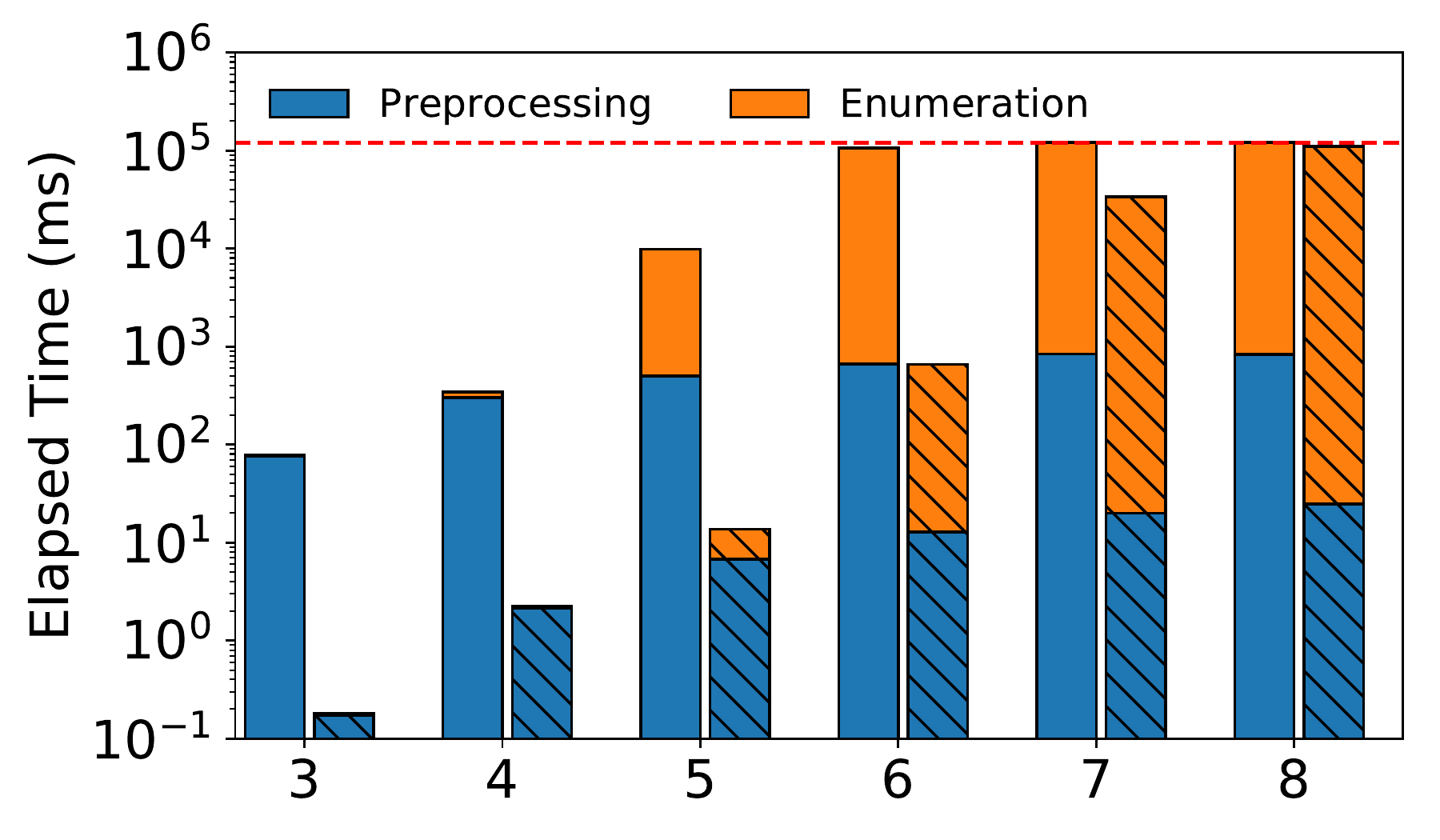}
        \caption{\emph{ep}.}
        \label{fig:query_time_breakdown_socepinsion}
    \end{subfigure}
    \begin{subfigure}[t]{0.23\textwidth}
        \centering
        \includegraphics[scale=0.23]{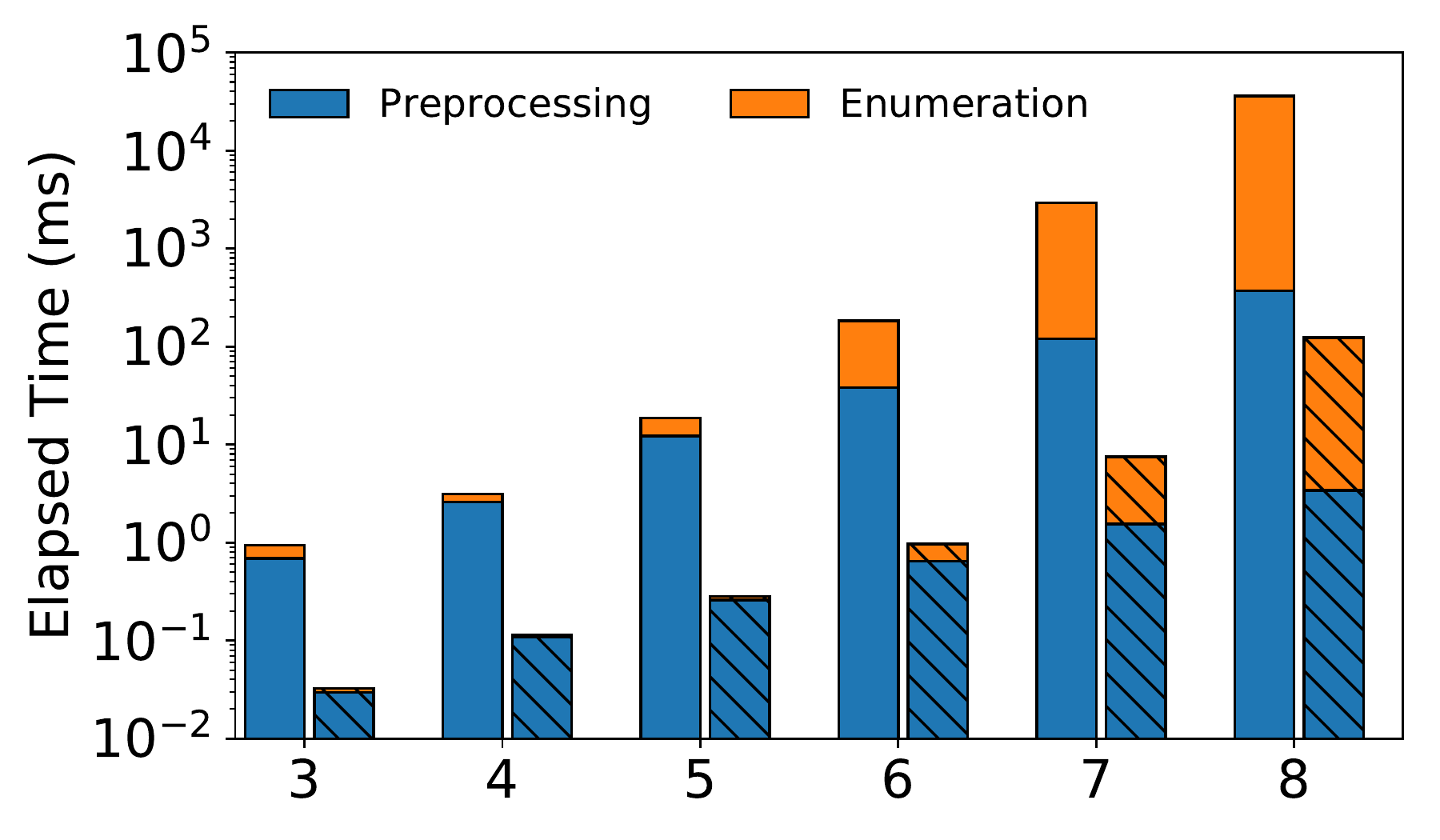}
        \caption{\emph{gg}.}
        \label{fig:query_time_breakdown_webgoogle}
    \end{subfigure}
    \caption{\SUN{Query time breakdown of BC-DFS (without hatches) and IDX-DFS (with hatches) with $k$ varied.}}
    \label{fig:query_time_breakdown}
\end{figure}

\SUN{\textbf{Query Time Distribution.} Table \ref{tb:query_time_distribution} counts the percentage of queries that can be completed within 60 seconds
(<60s) and that run out of time (>120s). Others can be finished between 60 seconds and 120 seconds.
We can see that the number of queries running out of time increases with $k$ varied from 3 to 8 on \emph{ep}. IDX-DFS significantly
outperforms BC-DFS, especially when $k$ is large. For example, IDX-DFS completes 23.1\% queries within 60 seconds, whereas BC-DFS only
completes 0.1\% queries. Furthermore, IDX-DFS completes all queries on \emph{gg} within 60 seconds.}

\setlength{\textfloatsep}{0pt}
\begin{table}[h]
 \small
 \centering
  \setlength{\abovecaptionskip}{0pt}
  \setlength{\belowcaptionskip}{0pt}
\caption{\SUN{Query time distribution on \emph{ep} and \emph{gg}.}}
\label{tb:query_time_distribution}
\begin{tabular}{c|cc|cc|cc|cc}
\hline
    & \multicolumn{4}{c|}{\emph{ep}}                                                   & \multicolumn{4}{c}{\emph{gg}}                                                   \\ \hline
    & \multicolumn{2}{c|}{BC-DFS}         & \multicolumn{2}{c|}{IDX-DFS}        & \multicolumn{2}{c|}{BC-DFS}         & \multicolumn{2}{c}{IDX-DFS}        \\ \hline
$k$ & \textless{}60s & \textgreater{}120s & \textless{}60s & \textgreater{}120s & \textless{}60s & \textgreater{}120s & \textless{}60s & \textgreater{}120s \\ \hline
3   & 1.00           & 0.00               & 1.00           & 0.00               & 1.00           & 0.00               & 1.00           & 0.00               \\
4   & 1.00           & 0.00               & 1.00           & 0.00               & 1.00           & 0.00               & 1.00           & 0.00               \\
5   & 0.959          & 0.016              & 1.00           & 0.00               & 1.00           & 0.00               & 1.00           & 0.00               \\
6   & 0.130          & 0.813              & 1.00           & 0.00               & 1.00           & 0.00               & 1.00           & 0.00               \\
7   & 0.003          & 0.997              & 0.877          & 0.066              & 0.994          & 0.003              & 1.00           & 0.00               \\
8   & 0.001          & 0.999              & 0.231          & 0.674              & 0.749          & 0.138              & 1.00           & 0.00               \\ \hline
\end{tabular}
\end{table}

\SUN{\textbf{Performance on Outlier Queries (queries running out of time).}
Moreover, we evaluate the performance of BC-DFS and IDX-DFS on short running queries (<60s) and long running
queries (>120s), respectively. Table \ref{tb:outlier_performance} presents throughput and response time on \emph{ep} with $k = 8$.
IDX-DFS runs much faster than BC-DFS in terms of both throughput and response time. The response time of IDX-DFS on short and long running queries is very close
and the value is small. Moreover, IDX-DFS has a high throughput on both short and long running queries, which indicates that the enumeration is efficient.
Therefore, IDX-DFS cannot complete the outlier queries because these queries have a large number of results.}

\setlength{\textfloatsep}{0pt}
\begin{table}[h]
 \small
 \centering
  \setlength{\abovecaptionskip}{0pt}
  \setlength{\belowcaptionskip}{0pt}
\caption{\SUN{Performance for queries with different query time on \emph{ep} with $k = 8$.}}
\label{tb:outlier_performance}
\begin{tabular}{c|cc|cc}
\hline
                & \multicolumn{2}{c|}{\textbf{Throughput}} & \multicolumn{2}{c}{\textbf{Response Time (ms)}} \\ \hline
\textbf{Method} & \textless{}60s    & \textgreater{}120s   & \textless{}60s        & \textgreater{}120s       \\ \hline
BC-DFS          & 1.52e+04          & 8.65e+03             & 6.11e+02              & 3.03e+03                 \\
IDX-DFS         & 7.83e+06          & 5.13e+07             & 2.12e+01              & 2.52e+01                 \\ \hline
\end{tabular}
\end{table}

\SUN{\textbf{Performance on Dynamic Graphs.} We compare the performance of BC-DFS and IDX-DFS on dynamic graphs. Following experiments in \cite{peng2019towards},
we randomly select 10\% edges of \emph{ep} and \emph{gg} as updates and keep subgraphs on remaining edges as initial graphs.
For each selected edge $e(v, v')$, we set $v'$ and $v$ as $s$ and $t$, respectively, and enumerate the hop-constrained paths. As the index
is built for each query online, our method can directly process dynamic graphs. We examine the 99.9\% latency of BC-DFS and IDX-DFS
in terms of the response time. Figure \ref{fig:latency} presents the results on \emph{ep} and
\emph{gg} with $k$ varied. As show in the figure, IDX-DFS significantly outperforms BC-DFS.
The 99.9\% latency of IDX-DFS on \emph{ep} with $k$ varied from $3$ to $7$ is within 0.1s. The 99.9\% latency of IDX-DFS on \emph{gg}
is less than 0.1s.}

\begin{figure}[h]\small
    \setlength{\abovecaptionskip}{0pt}
    \setlength{\belowcaptionskip}{0pt}
    \captionsetup[subfigure]{aboveskip=0pt,belowskip=0pt}
    \centering
    \begin{subfigure}[t]{0.23\textwidth}
        \centering
        \includegraphics[scale=0.23]{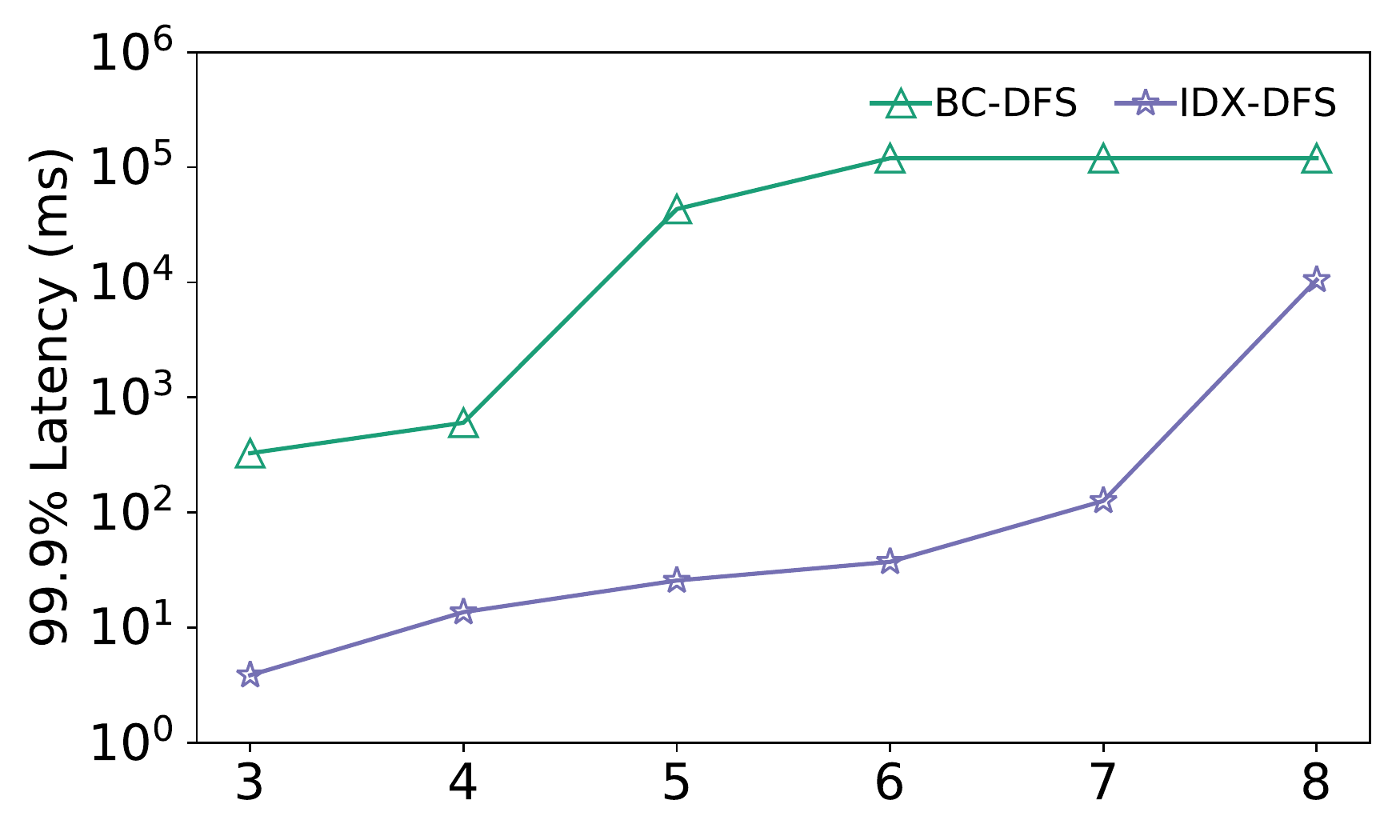}
        \caption{\emph{ep}.}
        \label{fig:latency_ep}
    \end{subfigure}
    \begin{subfigure}[t]{0.23\textwidth}
        \centering
        \includegraphics[scale=0.23]{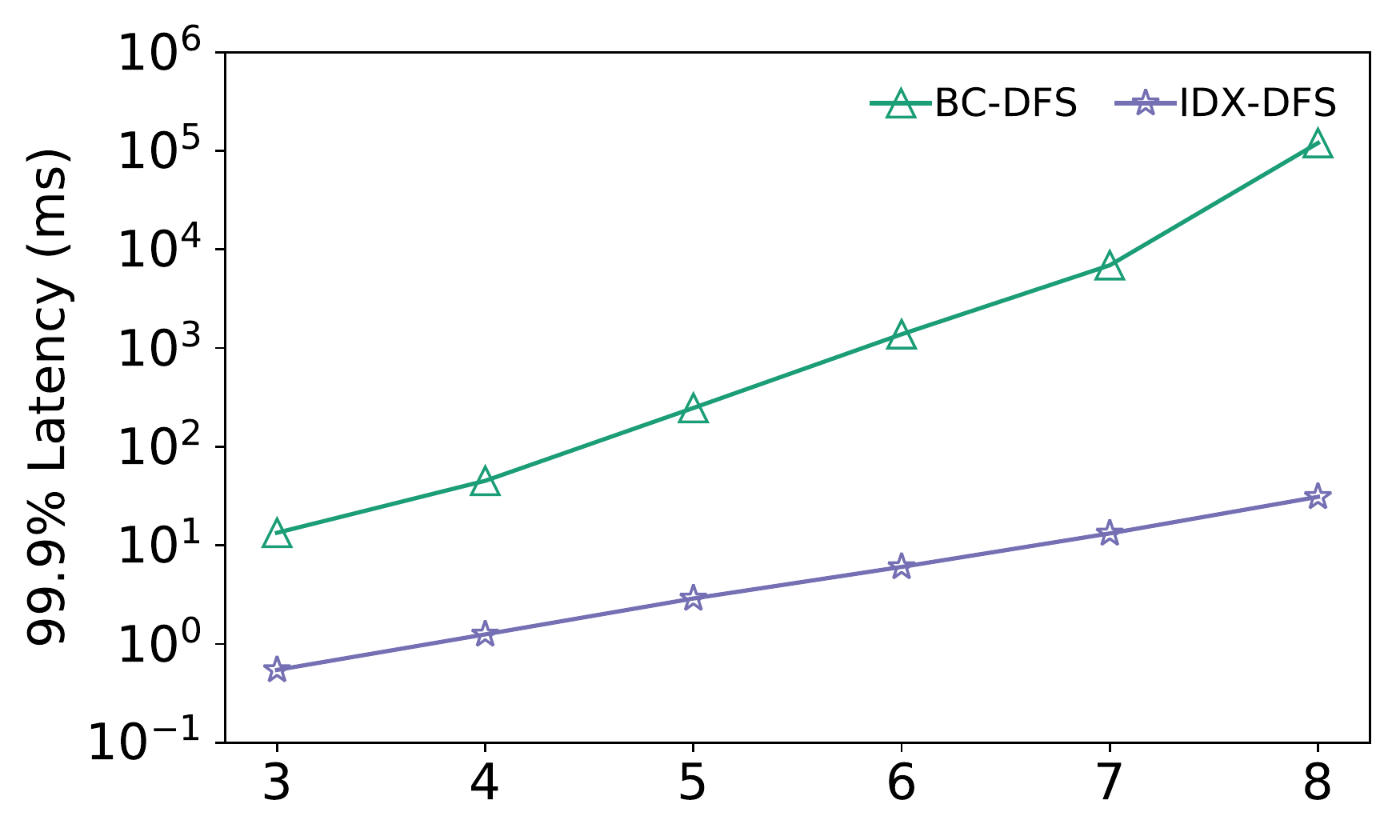}
        \caption{\emph{gg}.}
        \label{fig:latency_gg}
    \end{subfigure}
    \caption{\SUN{Comparison of 99.9\% latency with $k$ varied.}}
    \label{fig:latency}
\end{figure} 


\subsection{Evaluation of Individual Techniques}

\SUN{\textbf{Spectrum Analysis.} We conduct the spectrum analysis to study the effectiveness of our join plan optimization method. Particularly,
given $Q$, we categorize all join plans into the left deep tree and the bushy tree based on the shape of join trees. The left deep tree
extends partial results by a vertex at a step (e.g., Algorithm 4), whereas the bushy tree performs the join on partial results of two
sub-queries of $Q$ (e.g., Algorithm 6). We enumerate all left deep trees of $Q$ without the Cartesian product.
In contrast, for the bushy tree, we consider all cut positions $i$ where $0 < i < k$ that divides $Q$ into two sub-queries
$Q[0:i]$ and $Q[i:k]$ and evaluate $Q[0:i]$ and $Q[i:k]$ with the depth-first search method because the join of $Q[0:i]$ and $Q[i:k]$
dominates the cost.}

\begin{figure}[h]\small
    \setlength{\abovecaptionskip}{0pt}
    \setlength{\belowcaptionskip}{0pt}
    \captionsetup[subfigure]{aboveskip=0pt,belowskip=0pt}
    \centering
    \begin{subfigure}[t]{0.23\textwidth}
        \centering
        \includegraphics[scale=0.23]{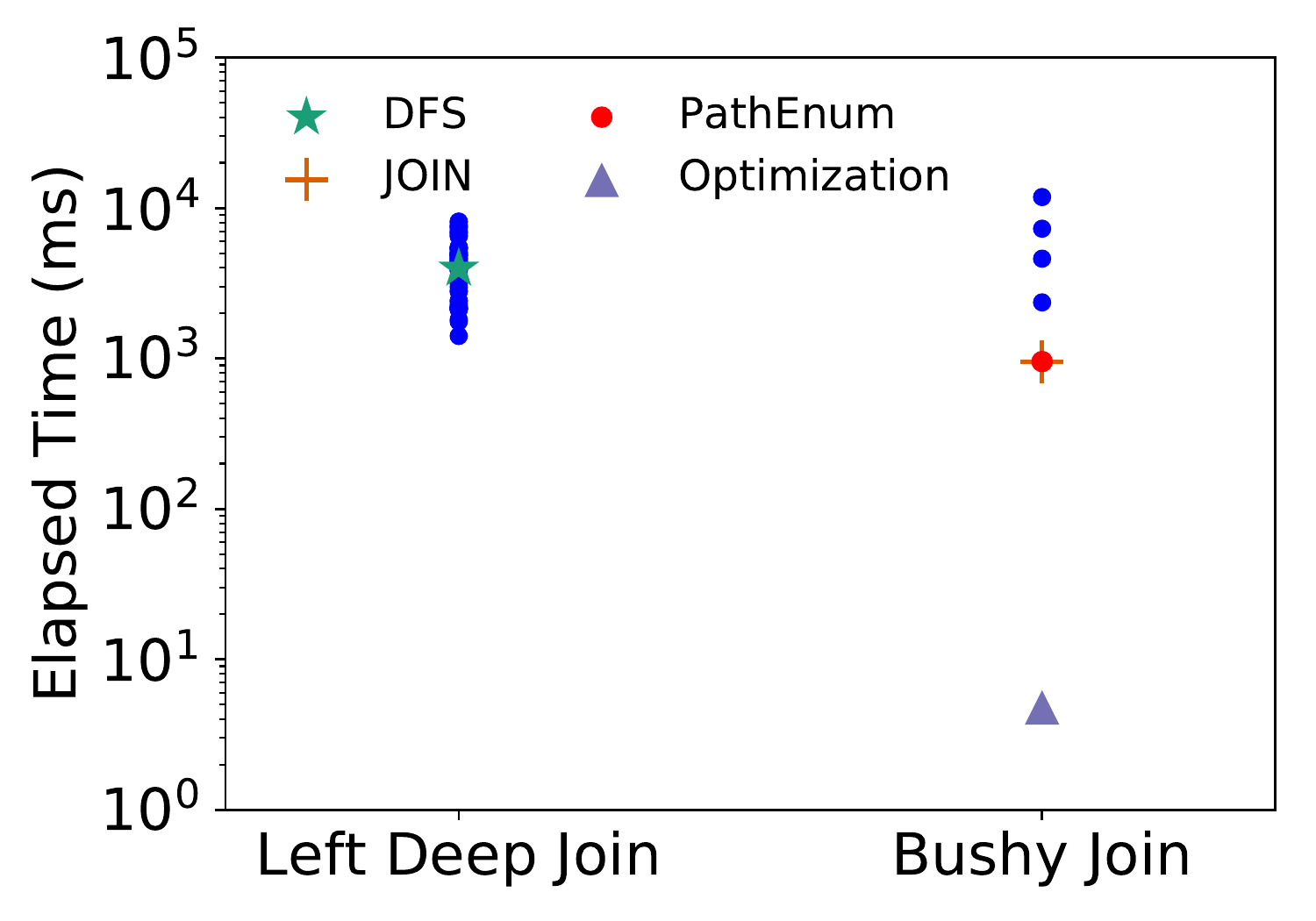}
        \caption{\emph{ep}.}
        \label{fig:spectrum_analysis_ep}
    \end{subfigure}
    \begin{subfigure}[t]{0.23\textwidth}
        \centering
        \includegraphics[scale=0.23]{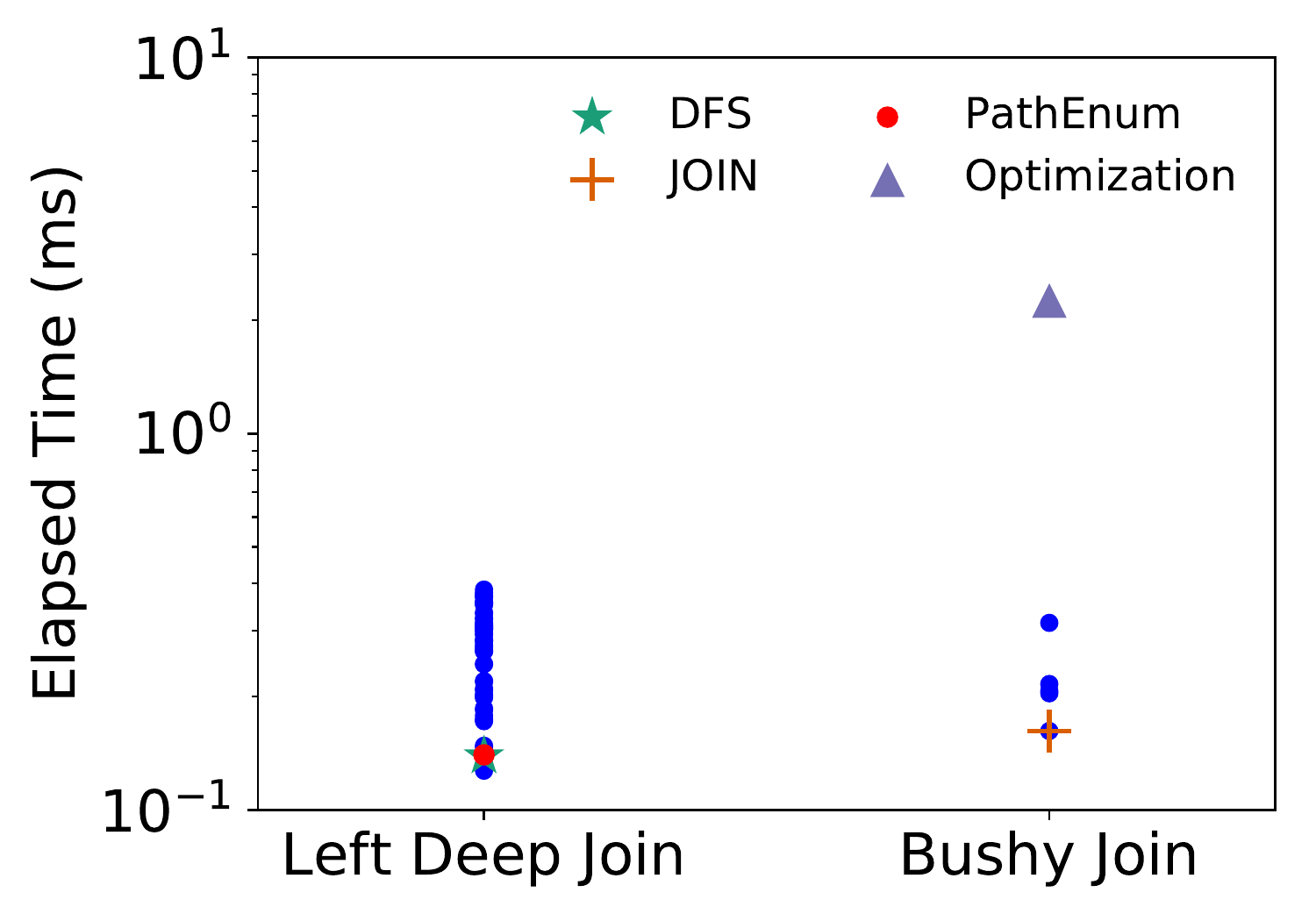}
        \caption{\emph{gg}.}
        \label{fig:spectrum_analysis_gg}
    \end{subfigure}
    \caption{\SUN{Spectrum analysis of join plan optimization.}}
    \label{fig:spectrum_analysis}
\end{figure}

\SUN{Figure \ref{fig:spectrum_analysis} presents the results of a query with $k = 6$ on \emph{ep} and \emph{gg}.
"\emph{DFS}" and "\emph{JOIN}" denotes the enumeration time of
Algorithms \ref{algo:dfs_on_index} and \ref{algo:join_on_index}, respectively.
"\emph{Optimization}" represents the time spent on optimizing the join order
(Algorithm \ref{algo:generate_join_order}). "\emph{PathEnum}" denotes the sum of the enumeration time
and the query optimization time of PathEnum. Each blue point denotes the time on enumerating all results based on the index with a join plan. 
In Figure \ref{fig:spectrum_analysis_ep}, the optimization time is much shorter than the enumeration time and the optimal plan is a bushy tree.
In Figure \ref{fig:spectrum_analysis_gg}, the optimization time is longer than the enumeration time. PathEnum takes shorter time than
the optimization because the preliminary estimator decides to use IDX-DFS directly. So our join optimizer is effective.
Nevertheless, the query optimizer can be further improved by considering a larger plan space because our method
considers only one plan with the left-deep tree (i.e., the order from $s$ to $t$) and the optimal plan
can fall outside of our plan space.}

\SUN{\textbf{Factors on Query Efficiency.} We examine the impact of the index size and the number of results on the enumeration time, respectively.
The index size is measured by the number of edges in the index. As the enumeration time, the index size and the number of results
vary greatly on different queries, we perform the linear regression analysis on the logarithm values of these metrics.
Figures \ref{fig:index_enumeration} and \ref{fig:result_enumeration} present the results of IDX-DFS on \emph{ep} and \emph{gg} with $k = 6$.
A blue point represents the result of a query and the red line denotes the underlying relationship obtained by the linear regression
model. The enumeration time increases with the index size and \#results increasing. Moreover, the
enumeration time has a closer relationship with \#results than the index size.}

\begin{figure}[h]\small
	\setlength{\abovecaptionskip}{0pt}
	\setlength{\belowcaptionskip}{0pt}
	\captionsetup[subfigure]{aboveskip=0pt,belowskip=0pt}
	\centering
	\begin{subfigure}[t]{0.23\textwidth}
		\centering
		\includegraphics[scale=0.23]{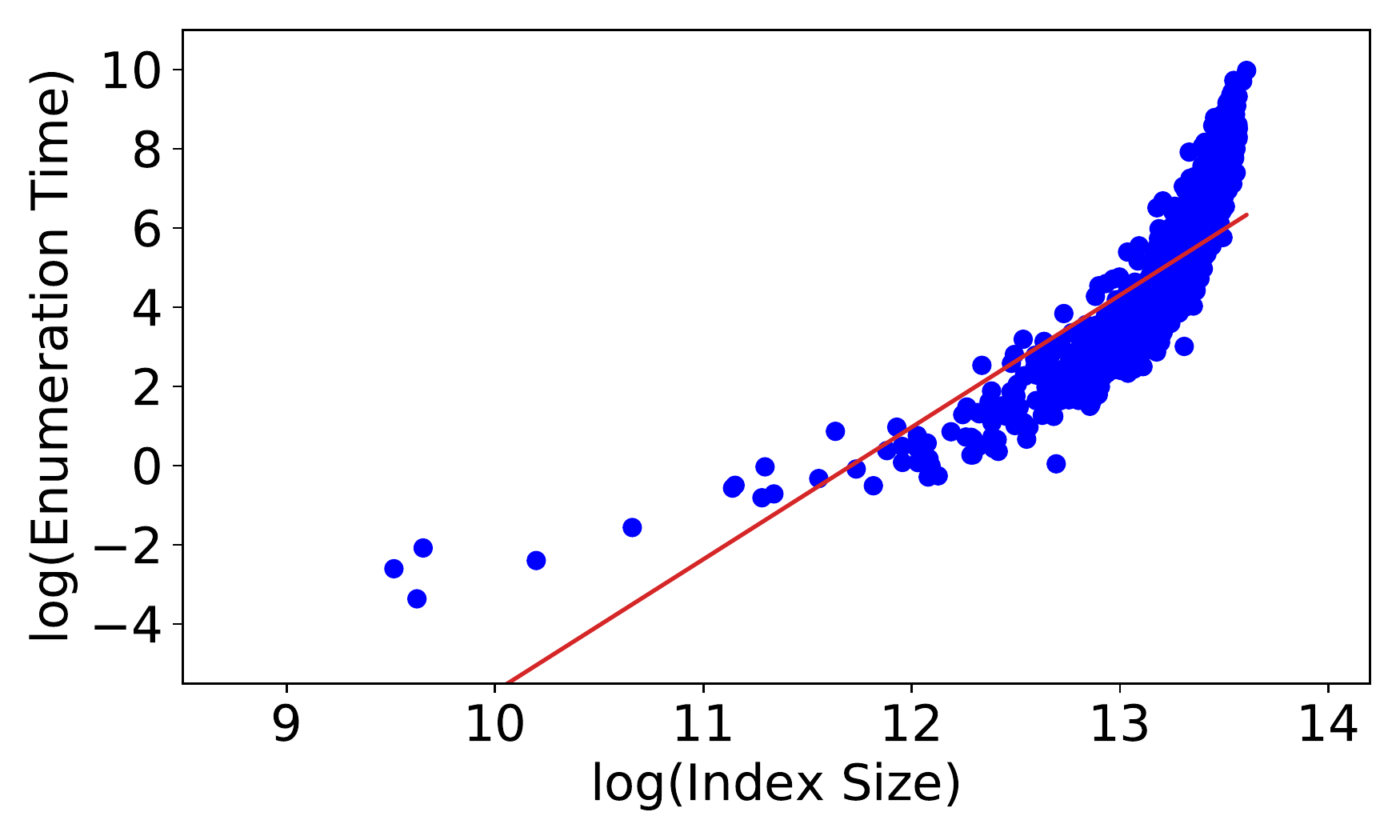}
		\caption{\emph{ep}.}
		\label{fig:index_enumeration_ep}
	\end{subfigure}
	\begin{subfigure}[t]{0.23\textwidth}
		\centering
		\includegraphics[scale=0.23]{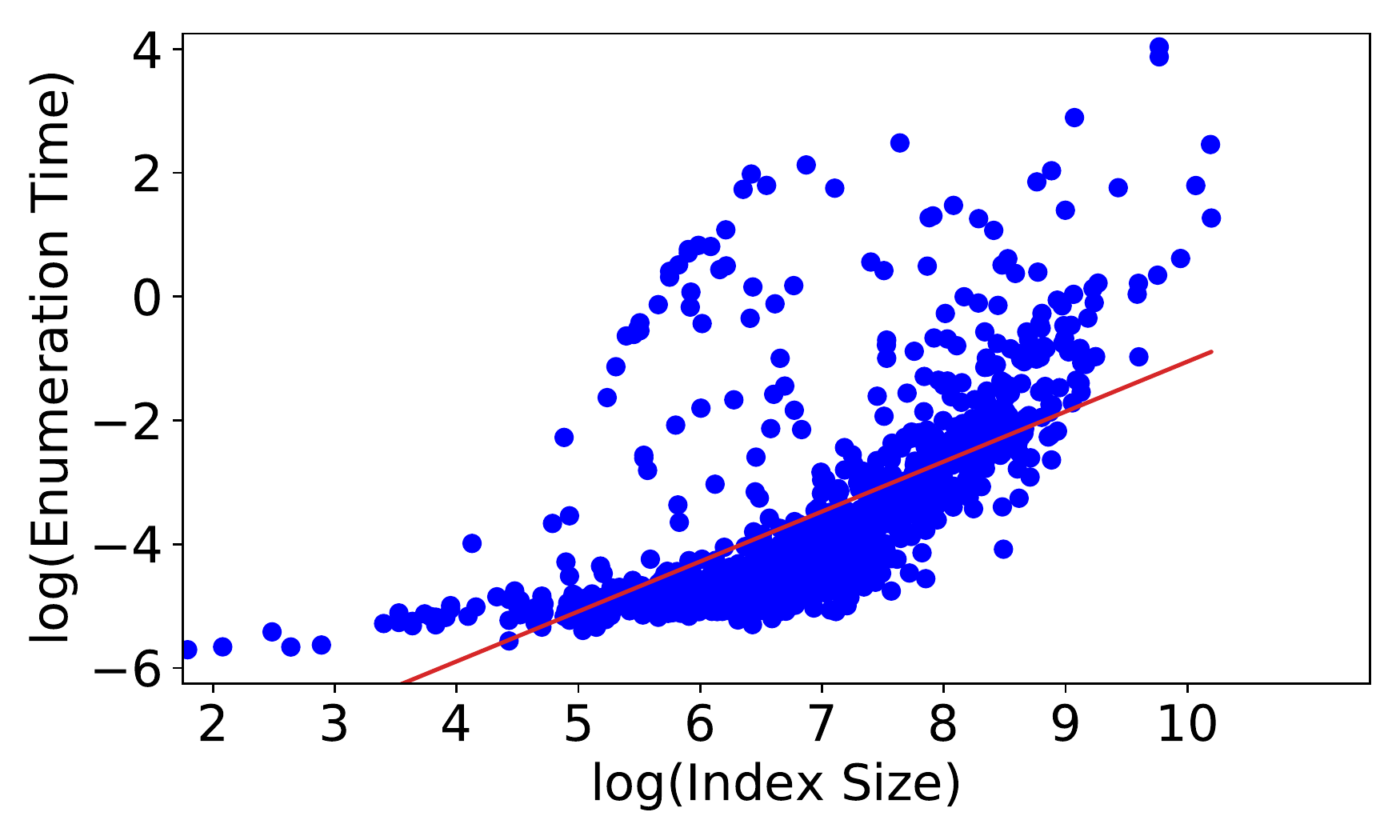}
		\caption{\emph{gg}.}
		\label{fig:index_enumeration_gg}
	\end{subfigure}
	\caption{\SUN{Impact of index size on enumeration time.}}
	\label{fig:index_enumeration}
\end{figure}

\begin{figure}[h]\small
	\setlength{\abovecaptionskip}{0pt}
	\setlength{\belowcaptionskip}{0pt}
	\captionsetup[subfigure]{aboveskip=0pt,belowskip=0pt}
	\centering
	\begin{subfigure}[t]{0.23\textwidth}
		\centering
		\includegraphics[scale=0.23]{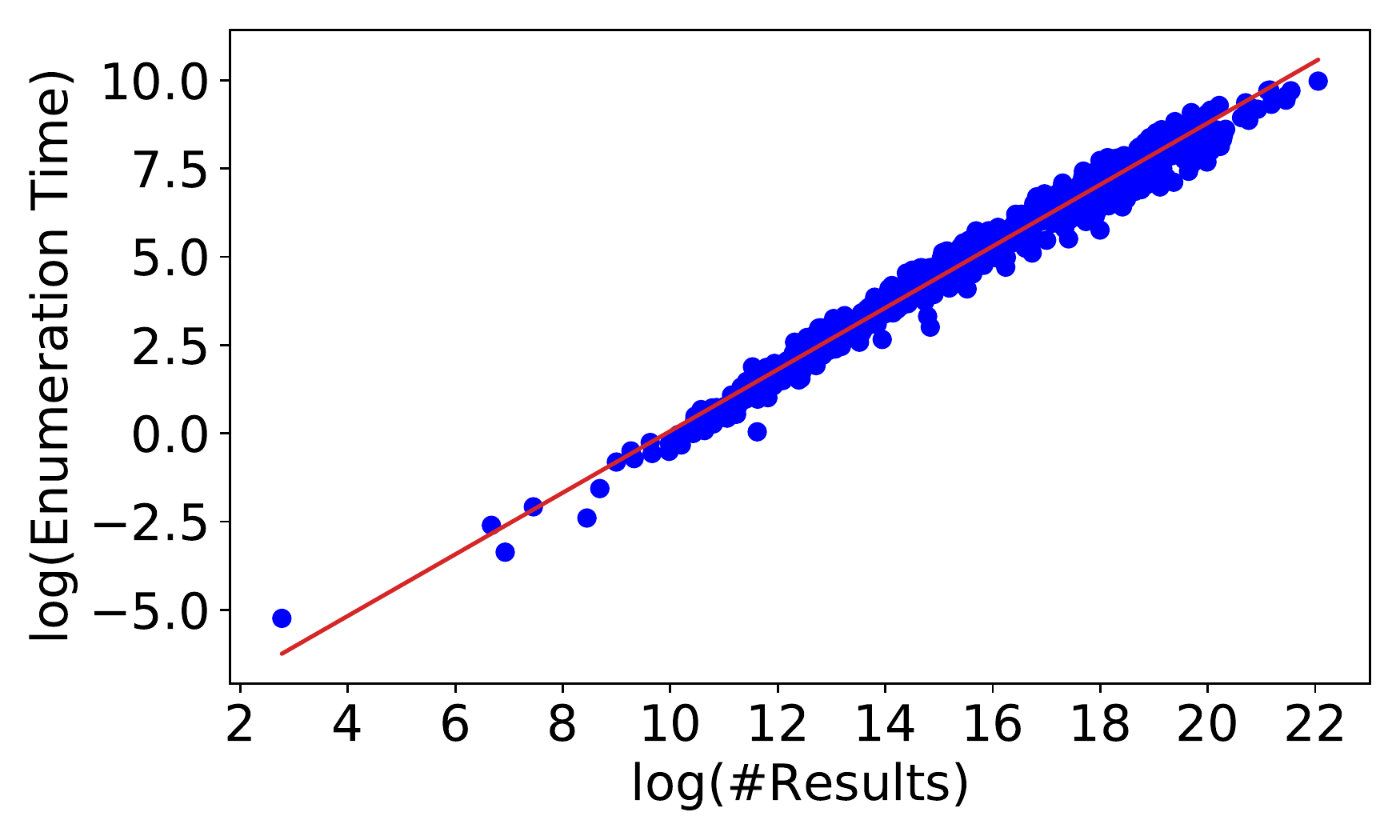}
		\caption{\emph{ep}.}
		\label{fig:result_enumeration_ep}
	\end{subfigure}
	\begin{subfigure}[t]{0.23\textwidth}
		\centering
		\includegraphics[scale=0.23]{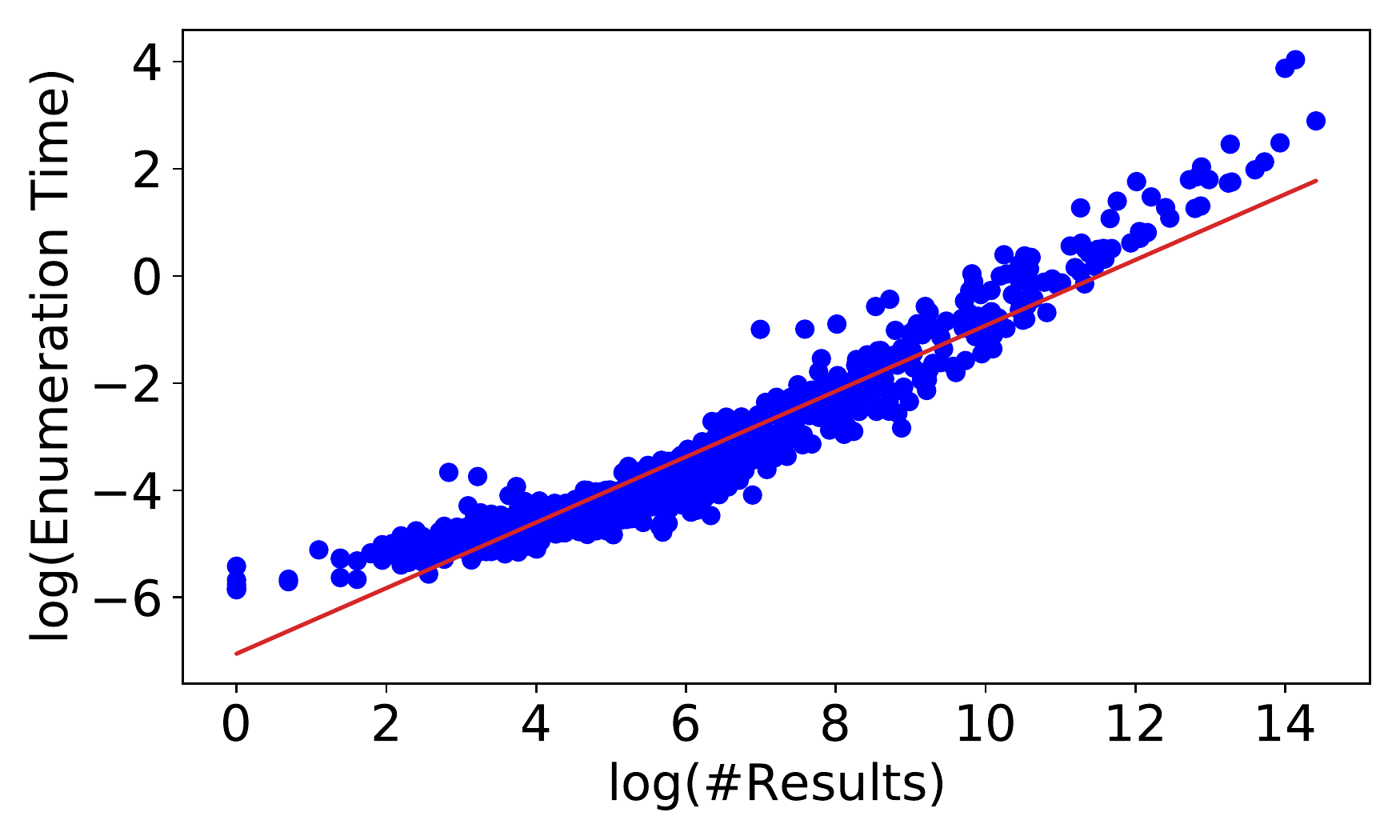}
		\caption{\emph{gg}.}
		\label{fig:result_enumeration_gg}
	\end{subfigure}
	\caption{\SUN{Impact of \#results on enumeration time.}}
	\label{fig:result_enumeration}
\end{figure}

\SUN{\textbf{Average and Maximum Number of Results.} Moreover, we examine the average and maximum number of results reported
on \emph{ep} and \emph{gg} with $k$ varied. Table \ref{tb:num_results} presents the experiment results.
The star symbol denotes that we cannot enumerate all results within 120 seconds
and report the value found by IDX-DFS within the time limit. We can see that the number of results significantly increases
with $k$ varied from 3 to 8, and the number of results on \emph{ep} is much more than taht \emph{gg}. Therefore, the query time on
\emph{ep} is longer than that on \emph{gg}, and the query time significantly increases with the increasing of $k$ as shown in Figure \ref{fig:query_time_breakdown}.}

\setlength{\textfloatsep}{0pt}
\begin{table}[h]
 \small
 \centering
  \setlength{\abovecaptionskip}{0pt}
  \setlength{\belowcaptionskip}{0pt}
\caption{\SUN{The average and maximum number of results reported on \emph{ep} and \emph{gg}. The star symbol denotes that we cannot enumerate all results within 120 seconds.}}
\label{tb:num_results}
\resizebox{0.48\textwidth}{!}{
\begin{tabular}{c|c|cccccc}
\hline
                    & $k$          & \textbf{3} & \textbf{4} & \textbf{5} & \textbf{6} & \textbf{7} & \textbf{8}     \\ \hline
\multirow{2}{*}{\emph{ep}} & avg & 5.06e+1   & 5.16e+3   & 5.34e+5   & 5.54e+7   & 3.00e+9   & ${1.24e+10}^*$       \\
                    & max & 3.30e+3   & 3.48e+5   & 3.64e+7   & 3.79e+9   & 2.74e+10   & ${2.28e+10}^*$ \\ \hline
\multirow{2}{*}{\emph{gg}} & avg & 7.58e+0   & 9.33e+1   & 1.22e+3   & 1.71e+4   & 2.59e+5   & 4.03e+6       \\
                    & max & 3.78e+2   & 6.69e+3   & 1.13e+5   & 1.81e+6   & 3.40e+7   & 7.24e+8       \\ \hline
\end{tabular}
}
\end{table}

\SUN{\textbf{Memory Cost.} Table \ref{tb:memory_cost} presents the maximum memory consumption on indexes and partial results
of IDX-JOIN with $k$ varied. The index consumes a small amount of memory space because
the space complexity of the index is $O(|E(G)| + k \times |V(G)|)$ and the filtering can effectively prune some
vertices and edges. The partial results of IDX-JOIN on \emph{ep} consume much more space than that on \emph{gg} because
there are more results on \emph{ep} than \emph{gg} as shown in Table \ref{tb:num_results}. In summary, the index takes small space, while
the partial results of IDX-JOIN can consume a large amount of space due to the large number of results.}

\setlength{\textfloatsep}{0pt}
\begin{table}[h]
 \small
 \centering
  \setlength{\abovecaptionskip}{0pt}
  \setlength{\belowcaptionskip}{-5pt}
\caption{\SUN{Maximum memory consumption (MB) on \emph{ep} and \emph{gg}.}}
\label{tb:memory_cost}
\begin{tabular}{c|c|cccccc}
\hline
\textbf{}                                                                           & \textbf{$k$} & \textbf{3} & \textbf{4} & \textbf{5} & \textbf{6} & \textbf{7} & \textbf{8} \\ \hline
\multirow{2}{*}{\textbf{Index}}                                                     & \textit{ep}  & 0.15       & 1.68       & 3.28       & 4.60       & 5.45       & 5.91       \\
                                                                                    & \textit{gg}  & 0.01       & 0.10       & 0.11       & 0.21       & 0.44       & 0.63       \\ \hline
\multirow{2}{*}{\textbf{\begin{tabular}[c]{@{}c@{}}Partial\\ Results\end{tabular}}} & \textit{ep}  & 0.03       & 0.66       & 26.51      & 138.32     & 4561.59    & 21479.32   \\
                                                                                    & \textit{gg}  & 0.01       & 0.02       & 0.37       & 1.05       & 17.80      & 55.70      \\ \hline
\end{tabular}
\end{table}

\SUN{\textbf{Supplement Experiments.} The appendix presents more experiment results including the comparison of throughput, query time
and response time with $k$ varied, the cumulative distribution function of query time, the time efficiency of individual
techniques (e.g., index construction) with $k$ varied, and the effectiveness of cardinality estimators.}

\subsection{Scalability Evaluation}

We evaluate the scalability of IDX-DFS and IDX-JOIN with \emph{tm} that has around two billion edges.
Figure \ref{fig:scalability_evaluation} presents the execution time of each individual technique and
the throughput with $k$ varied from 3 to 6. IDX-JOIN runs out of memory when $k = 6$. Therefore, we omit
its results on this case. "\emph{Index construction}" denotes the time spent on building the index (Algorithm
\ref{algo:build_index}). Additionally,
we report the time of computing the distance of each vertex to $s,t$, which is denoted by \emph{BFS}. \emph{BFS} is included
in \emph{Index construction}. As shown in Figure \ref{fig:scalability_execution_time}, Algorithm \ref{algo:build_index}
spends tens of seconds on the index construction, which is dominated by \emph{BFS}. The time spent on building the index
and generating join orders is more than that on enumerating results when $k$ varied from 3 to 4. Despite the long
preprocessing time, the throughput of both IDX-DFS and IDX-JOIN is up to $10 ^ 7$ when $k = 5$, which demonstrates
the efficiency of the index-based enumeration.

\begin{figure}[h]\small
    \setlength{\abovecaptionskip}{0pt}
    \setlength{\belowcaptionskip}{0pt}
    \captionsetup[subfigure]{aboveskip=0pt,belowskip=0pt}
    \centering
    \begin{subfigure}[t]{0.23\textwidth}
        \centering
        \includegraphics[scale=0.23]{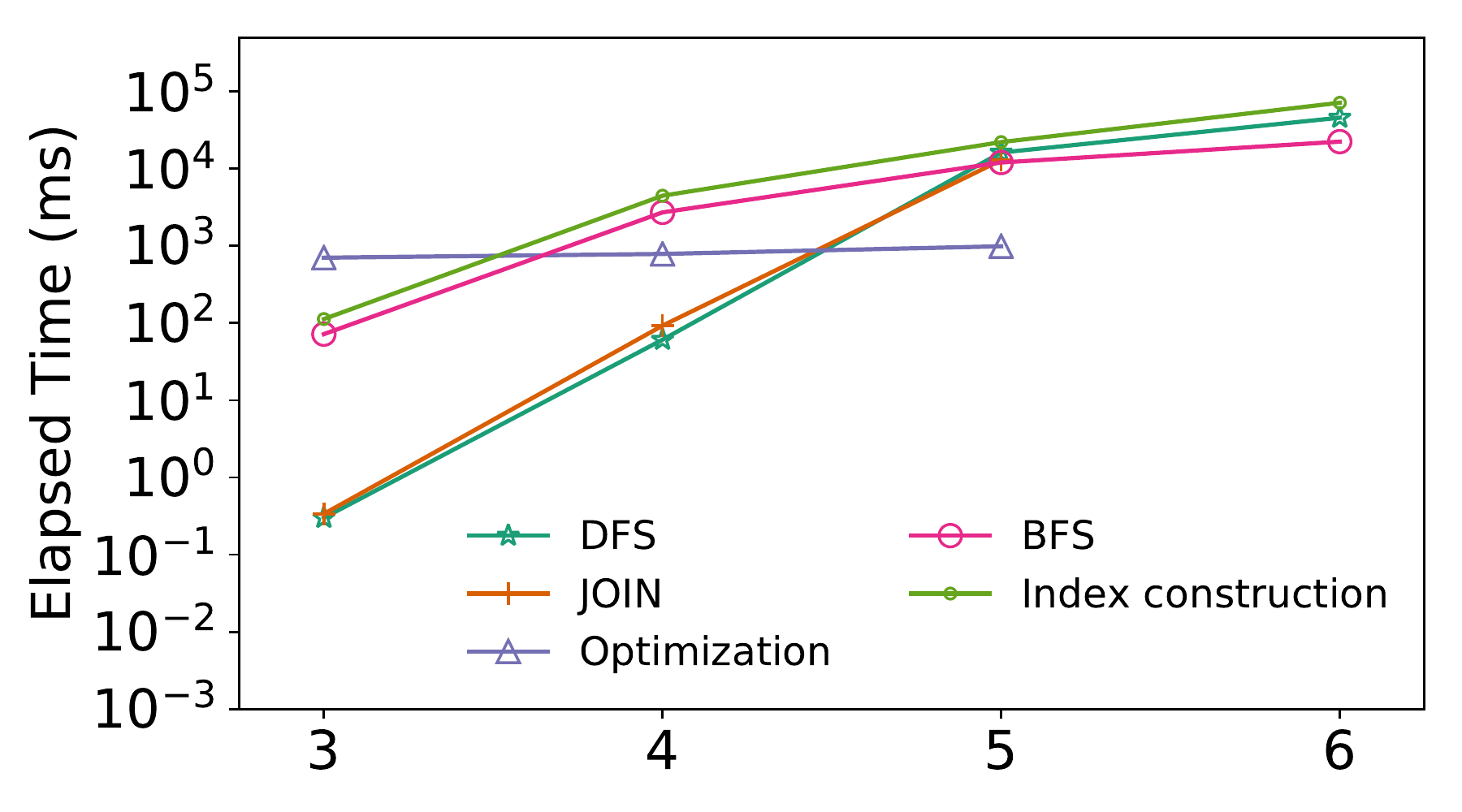}
        \caption{Execution time.}
        \label{fig:scalability_execution_time}
    \end{subfigure}
    \begin{subfigure}[t]{0.23\textwidth}
        \centering
        \includegraphics[scale=0.23]{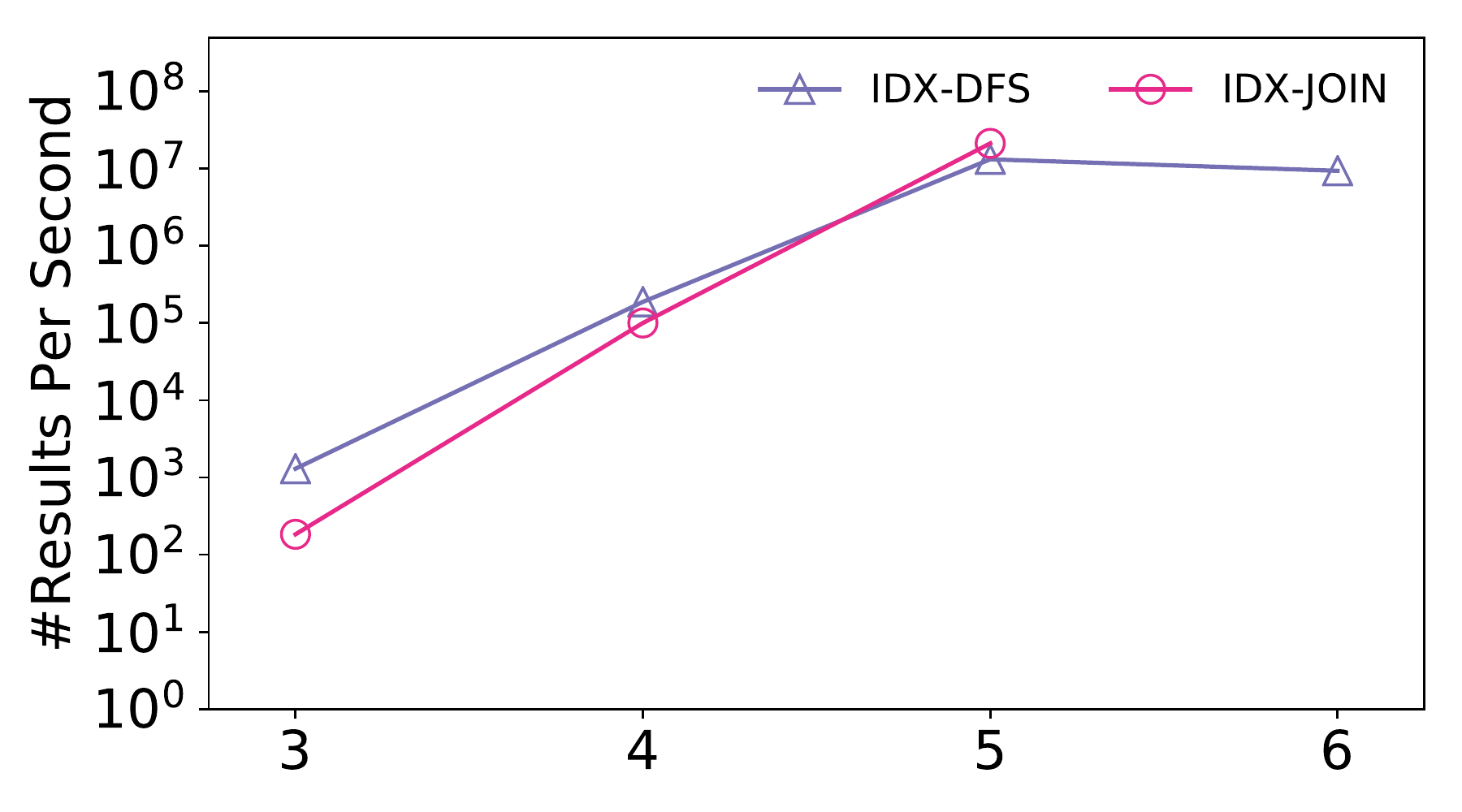}
        \caption{Throughput.}
        \label{fig:scalability_throughput}
    \end{subfigure}
    \caption{Scalability evaluation on \emph{tm} with $k$ varied.}
    \label{fig:scalability_evaluation}
\end{figure}

\SUN{\subsection{Discussions}}

\SUN{Although PathEnum significantly accelerates HcPE queries, there still leaves interesting future work. First, our join optimizer
can be further improved by searching the optimal plan in a larger plan space and considering more metrics such as the cost of materializing
partial results. Second, developing algorithms having a short response time on very large graphs
is an interesting research direction because building the index from scratch on very large graphs can take a long time (e.g., tens of seconds on \emph{tm}).
A promising approach is to build a global index in an offline preprocessing step to reduce the cost of construing the query-dependent index.
However, designing an effective global index is challenging because (1) such an index has to maintain the global statics of $G$ to
serve all queries and therefore must balance the cost of the index and query efficiency (e.g., recording distance between all vertex pairs
is unacceptable due to the large space overhead); and (2) the index needs to support efficient update operations to serve dynamic graphs.}

\SUN{Additionally, we observe some opportunities for graph database systems \cite{aberger2017emptyheaded,mhedhbi2019optimizing} to explore.
The query-dependent index can reduce elements involved in the computation and provide accurate statistics to the query optimizer.
This gives graph databases an alternative way of evaluating queries by dividing the evaluation into two phases: (1) builds a query-dependent index;
and (2) generates the query plan and computes based on the index. Moreover, the systems can adopt an adaptive query optimizer to
process queries because query time of different queries can vary greatly.}



\section{Conclusions}

In this paper, we study the hop-constrained \emph{s-t} path enumeration problem, and propose PathEnum, an efficient algorithm towards addressing real-time requirements from many on-line applications. We design a light-weight index, and two index-based approaches for efficient enumerations. We further develop a query optimizer to optimize the join order and
decide which approach to use at per query basis. We conduct extensive experiments with a variety of real-world graphs, and show that PathEnum achieves orders of magnitude speedup over the state-of-the-art approaches.


\bibliographystyle{ACM-Reference-Format}
\bibliography{reference}

\appendix

\section{Correctness of Join-based Model}

In this section, we prove that the join-based model in Section \ref{sec:join_based_model} is correct.
Given a graph $G$ and a HcPE query $q(s, t, k)$, suppose that the relations of $Q$ is generated based on the method in
Section \ref{sec:join_based_model}. Given a tuple $r \in Q$, $r[i]$ represents
the vertex at position $i$, and $l ^ *$ denotes the first position that $t$ appears in $r$. Let $r[0:i]$ be the vertices
from positions $0$ to $i$. We first prove that $Q$ satisfies the following lemma.

\begin{lemma} \label{lemma:result_to_walk}
    Given a tuple $r \in Q$, $r[0:l ^ *]$ is a walk from $s$ to $t$, and each vertex $v \in r[l^*:k]$ is $t$.
\end{lemma}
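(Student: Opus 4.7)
The plan is to first establish the structural constraints on entries of $r$ imposed by the three defining properties of the relations, and then split the lemma's two claims along the pivot $l^*$. I would begin by recording that $r[0]=s$ (forced by $R_1$'s first attribute), $r[k]=t$ (forced by $R_k$'s second attribute), and that for every intermediate position $1 \leqslant i \leqslant k-1$ we have $r[i]\neq s$. The last point follows by looking at the two sides of $u_i$: either $R_{i+1}$ (for $i<k-1$) or $R_k$ (for $i=k-1$) contributes $r[i]$ as its first attribute, and every tuple in those relations has first attribute either in $V(G)-\{s\}$ or equal to $t$, so $s$ is ruled out.

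The key observation is the \emph{absorbing behavior of $t$}: if $r[i]=t$ for some $1\leqslant i\leqslant k-1$, then $r[i+1]=t$. I would prove this by cases on $R_{i+1}$. When $i+1<k$, the base part of $R_{i+1}$ only contains tuples $(v,v')$ with $v\neq t$, so the only tuple usable here is the added $(t,t)$; hence $r[i+1]=t$. When $i+1=k$, the base part of $R_k$ provides tuples $(v,t)$ with $v\neq s$; combined with the appended $(t,t)$, in either case $r[i+1]=t$. A straightforward induction starting at $i=l^*$ then yields $r[j]=t$ for all $l^*\leqslant j\leqslant k$, which is the second half of the lemma.

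For the first half, I would verify the three requirements of Definition~\ref{def:walk_st} for the sequence $r[0],r[1],\ldots,r[l^*]$. The endpoints match by $r[0]=s$ and $r[l^*]=t$ (by minimality of $l^*$). Internal vertices $r[i]$ with $0<i<l^*$ avoid $t$ by the definition of $l^*$, and avoid $s$ by the structural observation above. Finally, for each consecutive pair $(r[i-1],r[i])$ with $1\leqslant i\leqslant l^*$, since neither coordinate equals $t$ except possibly $r[l^*]$, the tuple used cannot be the auxiliary $(t,t)$; thus it must be a base tuple coming from $E(G)$ or $E(G-\{s\})\subseteq E(G)$, giving the required edge.

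The main obstacle I anticipate is keeping the case analysis clean at the boundaries $i=1$, $i=l^*$, and $l^*=k$, where the choice between $R_1$, an inner $R_i$, and $R_k$ changes the available tuples; the rest reduces to bookkeeping. A minor subtlety worth stating explicitly is why $(t,t)$ is never used inside the prefix $r[0:l^*]$: this is exactly what guarantees that each step corresponds to a genuine edge of $G$ rather than the artificial self-loop, and it is a direct consequence of $r[i]\neq t$ for $i<l^*$.
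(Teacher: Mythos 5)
Your proof is correct and follows essentially the same route as the paper's: both derive the walk structure of the prefix from the defining properties of the relations, and both derive the $t$-padding of the suffix from the fact that $(t,t)$ is the only tuple in $R_i$ ($1 < i \leqslant k$) whose first coordinate is $t$ (you do this by direct induction on the absorbing behavior of $t$, the paper by contradiction on the first non-$t$ position). You are in fact somewhat more careful than the paper, which neither verifies condition (2) of Definition~\ref{def:walk_st} (that internal vertices of the prefix avoid $\{s,t\}$) nor explains why the auxiliary $(t,t)$ tuples cannot be the ones realizing the steps inside $r[0:l^*]$; your explicit treatment of these points closes small gaps in the published argument.
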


\begin{proof}
    Based on the first property of relations, $r[0]$ and $r[k]$ must be $s$ and $t$, respectively.
    Given $0 < i \leqslant l ^ *$, $(r[i-1], r[i])$ is an edge in $E(G)$ because it is a tuple belonging to $R_i$.
    Thus, $r[0:l^*]$ is a walk from $s$ to $t$. Next, we prove that each vertex $v \in r[l^*:k]$ is $t$ by contradiction.
    Without loss of generality, assume that $r[i]$ is the first vertex not equal to $t$ where $l^* < i < k$.
    Then, $(t, r[i])$ belongs to $R_i$. However, the second property guarantees that there is no tuple starting
    from $t$ except $(t, t)$, which contradicts the assumption. Thus, the lemma is proved.
\end{proof}
 
Moreover, the following lemma holds.
 
\begin{lemma} \label{lemma:walk_to_result}
    Given a walk $W \in \mathcal{W}(s, t, k, G)$, there is a tuple $r \in Q$ such that $W$ is equal to $r[0:l ^ *]$.
\end{lemma}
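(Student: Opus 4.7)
The plan is to prove Lemma \ref{lemma:walk_to_result} by explicit construction: given a walk $W = (v_0, v_1, \ldots, v_l)$ in $\mathcal{W}(s, t, k, G)$, I build a candidate tuple $r$ of length $k+1$ by padding $W$ with copies of $t$, namely
\begin{equation*}
r = (v_0, v_1, \ldots, v_{l-1}, v_l, t, t, \ldots, t),
\end{equation*}
where the pad fills positions $l+1$ through $k$. Since $v_0 = s$, $v_l = t$, and by Definition \ref{def:walk_st} none of $v_1, \ldots, v_{l-1}$ equals $t$, the first occurrence of $t$ in $r$ is at position $l$, so $l^* = l$ and the prefix $r[0:l^*]$ literally coincides with $W$. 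The only remaining task is to verify $r \in Q$, i.e., $(r[i-1], r[i]) \in R_i$ for every $1 \leqslant i \leqslant k$.

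I would carry out the membership check by a short case split on the index $i$ relative to $l$. For $i = 1$, we have $(r[0], r[1]) = (s, v_1)$ with $(s, v_1) \in E(G)$, giving membership in $R_1$ directly. For $1 < i < k$ with $i \leqslant l$, the pair is $(v_{i-1}, v_i)$; by Definition \ref{def:walk_st} the endpoints lie outside $\{s, t\}$ whenever they are interior to $W$, and the endpoint $v_l = t$ only appears on the right when $i = l$, so the pair meets the requirement $v_{i-1} \neq t$ and $(v_{i-1}, v_i) \in E(G - \{s\})$ needed for $R_i$. For $i = k$ with $i \leqslant l$ (so $l = k$), the pair is $(v_{k-1}, t)$ with $v_{k-1} \notin \{s, t\}$, hence in $R_k$. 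For every $i > l$, the pair is $(t, t)$, which lies in $R_i$ for all $1 < i \leqslant k$ by the third property of the construction. Finally, the transition pair at $i = l+1$ (when $l < k$) is also $(t, t)$, handled by the same property.

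The main obstacle, such as it is, is purely bookkeeping: I need to make sure each of the four boundary conditions built into the relations (the edge originating at $s$ for $R_1$, the edge terminating at $t$ with $v \neq s$ for $R_k$, the restriction to $E(G - \{s\})$ with $v \neq t$ for interior $R_i$, and the self-loop $(t,t)$) is invoked in the correct case. The walk definition was designed precisely so that these conditions hold automatically, and verifying this is the crux; there is no deeper combinatorial content. Together with Lemma \ref{lemma:result_to_walk}, Lemmas \ref{lemma:result_to_walk}--\ref{lemma:walk_to_result} give a bijective correspondence between walks in $\mathcal{W}(s,t,k,G)$ and tuples in $Q$ (under the $r \mapsto r[0:l^*]$ truncation), from which Theorem \ref{theorem:correctness} follows by further restricting to tuples with no duplicate vertices.
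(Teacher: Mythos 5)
Your proof is correct and follows essentially the same route as the paper's: construct $r$ by padding $W$ with copies of $t$, observe that $l^* = l$ so the truncation recovers $W$, and verify $(r[i-1], r[i]) \in R_i$ for each $i$ from the definition of the relations. You are in fact more explicit than the paper, which compresses the membership check into ``according to the generation method of relations,'' so the extra case split is a welcome elaboration rather than a deviation.
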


\begin{proof}
    Given $W \in \mathcal{W}(s, t, k, G)$, we first construct a tuple $r$ as follows: $r[0:l^*] = W$ and set
    each vertex in $r[l^*:k]$ as $t$. Next, we prove that $r$ belongs to $Q$. Given $1 \leqslant i \leqslant l^*$, $(r[i-1], r[i])$
    exists in $R_i$ according to the generation method of relations. Let $Q'$ be $R_1 \Join \dots \Join R_{l^*}$. Then,
    $r[0:l^*]$ belongs to $Q'$. As $(t,t) \in R_i$ where $l^* < i \leqslant k$, $r[0:l^*]$ will be extended by adding
    $t$ when performing the join operation on $Q'$ and the remaining relations. Therefore, $r$ appears in the final results.
    Moreover, as the join operation satisfies the commutative and associative laws, $r$ belongs to $Q$ regardless of the join order.
    Thus, the lemma is proved.
\end{proof}

Based on Lemmas \ref{lemma:result_to_walk} and \ref{lemma:walk_to_result}, we can prove the correctness of Theorem \ref{theorem:correctness} as follows.

\begin{proof}
    $Q$ has no duplicate results because tuples in each relation of $Q$ are distinct. According to Lemmas \ref{lemma:result_to_walk}
    and \ref{lemma:walk_to_result}, $Q$ contains all walks in $\mathcal{W}(s, t, k, G)$. Therefore, we can obtain all paths from $s$ to $t$ by
    eliminating results in $Q$ contain duplicate vertices except $t$. The theorem is proved.
\end{proof}

\section{Pruning Power Comparison}

We compare the pruning power of Algorithm \ref{algo:build_relations}
with that of Algorithm \ref{algo:build_index} in this section. Let $R_i(u_{i - 1}, u_i)$ be the relation generated by
Algorithm \ref{algo:build_relations}. Given $1 \leqslant i \leqslant k$, $R_i(u_{i - 1}:v, u_i)$ represents the
neighbors of $v$ in $R_i$, and $C(u_{i - 1})$ denotes all values of $u_{i - 1}$ in $R_i$, i.e., $\{v|(v, v')\in R_i\}$.
Because $v \in C(u_{i - 1})$, there exists a walk $W \in \mathcal{W}(s, t, k, G)$ such that $W[i - 1] = v$
according to Proposition \ref{lemma:full_reducer}. Therefore, $v$ belongs to $X$ in $\mathcal{I}$ according to Proposition
\ref{prop:prune_vertex}. Next, we prove that given $v \in C(u_{i - 1})$ where $v \neq t$, $R_i(u_{i - 1}:v, u_i)$ is
equal to $\mathcal{I}_t(v, k - i)$ by contradiction.

Assume that $v' \in R_i(u_{i - 1}:v, u_{i})$ but $v' \notin \mathcal{I}_t(v, k - i)$.
Then, $S(v', t|G - \{s\}) \leqslant k - i$ because there exists $W \in \mathcal{W}(s, t, k, G)$ such that $W[i] = v'$ according
to Proposition \ref{lemma:full_reducer}. Based on Algorithm \ref{algo:build_index}, if
$v' \in N(v)$ and $S(v', t|G - \{s\}) \leqslant k - i$, then $v'$ belongs to $\mathcal{I}_t(v, k-i)$, which contradicts the assumption.
Therefore, $R_i(u_{i - 1}:v, u_{i}) \subseteq \mathcal{I}_t(v, k-i)$. Next, assume that $v' \in \mathcal{I}_t(v, k - i)$
but $v' \notin R_i(u_{i - 1}:v, u_{i})$. Therefore, there is a walk $W$ from $v'$ to $t$ such that $L(W) \leqslant k - i$
because $S(v', t| G - \{s\}) \leqslant k - i$ according to Algorithm \ref{algo:build_index}. Moreover, there is a walk $W'$
from $s$ to $v$ such that $W[i - 1] = v$ because $v \in C(u_{i - 1})$. Then, we can construct a
walk from $s$ to $t$ by concatenating $W'$ and $W$. According to Proposition \ref{lemma:full_reducer},
$(v, v')$ must exist in $R_i(u_{i - 1}, u_{i})$, which contradicts the assumption.
So we have $\mathcal{I}_{t}(v, k - i) \subseteq R_i(u_{i - 1}:v, u_{i})$.
Based on the analysis, we can see that our index provides competitive pruning power with relations generated by
Algorithm \ref{algo:build_relations}.

\section{Correctness of Our Algorithms}

In the following, we first prove the correctness of Algorithms \ref{algo:dfs_on_index}.

\begin{proposition} \label{prop:search_correctness}
    Algorithm \ref{algo:dfs_on_index} finds all $k$ hop-constrained paths $\mathcal{P}(s, t, k, G)$ from $s$ to $t$ in $G$.
\end{proposition}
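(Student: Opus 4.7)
The plan is to establish both soundness (every emitted $M$ lies in $\mathcal{P}(s, t, k, G)$) and completeness (every $P \in \mathcal{P}(s, t, k, G)$ is emitted). Soundness follows by tracking the invariants maintained along any branch of the recursion, while completeness reduces cleanly to Proposition~\ref{prop:walk_correctness} by observing that the added check on line~7 never fires along a branch whose full extension is a path.

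For soundness, I would argue that whenever $M$ is emitted, it starts at $s$ (by the initialization $M \leftarrow (s)$) and ends at $t$ (by the emission condition at line~5). Each consecutive pair $(M[i], M[i+1])$ is an edge of $G$ because $\mathcal{I}_t(v, b) \subseteq N(v)$ by construction of the index. The hop constraint $L(M) \leqslant k$ is preserved inductively: extension requires $v' \in \mathcal{I}_t(v, k - L(M) - 1)$, which is empty unless $k - L(M) - 1 \geqslant 0$, so extension forces $L(M \cup \{v'\}) \leqslant k$. The check $v' \notin M$ at line~7 ensures that $M$ has no duplicates, upgrading it from a walk to a path. Finally, to match Definition~\ref{def:walk_st}, no intermediate vertex can equal $s$ (since $s \in M$ from initialization, line~7 blocks re-insertion) nor $t$ (because the first occurrence of $t$ causes immediate emission and return at line~5).

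For completeness, I would invoke Proposition~\ref{prop:walk_correctness}: dropping the check at line~7 turns Algorithm~\ref{algo:dfs_on_index} into an enumerator of $\mathcal{W}(s, t, k, G)$. Since every $P \in \mathcal{P}(s, t, k, G)$ is also in $\mathcal{W}(s, t, k, G)$, the relaxed algorithm discovers $P$ through the prefix chain $(s), (s, P[1]), \ldots, P$. Now because $P$ is a path, its vertices are pairwise distinct, so at each extension step from $(s, P[1], \ldots, P[i])$ to $(s, P[1], \ldots, P[i+1])$ the candidate $P[i+1]$ is not in the current $M$. Hence the line~7 test is passed at every step along this particular recursion branch, so the full algorithm reproduces this chain and emits $P$.

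The main obstacle is simply being careful about the edge conditions in the walk-from-$s$-to-$t$ definition (no intermediate $s$ or $t$), since this is handled only implicitly by the interaction between initialization, the line~7 check, and the emit-and-return at line~5 rather than by an explicit guard. Everything else is a direct bookkeeping argument, leveraging Proposition~\ref{prop:walk_correctness} as a black box for the underlying walk enumeration.
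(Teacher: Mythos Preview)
Your proof is correct and matches the paper's approach: both establish soundness via the same invariants (start at $s$, end at $t$, consecutive edges in $G$, no duplicates, length at most $k$) and completeness by showing every prefix of a target path $P$ is reached during the recursion. The only cosmetic difference is that the paper carries out the completeness induction directly---arguing $v_{i+1}\in\mathcal{I}_t(v_i,k-i-1)$ from $S(v_{i+1},t\mid G-\{s\})\leqslant k-i-1$---whereas you factor this step through Proposition~\ref{prop:walk_correctness}; the underlying argument is identical.
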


\begin{proof}
    We first prove that $M$ emitted by the algorithm is a path $P \in \mathcal{P}(s, t, k, G)$. Lines 1 and 4 ensure that $M$ begins with $s$, while ends with $t$.
    The check at Line 7 keeps that $M$ contains no duplicate vertices. Based on the construction method of $\mathcal{I}$, Line 6 guarantees
    that $L(M) \leqslant k$ and there is an edge in $E(G)$ between any two successive vertices in $M$. So $M$ reported by Algorithm \ref{algo:dfs_on_index} is a
    path $P \in \mathcal{P}(s, t, k, G)$. Additionally, Algorithm \ref{algo:dfs_on_index} does not report duplicate results because $\mathcal{I}_t(v, b)$ returns a set of vertices.
    
    Next, we show that given any path $P \in \mathcal{P}(s, t, k, G)$, Algorithm \ref{algo:dfs_on_index} can find it. We prove this by induction. Without loss of generality, suppose that
    $P = (v_0 = s, v_1,..., v_{j} = t)$ where $1 \leqslant j \leqslant k$.
    Initially, $M = (v_0)$ holds (Line 1). Assume that $M = (v_0, v_1,...,v_i)$ where $1 \leqslant i \leqslant j - 1$ is constructed by the algorithm. We prove that the algorithm
    can generate $M' = (v_0, v_1,...v_i, v_{i + 1})$ from $M$. As $v_{i + 1}$ appears in $P$, $S(v_{i + 1}, t | G - \{s\}) \leqslant k - i - 1$.
    Based on the construction method of $\mathcal{I}$, $v_{i + 1}$ must belong to $\mathcal{I}_t(v_i, k - i - 1)$. Therefore, $M'$ can be generated from $M$ by the
    for loop (Lines 6-7). Thus, the algorithm can find $P$, and the proposition is proved.
\end{proof}

Algorithm \ref{algo:join_on_index} first evaluates $Q[0:i^*]$ and $Q[i^*:k]$, respectively. Then, it joins them to find final results.
The analysis in Section \ref{sec:join_based_model} guarantees the correctness of Algorithm \ref{algo:join_on_index}.

\begin{proposition} \label{prop:join_path}
    Algorithm \ref{algo:join_on_index} finds all $k$ hop-constrained paths $\mathcal{P}(s, t, k, G)$ from $s$ to $t$ in $G$.
\end{proposition}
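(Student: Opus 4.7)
The plan is to reduce the correctness of Algorithm \ref{algo:join_on_index} to the already-established join-based model (Theorem \ref{theorem:correctness}) together with the correctness of the index-based DFS (Proposition \ref{prop:search_correctness}). The key observation is that Algorithm \ref{algo:join_on_index} is implementing the decomposition $Q = Q[0:i^*] \Join Q[i^*:k]$: $R_a$ should equal the evaluation of $Q[0:i^*]$ on the index, $R_b$ should equal the evaluation of $Q[i^*:k]$, and the final hash join plus the path validity test on Line~8 should yield exactly $\mathcal{P}(s, t, k, G)$.

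First I would establish two lemmas about the \emph{Search} procedure invoked in Lines~2 and~5. Adapting the argument of Proposition \ref{prop:search_correctness}, I would show (i) every $M$ added to $R$ satisfies $|M|=l$, begins with the seed vertex, and is a walk in $G$ whose successive vertices respect the distance-to-$t$ budget used to build $\mathcal{I}$; and (ii) conversely, every walk meeting these constraints is enumerated. Applied to Line~2, this gives $R_a = \{(s, v_1, \dots, v_{i^*}) : \text{each prefix respects the hop budget to } t\}$, which is exactly the evaluation of $Q[0:i^*]$ on the index. Applied to Line~5, after Line~3 collects every vertex $v$ appearing in position $i^*$ of some tuple in $R_a$, $R_b$ becomes the evaluation of $Q[i^*:k]$ restricted to these seed vertices. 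Since a tuple from $Q[i^*:k]$ contributes to the join only if its first coordinate appears as the last coordinate of some tuple in $R_a$, restricting the seeds to $C$ does not lose any joinable tuple.

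Next I would combine the two pieces. The hash join on the shared attribute (the vertex at position $i^*$) yields $R = R_a \Join_{HJ} R_b$, which equals $Q[0:i^*] \Join Q[i^*:k] = Q$. By Theorem \ref{theorem:correctness}, filtering $R$ for tuples with no repeated vertex (apart from the trailing $t$'s allowed by the padding $(t,t)$) produces exactly $\mathcal{P}(s, t, k, G)$; this is what Line~8 does when it accepts $r$ only if it is a valid $k$ hop-constrained path from $s$ to $t$. Distinctness of outputs follows because $R_a$ and $R_b$ are sets (the \emph{Search} procedure never produces duplicates) and each joined tuple reconstructs a unique walk.

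The main subtlety, and the part I would write most carefully, is the boundary cases where $i^* \in \{0, k\}$ and where the walk reaches $t$ before consuming all $k$ hops. The padding convention $(t,t) \in R_i$ for $1 < i \leqslant k$ from Section~\ref{sec:join_based_model} is what makes shorter paths survive the join, and I would need to verify that the index lookups $\mathcal{I}_t(v, k - i - L(M) - 1)$ used inside \emph{Search} correctly emit $t$ as a neighbor of $t$ (via the special entry added in Algorithm \ref{algo:build_index}). Once this bookkeeping matches the relation construction in Section~\ref{sec:join_based_model}, the rest of the argument is mechanical.
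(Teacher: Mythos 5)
Your proposal is correct and follows essentially the same route as the paper, which disposes of this proposition in one line by observing that Algorithm~\ref{algo:join_on_index} evaluates $Q[0:i^*]$ and $Q[i^*:k]$ and joins them, so that the analysis of the join-based model (Theorem~\ref{theorem:correctness}) guarantees correctness. Your write-up simply fills in the details the paper leaves implicit---the identification of $R_a$ and $R_b$ with the two sub-queries, the harmlessness of restricting seeds to $C$, and the role of the $(t,t)$ padding---all of which are consistent with the paper's argument.
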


Next, we prove Proposition \ref{prop:walk_correctness}, which shows that (1) Algorithm \ref{algo:dfs_on_index} without the check at Line 7 finds all $k$
hop-constrained walk $\mathcal{W}(s, t, k, G)$ from $s$ to $t$ in $G$; and (2) Given $M \in \widetilde{\mathcal{M}}_i$ where $0 \leqslant i \leqslant k$, $M$ must appear in a walk $W \in \mathcal{W}(s, t, k, G)$.

\begin{proof}
We can prove that Algorithm \ref{algo:dfs_on_index} after relaxation finds $\mathcal{W}(s, t, k, G)$ with the same proof method of Proposition \ref{prop:search_correctness}. Therefore,
    we omit a detailed proof for brevity, and focus on the second part of the proposition. We prove this by contradiction.
    
    Assume that there exist $M \in \widetilde{\mathcal{M}}_i$ that does not appear in any walk in $\mathcal{W}(s, t, k, G)$. Let $v$ and $v'$ represent the last two vertices in $M$. Then,
    $v' \neq t$ and $v' \in \mathcal{I}_t(v, k - L(M - \{v'\}) - 1)$, which means $S(v', t | G - \{s\}) \leqslant k - L(M - \{v'\}) - 1$. Let $P'$ denote the shortest path from $v'$
    to $t$ in $G - \{s\}$. We construct a walk $W$ by concatenating $M$ and $P'$. $L(W) = L(M) + S(v', t | G - \{s\}) \leqslant k$. Then, $W$ belongs to $\mathcal{W}(s, t, k, G)$,
    which contradicts the assumption. Thus, the proposition is proved.
\end{proof}

We can prove the correctness of Proposition \ref{prop:join_walk} with the same method as that in Proposition \ref{prop:walk_correctness}. For brevity, we omit the details.

\SUN{\section{Comparison with BC-DFS/BC-JOIN \cite{peng2019towards}}}

\SUN{\textbf{Comparison with BC-DFS.} The barrier optimization technique in BC-DFS \cite{peng2019towards} extracts a subgraph $G'$ of $G$ based on the rule:
if $v$ belongs to a result, then $S(s, v|G) + S(v, t|G) <= k$. Then, it enumerates all results on $G'$ with Algorithm \ref{algo:basic_method}.
In contrast, we construct a light-weight index based on Proposition \ref{prop:prune_vertex}, which puts a constraint on which vertices appear at
position $i$ of a result, and finds all results with the assistance of the index, which can efficiently get $v'$ in $N(v)$ such that $S(v', t| G - \{s\}) <= b$.
The index accelerates the enumeration because the index reduces the number of edges accessed at each step and eliminates the distance check (see the difference
between Algorithms \ref{algo:basic_method} and \ref{algo:dfs_on_index}). Benefiting from the index, Algorithm \ref{algo:dfs_on_index}
achieves a good time complexity ($O(k \times \delta_W)$). The following is an example demonstrating the difference.} 

\SUN{
\begin{example}
    Given $G$ and $q$ in Figure \ref{fig:example_graphs}, $G'$
    contains all vertices except $v_7$ based on the barrier optimization technique \cite{peng2019towards}.
    Suppose that $M = (s, v_0, v_1)$. Algorithm \ref{algo:basic_method} loops over $v_2$ and $v_3$ and checks
    their distances to $t$, whereas Algorithm \ref{algo:dfs_on_index} directly gets $v_2$ from the index and continues the search.
\end{example}
}

Both Algorithms \ref{algo:basic_method} and \ref{algo:dfs_on_index} adopt the backtracking search to enumeration all results. The exploration process can be
viewed as conducting a depth-first search in a search tree. Suppose that the average cost of expanding a node in the search tree is $\alpha$ and the search tree has $\beta$ nodes. The
cost of exploring the search tree is $T = \alpha \times \beta$. Existing algorithms \cite{peng2019towards,grossi2018efficient,rizzi2014efficiently} directly traverse on
$G$ to enumerate results. Given the last vertex $v$ of a partial result $M$, they visit all neighbors $v'$ of $v$ in $G$, check whether $S(v', t| G - M) \leqslant k - L(M) - 1$ and
dynamically update $S(v, t|G - M)$ at each step. In contrast, our algorithm explores the index constructed in a preprocessing step to find results. At each step, we only consider the neighbors $v'$ 
of $v$ such that $S(v', t|G - \{s\}) \leqslant k - L(M) - 1$ with the assistance of the index, but eliminate the explicit distance verification and complex filtering techniques. In other words,
the difference between existing algorithms and ours is the trade-off between $\alpha$ and $\beta$.
Exiting algorithms optimize the search by decreasing $\beta$ at the cost of increasing $\alpha$, while we make the \emph{Search} procedure simple and efficient to reduce $\alpha$, but can
generate more invalid partial results. However, it is challenging to make a direct comparison of our algorithm and existing algorithms in terms of
the time complexity ($O(k \times \delta_W)$ versus. $O(k \times |E(G)| \times \delta_P)$) because the value of $\frac{\delta_P}{\delta_W}$ depends on each query instance. Instead,
we conduct extensive experiments with a variety of real-world graphs to compare them (see Figure \ref{fig:overall_detailed_metrics}). Our experiment
results show that our method significantly outperforms BC-DFS.

\SUN{\textbf{Comparison with BC-JOIN.} Join is a common methodology for graph queries \cite{aberger2017emptyheaded,mhedhbi2019optimizing}.
The key factors leading to the performance differences are strategies reducing partial results, e.g., invalid elements filtering and
join order optimization. BC-JOIN, which is a join-oriented algorithm proposed in \cite{peng2019towards}, first computes
the set $V$ of vertices appearing in the middle position ($\lceil \frac{k}{2} \rceil$) of paths in $\mathcal{P}(s, t, k, G)$. Then, it computes the
paths from $s$ to $v \in V$, and that from $v \in V$ to $t$ with Algorithm \ref{algo:basic_method}. Finally, it joins the intermediate results to
find the final results. In contrast, our algorithm first adopts a light-weight index to prune invalid candidates and then performs the join
on the basis of the index. Moreover, we design a cost-based query optimizer with the preliminary and full-fledged estimators to process queries
with variant execution time.}

\SUN{\section{Extension of Our Algorithms}}

\SUN{Our method on the HcPE problem can be easily extended to support variant constraints to capture the complexities of real-world applications.}

\SUN{\textbf{Constraints on Predicates.} The first kind of constraints is based on the predicate $f_p(e)$
on attributes such as weights and labels of edges (or vertices) where $e$ is an edge and $f_p(.)$ is a
user-defined boolean function. For example, if we focus on large flow transactions by recently created companies \cite{force2013fatf},
then we can define a predicate based on edge properties. In addition to the length constraint, the predicate requires
that each edge $e$ in a result path satisfies the conditions in $f_p(.)$, i.e., the return value of $f_p(e)$ is true.}

\SUN{Given a query $q$ on a graph $G$ and a predicate $f_p(.)$, we first generate a subgraph $G'$ of $G$ by
applying the predicate on $G$ to filter invalid edges. Then, we evaluate $q$ on $G'$ with PathEnum to
find the results. The filtering phase guarantees that each edge in a result path meets the constraints
defined by $f_p(.)$. In practice, we do not need to materialize the subgraph $G'$. Instead, we can
conduct the filtering when computing the distance between vertices and $s,t$ with BFS in
the index building phase of PathEnum. Then, the edges in the index meet the requirement in the predicate.
}

\SUN{The constraints on predicates is the same as Definition 2 in Section 2.1 of \cite{qiu2018real}, which requires
the paths to satisfy both the length and attribute constraints. Therefore, PathEnum can support the
scenario in the second motivation example in our paper, which comes from \cite{qiu2018real}.}

\SUN{\textbf{Constraints on Accumulative Values.} Let $\bigoplus$ denote a binary operation having commutative law
and associative law. $\alpha(e)$ represents a value associating with an edge $e$, for example, the edge weight.
Given a result path $P$, the accumulative value $\beta$ of $P$ is equal to $\bigoplus_{e \in P} \alpha(e)$.
The constraints on accumulative values require that the accumulative value $\beta$ of $P$
satisfies a user-defined boolean function $f_a(.)$, i.e., $f_a(\beta)$ returns true in addition
to the length constraint. For example, if we require that the sum of the transaction risky factors along the path is above a threshold \cite{force2013fatf},
then we can define $\bigoplus$ as a plus operation.}

\begin{algorithm}[h]
    \footnotesize
	\caption{Depth-First Search on Index with Constraints on Accumulative Values}
	\label{algo:constraints_on_accumulative_values}
	\SetKwFunction{Search}{Search}
	\SetKwProg{proc}{Procedure}{}{}
	 \KwIn{two distinct vertices $s, t$, hop constraint $k$, index $\mathcal{I}$\;}
	 \KwOut{all $k$ hop-constrained paths from $s$ to $t$ that satisfy constraints on accumulative values\;}
	 $M \leftarrow (s)$\;
	 $\beta \leftarrow $ an initial value\;
	 \Search{$t, k, M, \mathcal{I}, \beta$}\;
	 \proc{\Search{$t, k, M, \mathcal{I}, \beta$}}{
	      $v \leftarrow$ the last vertex in $M$\;
	      \lIf{$v = t$ and $f_{a}(\beta)$ is true}{$emit(M)$, \KwRet}
	      \ForEach{$v' \in \mathcal{I}_t(v, k - L(M) - 1)$}{
	            \lIf{$v' \notin M$}{\Search{$t, k, M \cup \{v'\}, \mathcal{I}, \beta \bigoplus \alpha(e(v,v'))$}}
	      }
	 }
\end{algorithm}

\SUN{Algorithm \ref{algo:constraints_on_accumulative_values} extends the depth-first search on the index
to support constraints on accumulative values. Line 2 assigns an initial value based on
the binary operation $\bigoplus$. For example, if $\bigoplus$ is the sum operation, then we set $\beta$
to 0, while setting it to 1 if $\bigoplus$ is the multiply operation. If the accumulative value of $P$
satisfies the constraint $f_a(.)$, then Line 6 outputs the path. When adding a new vertex to the partial
result $M$, Line 8 updates the accumulative value with the binary operation. In some cases, we can check
whether the accumulative value cannot lead to a valid solution and can be pruned when extending the partial results to reduce search space; for example,
if the constraint is that the accumulative edge weight must be below a threshold, and all edge weights are nonnegative.
However, if edge weights are negative, we cannot add the check when extending $M$ because the accumulative value is not monotonic.}

\SUN{We can extend the join on index with the same method as Algorithm \ref{algo:constraints_on_accumulative_values}
because the binary operation satisfies commutative law and associative law and the accumulative value is
independent of the sequence of performing the binary operations. We omit the detail for brevity.}

\SUN{\textbf{Constraints on a Sequence of Actions.} $l(e)$ is the edge label, which represents an action.
Given a sequence of actions, the constraint requires that the edge label sequence along a result
path satisfies the given sequence. For example, if we require that the flow of transactions pass through at least
two high-risk counties \cite{force2013fatf}, then we can define a sequence schema.}

\SUN{We can model the given sequence of actions as an automata where nodes are states and edges
define the transition relationship among states based on actions. The automata can be represented
by a matrix $A$. Given a state $a$ and an action (i.e., label) $l(e)$, $A[a][l(e)]$ returns
the next state. If the action is invalid for the given state, then $A[a][l(e)]$ returns \emph{null}.
Algorithm \ref{algo:constraints_on_sequence_actions} presents the depth-first search on index
with constraints on sequences of actions, which is based on the automata $A$. Line 2 sets
the initial state as the start state in the automata $A$. When trying to extend $M$ with a vertex,
Line 8 retrieves the next state $a'$ based on $A$. If $a'$ is \emph{null}, then skip the vertex.
Otherwise, move to the state $a'$ and continue the search. Line 6 outputs the result path if
the vertex is $t$ and the transition reaches an end state in $A$.}

\begin{algorithm}[h]
    \footnotesize
	\caption{Depth-First Search on Index with Constraints on Sequences of actions}
	\label{algo:constraints_on_sequence_actions}
	\SetKwFunction{Search}{Search}
	\SetKwProg{proc}{Procedure}{}{}
	 \KwIn{two distinct vertices $s, t$, hop constraint $k$, index $\mathcal{I}$\;}
	 \KwOut{all $k$ hop-constrained paths from $s$ to $t$ that satisfy constraints on sequences of Actions\;}
	 $M \leftarrow (s)$\;
	 $a \leftarrow $ the start state in the automata $A$\;
	 \Search{$t, k, M, \mathcal{I}, a$}\;
	 \proc{\Search{$t, k, M, \mathcal{I}, a$}}{
	      $v \leftarrow$ the last vertex in $M$\;
	      \lIf{$v = t$ and $a$ is an end state}{$emit(M)$, \KwRet}
	      \ForEach{$v' \in \mathcal{I}_t(v, k - L(M) - 1)$}{
	            $a' \leftarrow A[a][l(e(v, v'))]$\;
	            \If{$v' \notin M$ and $a' \neq null$}{
	                \Search{$t, k, M \cup \{v'\}, \mathcal{I}, a'$}\;
	            }
	      }
	 }
\end{algorithm}

\SUN{As the join on index can evaluate the query with different orders, we use the automata to check whether
a path $P$ can meet the constraint after $P$ is generated by the join on index method. Therefore, the
depth-first search method can terminate the invalid search path at an earlier stage than the join method.}

\SUN{In summary, our method on the HcPE problem can be easily extended to support variant constraints. Adding the constraints can accelerate the query
because the constraints can reduce the size of the search space.}

\section{Supplement Experiment Results}

\textbf{Varying Hop Constraint $k$.} Figure \ref{fig:vary_k_query_time} shows the experiment results on query time with $k$
varied. When $k = 8$, BC-JOIN runs out of memory on \emph{ep} due to the maintenance of a large amount of intermediate results.
Thus, we omit the results of BC-JOIN on this case. As shown in the figures, PathEnum significantly outperforms BC-DFS and BC-JOIN.
\SUN{The preliminary estimation makes the overhead of PathEnum very small. Its time complexity is $O(k^2)$,
and it takes less than 0.01 ms in experiments. Because there are a small number of results on \emph{gg} with k varied from 3 to 6,
PathEnum directly invokes IDX-DFS after the preliminary estimation, the cost of which is negligible. Thus, PathEnum is close to IDX-DFS
despite that IDX-DFS dominates IDX-JOIN. If the join order optimization is executed, the preliminary estimation ensures that the optimization
time accounts for a small portion of query time. The benefit of the optimization can offset the overhead, and PathEnum can beat both IDX-DFS and 
IDX-JOIN.}

\begin{figure}[ht]\small
    \setlength{\abovecaptionskip}{0pt}
    \setlength{\belowcaptionskip}{0pt}
    \captionsetup[subfigure]{aboveskip=0pt,belowskip=0pt}
    \centering
    \begin{subfigure}[t]{0.23\textwidth}
        \centering
        \includegraphics[scale=0.23]{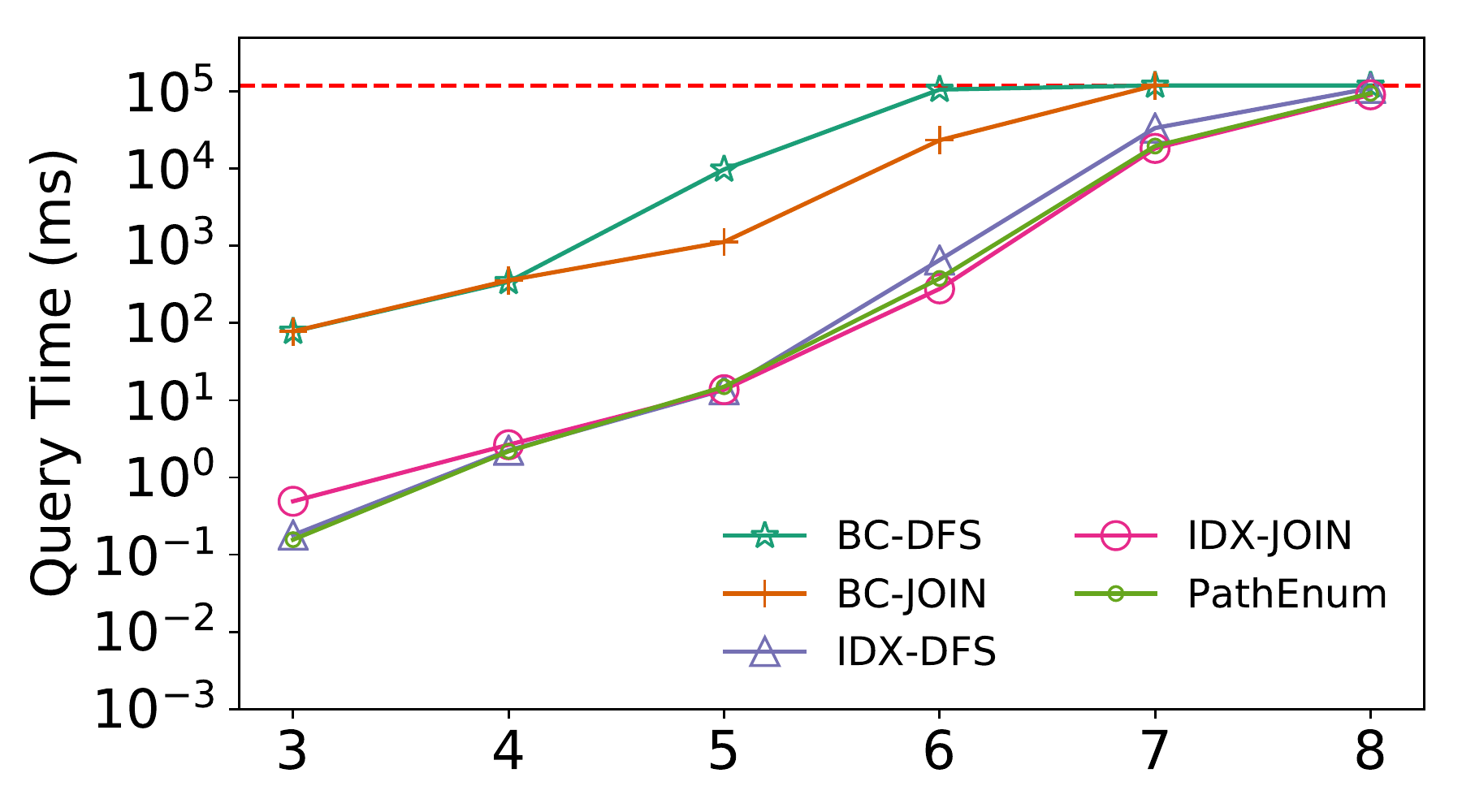}
        \caption{\emph{ep}.}
        \label{fig:vary_k_query_time_soc}
    \end{subfigure}
    \begin{subfigure}[t]{0.23\textwidth}
        \centering
        \includegraphics[scale=0.23]{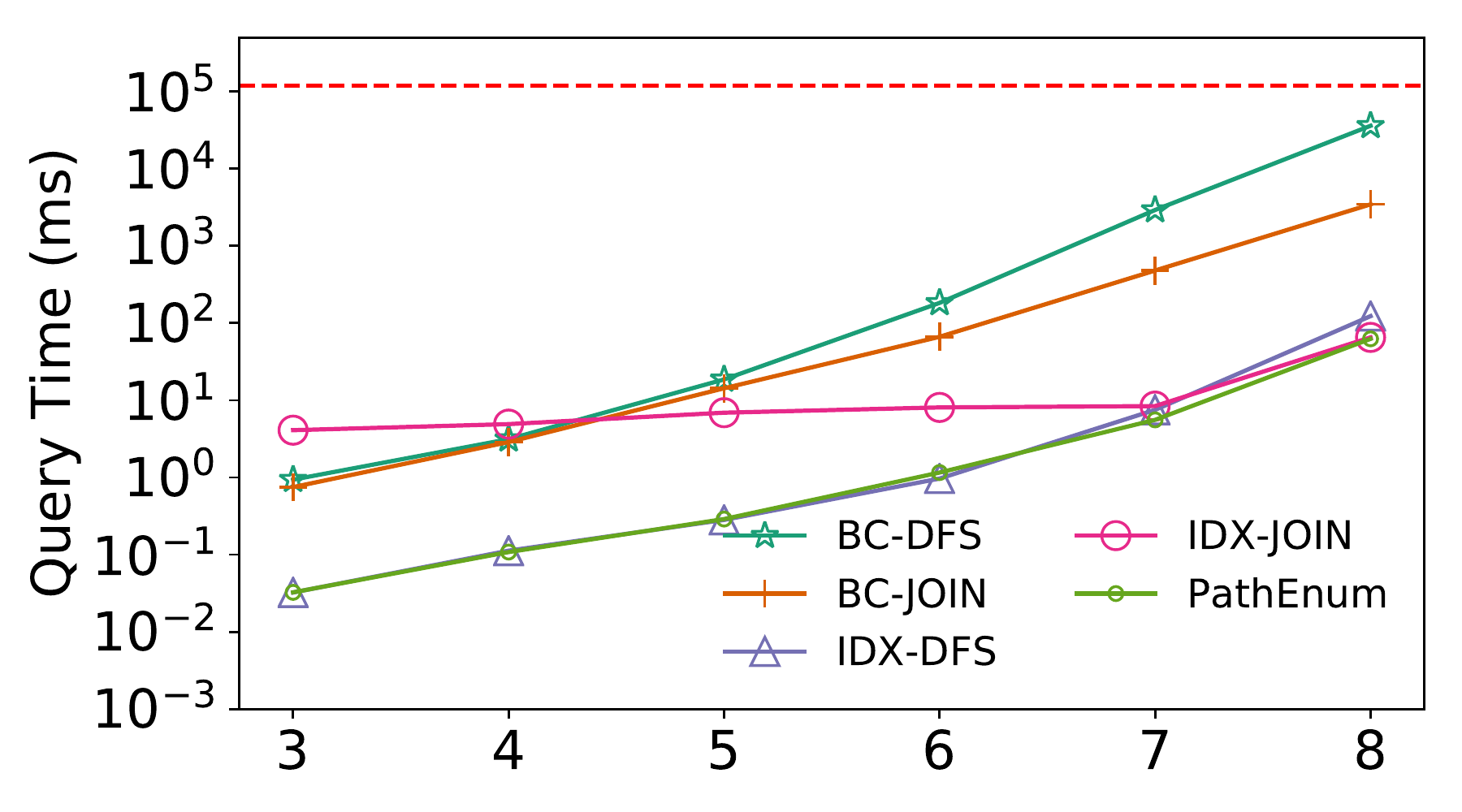}
        \caption{\emph{gg}.}
        \label{fig:vary_k_query_time_google}
    \end{subfigure}
    \caption{Comparison of query time with $k$ varied.}
    \label{fig:vary_k_query_time}
\end{figure}

Figure \ref{fig:vary_k_throughput} shows the experiment results
on throughput with $k$ varied.  Although PathEnum runs out of time on most queries against \emph{ep} when $k=8$,
it has much higher throughput than BC-DFS and BC-JOIN. The throughput of our three algorithms
increases on \emph{ep} with $k$ varied from 3 to 6, but keeps steady with $k$ increasing
from $6$ to $8$ because the time spent on building the index and optimizing join orders is neglected compared with
the time spent on enumeration when $k$ is large. Moreover, the results imply that the value of $k$
has little impact on the enumeration speed of our algorithms. In contrast, the throughput of BC-DFS decreases with $k$ varied
from 5 to 8 because the increasing of $k$ results in more overhead of dynamically updating the distance of vertices to $t$.

\begin{figure}[ht]\small
    \setlength{\abovecaptionskip}{0pt}
    \setlength{\belowcaptionskip}{0pt}
    \captionsetup[subfigure]{aboveskip=0pt,belowskip=0pt}
    \centering
    \begin{subfigure}[t]{0.23\textwidth}
        \centering
        \includegraphics[scale=0.23]{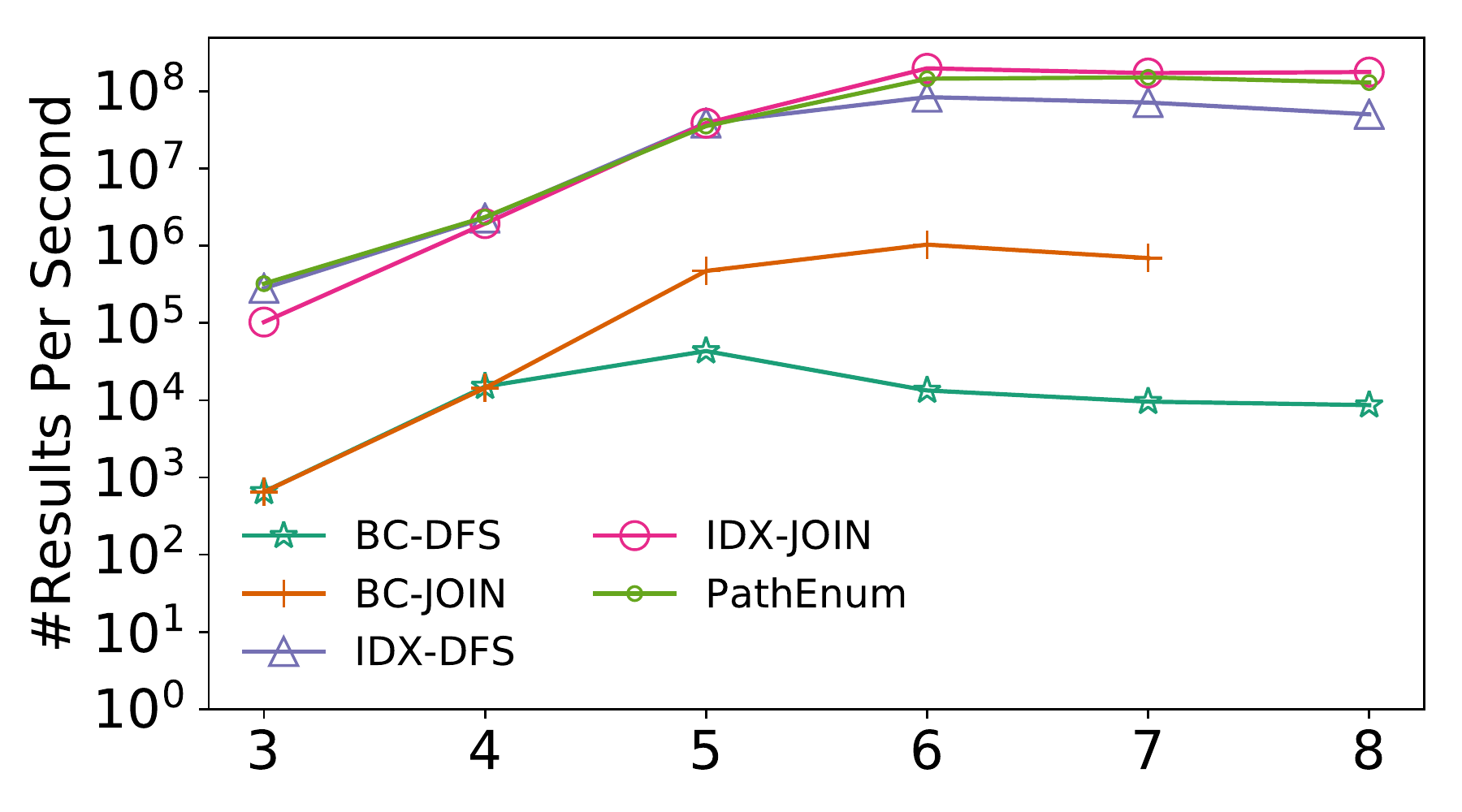}
        \caption{\emph{ep}.}
        \label{fig:vary_k_throughput_soc}
    \end{subfigure}
    \begin{subfigure}[t]{0.23\textwidth}
        \centering
        \includegraphics[scale=0.23]{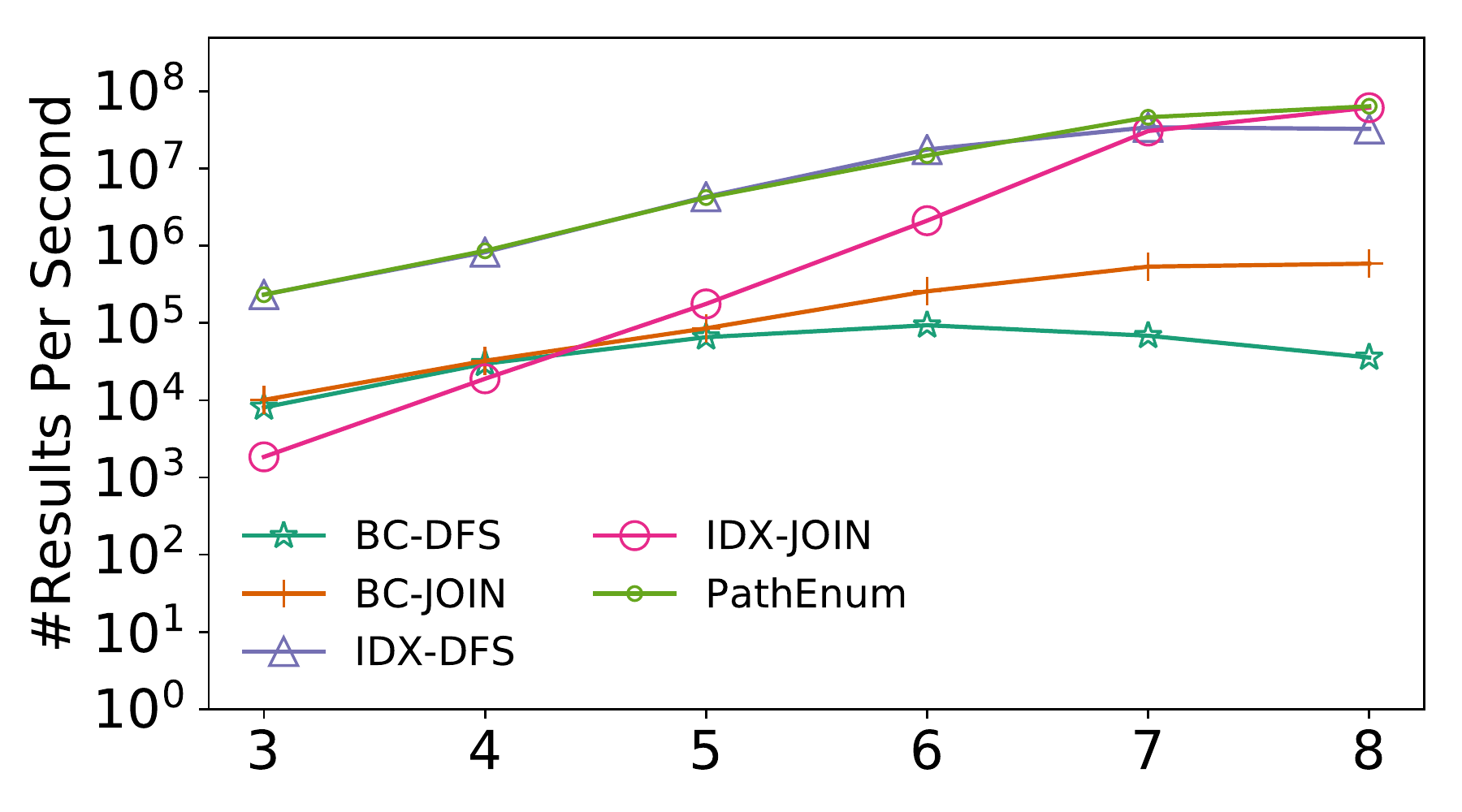}
        \caption{\emph{gg}.}
        \label{fig:vary_k_throughput_google}
    \end{subfigure}
    \caption{Comparison of throughput with $k$ varied.}
    \label{fig:vary_k_throughput}
\end{figure}

Figure \ref{fig:vary_k_response_time} presents the response time of BC-DFS and IDX-DFS with $k$ varied. IDX-DFS outperforms
BC-DFS by up to two orders of magnitude. Additionally, the response time of IDX-DFS slightly increases with $k$ varied from
3 to 8, and the value is less than 20 ms. The results show that IDX-DFS can be applied to the online scenarios having
real-time constraint \cite{qiu2018real}.

\begin{figure}[ht]\small
    \setlength{\abovecaptionskip}{0pt}
    \setlength{\belowcaptionskip}{0pt}
    \captionsetup[subfigure]{aboveskip=0pt,belowskip=0pt}
    \centering
    \begin{subfigure}[t]{0.23\textwidth}
        \centering
        \includegraphics[scale=0.23]{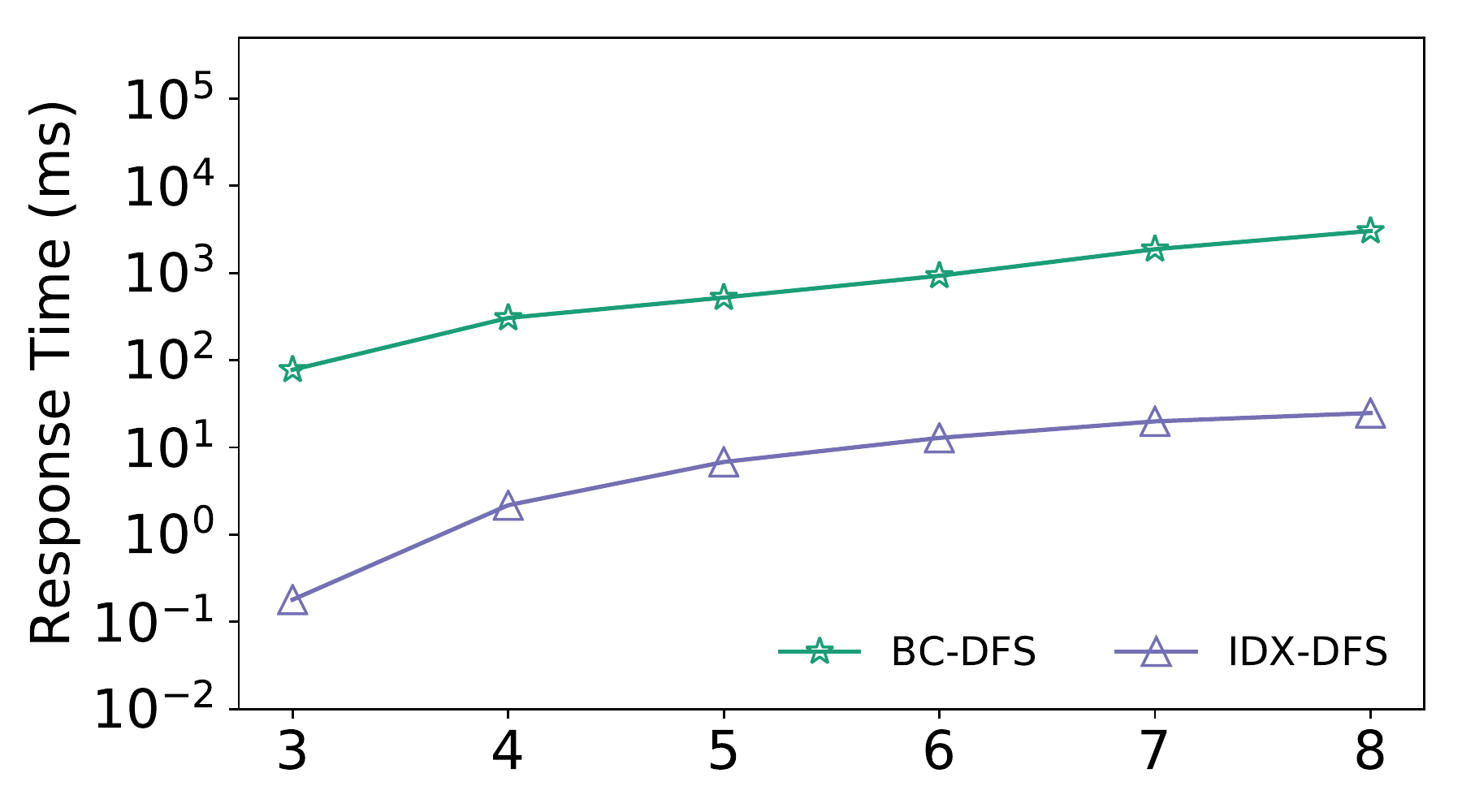}
        \caption{\emph{ep}.}
        \label{fig:vary_k_response_time_soc}
    \end{subfigure}
    \begin{subfigure}[t]{0.23\textwidth}
        \centering
        \includegraphics[scale=0.23]{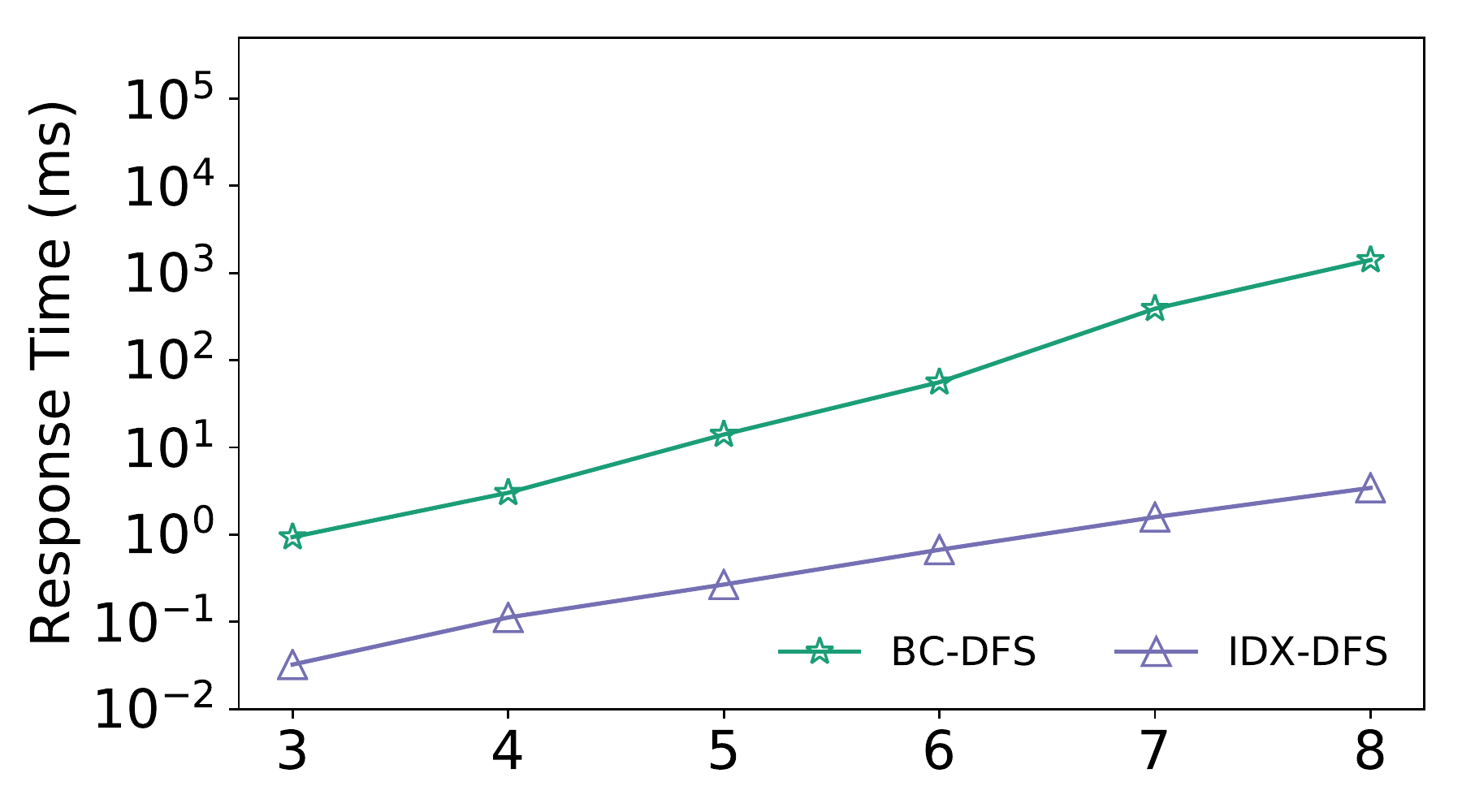}
        \caption{\emph{gg}.}
        \label{fig:vary_k_response_time_google}
    \end{subfigure}
    \caption{Comparison of response time with $k$ varied.}
    \label{fig:vary_k_response_time}
\end{figure}

\textbf{Individual Query Performance.} We demonstrate the cumulative distribution function of the query time in
Figure \ref{fig:overall_cdf} to examine the performance of completing individual query.
The query time of different queries varies greatly. IDX-JOIN performs better than IDX-DFS on \emph{ep}, but
worse on \emph{gg}. In contrast, PathEnum performs well on both of the two graphs. Nevertheless, our algorithms significantly
outperform BC-DFS and BC-JOIN. For example, BC-DFS and BC-JOIN run out of time on more than 80\% and 10\% queries on
\emph{ep}, respectively, while our algorithms complete all queries within around 10 seconds.

\begin{figure}[ht]\small
    \setlength{\abovecaptionskip}{0pt}
    \setlength{\belowcaptionskip}{0pt}
    \captionsetup[subfigure]{aboveskip=0pt,belowskip=0pt}
    \centering
    \begin{subfigure}[t]{0.23\textwidth}
        \centering
        \includegraphics[scale=0.23]{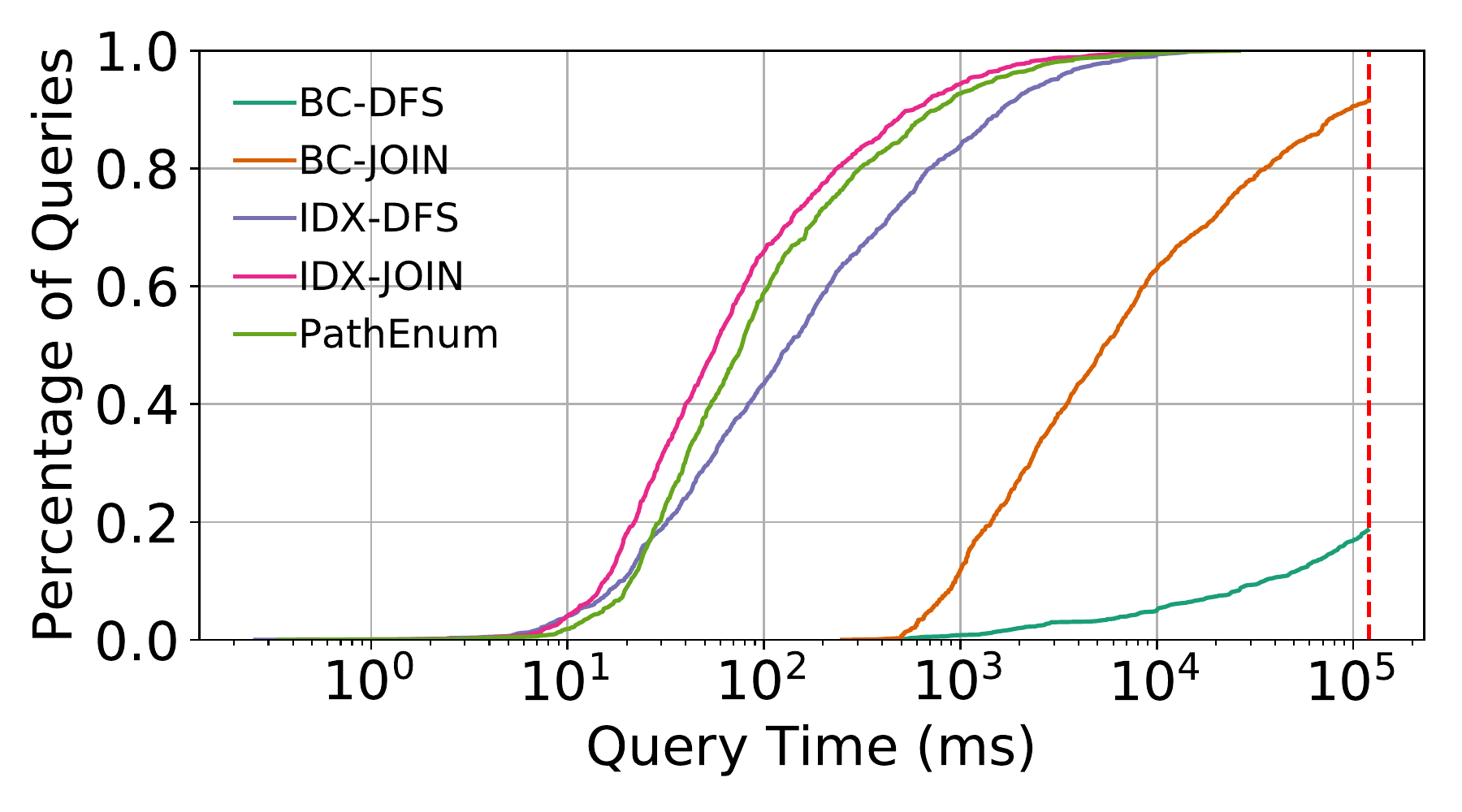}
        \caption{\emph{ep}.}
        \label{fig:cdf_socepinsion}
    \end{subfigure}
    \begin{subfigure}[t]{0.23\textwidth}
        \centering
        \includegraphics[scale=0.23]{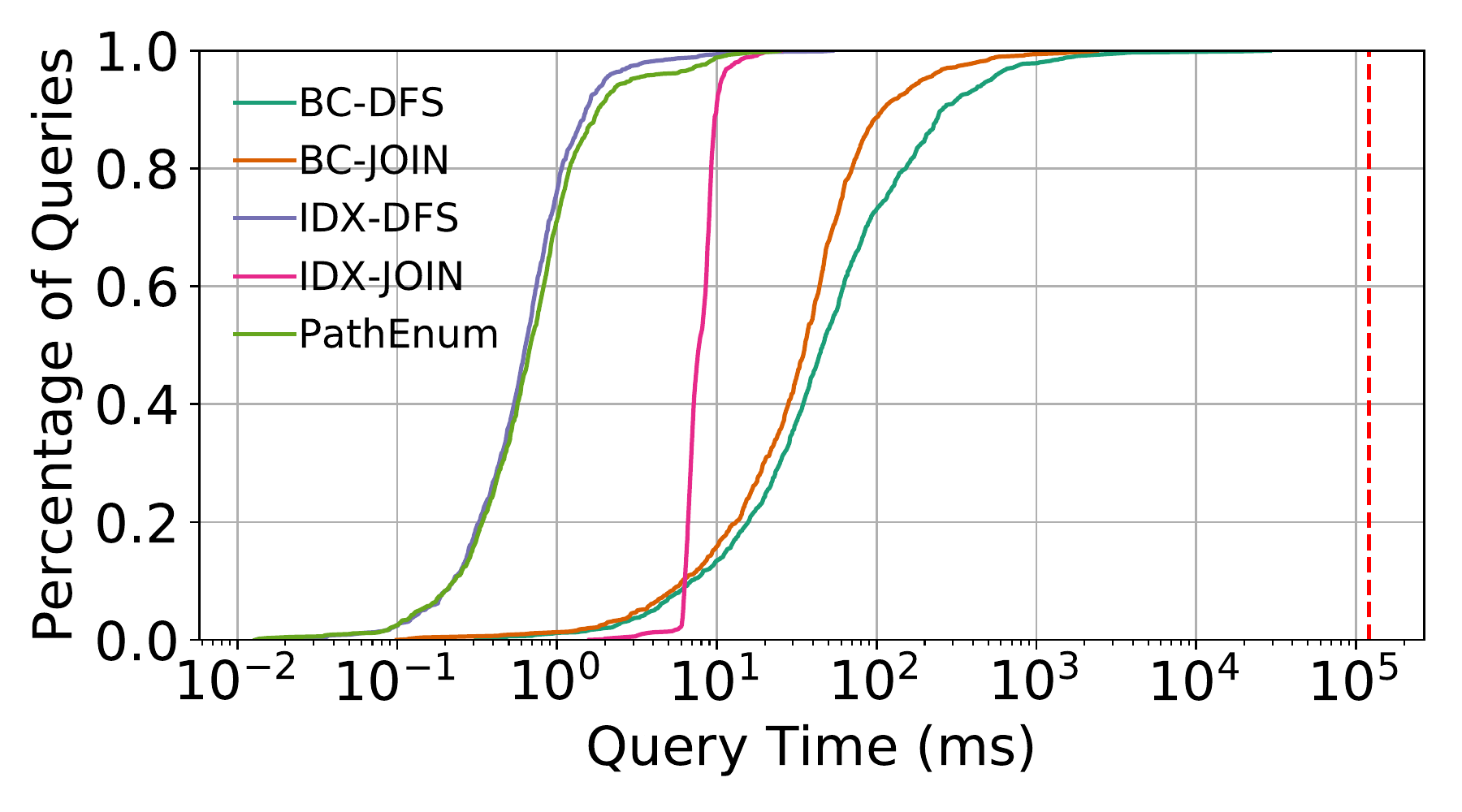}
        \caption{\emph{gg}.}
        \label{fig:cdf_webgoogle}
    \end{subfigure}
    \caption{Cumulative distribution of the query time.}
    \label{fig:overall_cdf}
\end{figure}

\textbf{Time Efficiency of Individual Technique.} Figure \ref{fig:line_breakdown} illustrates the execution time of each individual
technique with $k$ varied. "\emph{Index construction}" denotes the time spent on building the index (Algorithm
\ref{algo:build_index}). "\emph{Optimization}" represents the time spent on generating join orders
(Algorithm \ref{algo:generate_join_order}). "\emph{DFS}" and "\emph{JOIN}" denotes the enumeration time of
Algorithms \ref{algo:dfs_on_index} and \ref{algo:join_on_index}, respectively. Therefore,
the query time of IDX-DFS is the sum of \emph{ Index construction} and \emph{DFS}, while
the query time of IDX-JOIN is the sum of \emph{ Index construction}, \emph{Optimization} and \emph{JOIN}. Additionally,
we report the time of computing the distance of each vertex to $s,t$, which is denoted by \emph{BFS}. \emph{BFS} is included
in \emph{Index construction}. We omit the execution time of the preliminary cardinality estimator because its value is negligible (less than 0.01 ms).

As shown in the figure, \emph{BFS} generally dominates the execution time of Algorithm \ref{algo:build_index}. The cost
of optimizing join orders can be greater than the enumeration for the short running queries. The DFS on the index
runs faster than the join on the index when $k$ is small, but slower when $k$ is large. Nevertheless, the absolute value
of \emph{Index construction} and \emph{Optimization} is very small, which demonstrates the efficiency of our index construction
and query optimization. More importantly, the benefits of those operations is far higher than the overhead, showing significant performance speedup even for short queries over existing approaches (BC-DFS/BC-Join).

\begin{figure}[ht]\small
    \setlength{\abovecaptionskip}{0pt}
    \setlength{\belowcaptionskip}{0pt}
    \captionsetup[subfigure]{aboveskip=0pt,belowskip=0pt}
    \centering
    \begin{subfigure}[t]{0.23\textwidth}
        \centering
        \includegraphics[scale=0.23]{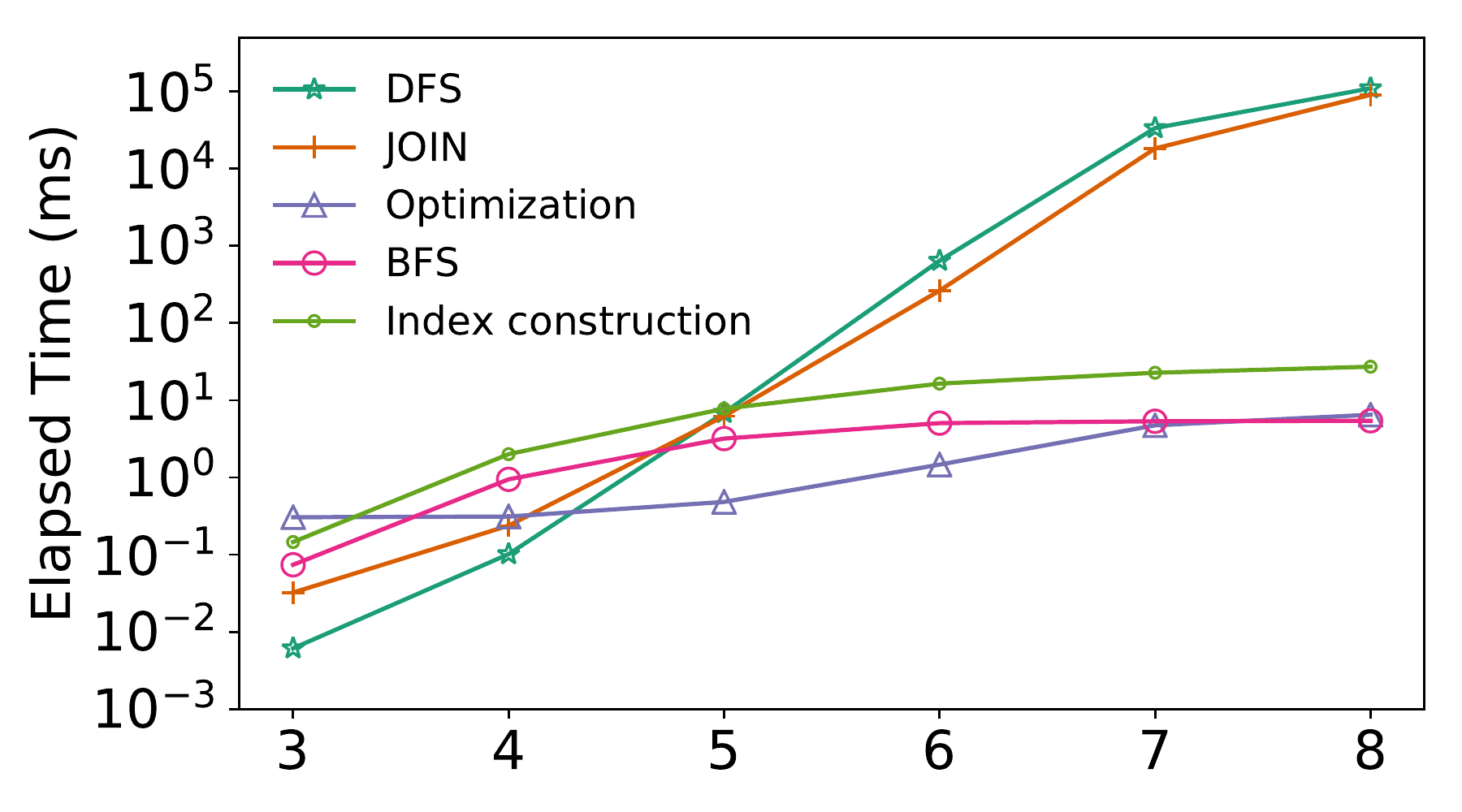}
        \caption{\emph{ep}.}
        \label{fig:line_breakdown_ep}
    \end{subfigure}
    \begin{subfigure}[t]{0.23\textwidth}
        \centering
        \includegraphics[scale=0.23]{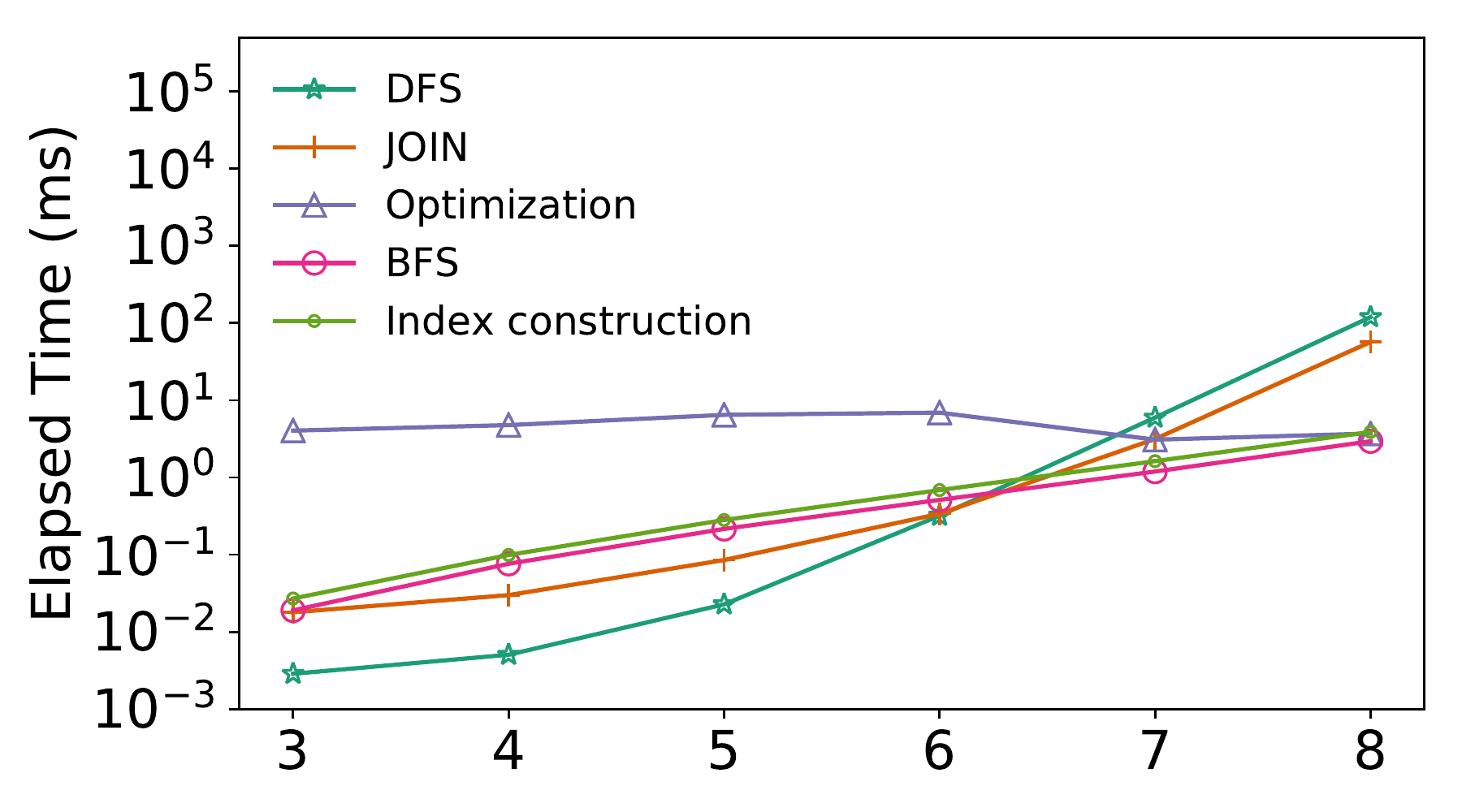}
        \caption{\emph{gg}.}
        \label{fig:line_breakdown_gg}
    \end{subfigure}
    \caption{The execution time of each individual technique with $k$ varied.}
    \label{fig:line_breakdown}
\end{figure}

\textbf{Cardinality Estimation.} We evaluate the preliminary and full-fledged cardinality estimators by comparing the number
of results estimated to the actual value.  
Figure \ref{fig:estimation_vary_k} illustrates the results with $k$ varied.
The gap between our estimation and the actual value widens with $k$ varied from 3 to 8 because it is more challenging to
estimate queries with long paths and the estimation error propagates with the increase of $k$. We omit the result on
\emph{ep} when $k = 8$, because more than 50\% of queries run out of time, and we cannot get their actual number of results of
these queries.

\begin{figure}[ht]\small
    \setlength{\abovecaptionskip}{0pt}
    \setlength{\belowcaptionskip}{0pt}
    \captionsetup[subfigure]{aboveskip=0pt,belowskip=0pt}
    \centering
    \begin{subfigure}[t]{0.23\textwidth}
        \centering
        \includegraphics[scale=0.23]{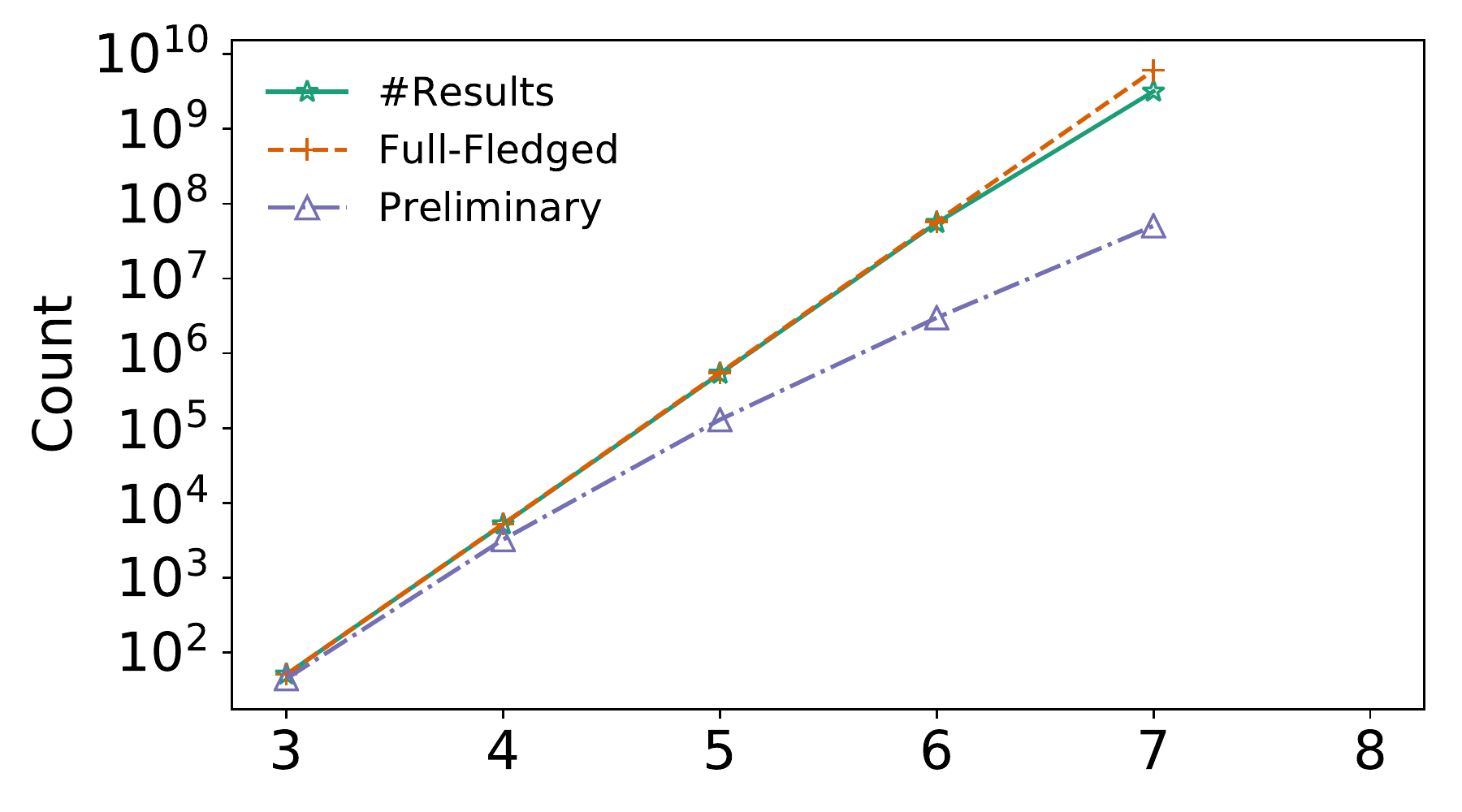}
        \caption{\emph{ep}.}
        \label{fig:estimation_vary_k_ep}
    \end{subfigure}
    \begin{subfigure}[t]{0.23\textwidth}
        \centering
        \includegraphics[scale=0.23]{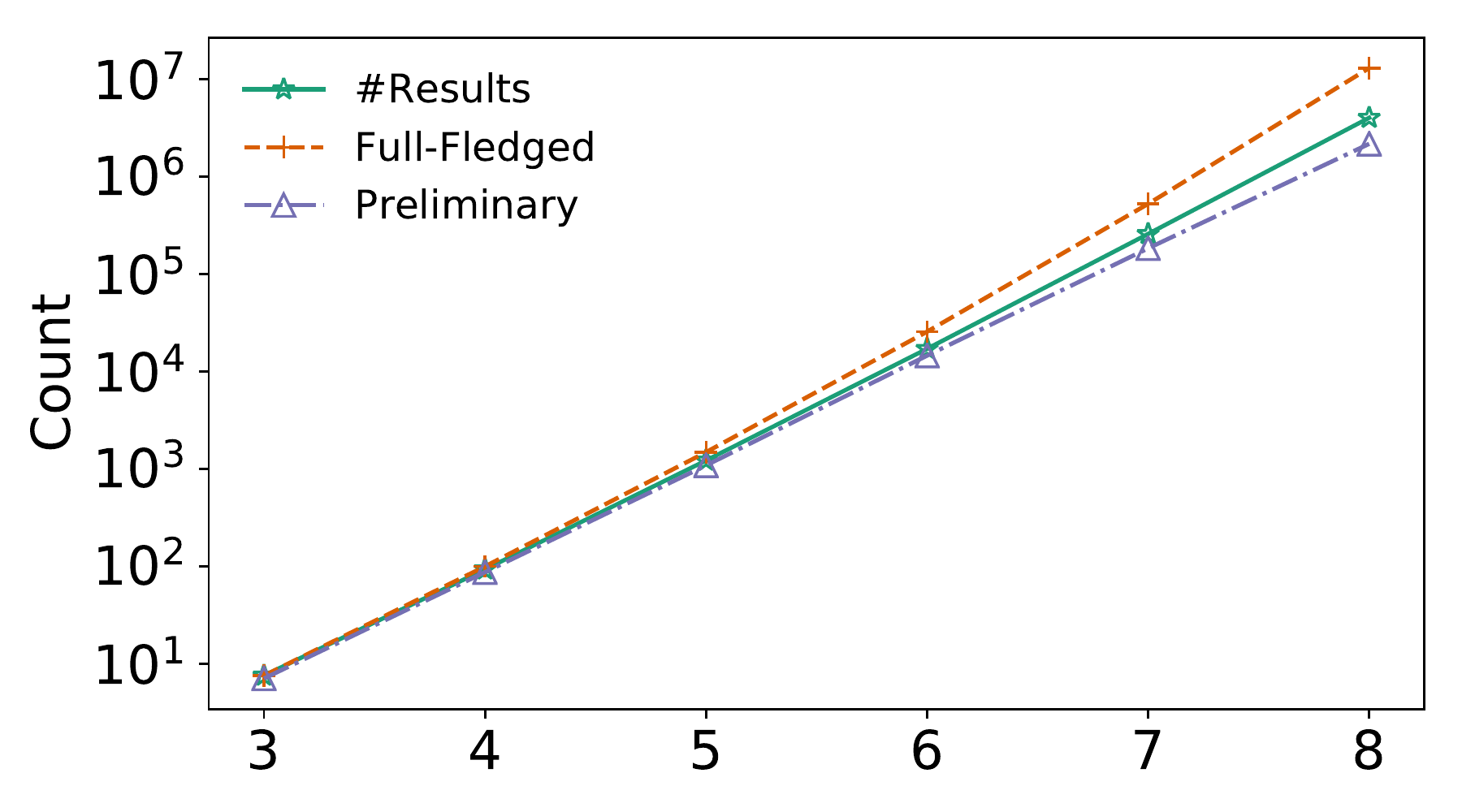}
        \caption{\emph{gg}.}
        \label{fig:estimation_vary_k_gg}
    \end{subfigure}
    \caption{Cardinality estimation with $k$ varied.}
    \label{fig:estimation_vary_k}
\end{figure}



\SUN{\textbf{Summary}. First, the light-weight index can significantly accelerate the enumeration
by reducing the number of edges accessed and eliminating the distance check at each step. Moreover, as the index is query dependent,
which prunes many invalid edges, it is expected to provide more accurate statistics for the query optimizer than that of the original graph.
Second, a cost-based query optimizer is essential for reducing the cost of evaluating HcPE queries. Our query optimizer that
closely works with the light-weight index is effective. Furthermore, the optimizer with cardinality estimation methods
having different time complexities is necessary because the query time of different queries varies greatly even on the same graph.
Third, our solution has a high enumeration speed (i.e., throughput) even on the graphs with billions of edges. As the index
construction is efficient, our solution can directly handle dynamic graphs in general. However, when the graphs
are very large, our method can have a long response time because it builds the index from the scratch.
For example, the index construction takes tens of seconds on \emph{tm} with billions of edges in Figure \ref{fig:scalability_execution_time}.
Fourth, the number of results has a significant impact on HcPE query efficiency. Some queries have a long running time because
they have a huge number of results.}

\end{document}